\newcommand{\diff}{\ensuremath\mathrm{d}}
\newcommand{\Id}{\ensuremath{\mathbb{I}}}
\newcommand*{\Tr}{\operatorname{Tr}}
\newcommand*{\argmin}{\operatornamewithlimits{argmin}}
\newcommand{\norm}[1]{\left\lVert#1\right\rVert}
\providecommand{\myvec}[1]{\ensuremath{\boldsymbol{#1}}}
\providecommand{\XX}{\ensuremath{\myvec{X}}}
\providecommand{\aa}{\ensuremath{\myvec{a}}}
\providecommand{\cc}{\ensuremath{\myvec{c}}}
\providecommand{\ff}{\ensuremath{\myvec{f}}}
\providecommand{\kk}{\ensuremath{\myvec{k}}}
\providecommand{\ll}{\ensuremath{\myvec{l}}}
\providecommand{\nn}{\ensuremath{\myvec{n}}}
\providecommand{\vv}{\ensuremath{\myvec{v}}}
\providecommand{\xx}{\ensuremath{\myvec{x}}}
\providecommand{\zz}{\ensuremath{\myvec{z}}}
\providecommand{\aalpha}{\ensuremath{\myvec{\alpha}}}
\providecommand{\eepsilon}{\ensuremath{\myvec{\epsilon}}}
\providecommand{\eeta}{\ensuremath{\myvec{\eta}}}
\providecommand{\ttheta}{\ensuremath{\myvec{\theta}}}
\providecommand{\rrho}{\ensuremath{\myvec{\rho}}}
\providecommand{\ssigma}{\ensuremath{\myvec{\sigma}}}
\providecommand{\calA}{\ensuremath{\mathcal{A}}}
\providecommand{\calM}{\ensuremath{\mathcal{M}}}
\providecommand{\calN}{\ensuremath{\mathcal{N}}}
\providecommand{\calO}{\ensuremath{\mathcal{O}}}
\providecommand{\calT}{\ensuremath{\mathcal{T}}}
\providecommand{\calU}{\ensuremath{\mathcal{U}}}
\providecommand{\calX}{\ensuremath{\mathcal{X}}}
\providecommand{\bbC}{\ensuremath{\mathbb{C}}}
\providecommand{\bbE}{\ensuremath{\mathbb{E}}}
\providecommand{\bbI}{\ensuremath{\mathbb{I}}}
\providecommand{\bbN}{\ensuremath{\mathbb{N}}}
\providecommand{\bbP}{\ensuremath{\mathbb{P}}}
\providecommand{\bbR}{\ensuremath{\mathbb{R}}}
\providecommand{\bbZ}{\ensuremath{\mathbb{Z}}}
\newcommand{\tvert}[1]{{\left\vert\kern-0.25ex\left\vert\kern-0.25ex\left\vert #1 
    \right\vert\kern-0.25ex\right\vert\kern-0.25ex\right\vert}}
\declaretheoremstyle[
  headfont=\color{red}\normalfont\bfseries,
  bodyfont=\color{red}\normalfont\itshape,
]{colored}
\renewcommand{\lll}{\ensuremath{\boldsymbol{l}}}
\newtheorem{theorem}{Theorem}
\newtheorem{lemma}[theorem]{Lemma}
\newtheorem{definition}[theorem]{Definition}
\newtheorem{corollary}[theorem]{Corollary}
\newtheorem{assumption}{Assumption}
\newtheorem{stheorem}{Theorem}
\newtheorem{slemma}[stheorem]{Lemma}
\providecommand{\customgenericname}{}
\newcommand{\newcustomtheorem}[2]{%
  \newenvironment{#1}[1]
  {%
   \renewcommand\customgenericname{#2}%
   \renewcommand\theinnercustomgeneric{##1}%
   \innercustomgeneric
  }
  {\endinnercustomgeneric}
}
\renewcommand{\aa}{\boldsymbol{a}}
\newcommand{\dccqs}{Dahlem Center for Complex Quantum Systems, Freie Universit{\"a}t Berlin, 14195 Berlin, Germany}
\newcommand{\hzb}{Helmholtz-Zentrum Berlin f{\"u}r Materialien und Energie, 14109 Berlin, Germany}
\newcommand{\hhi}{Fraunhofer Heinrich Hertz Institute, 10587 Berlin, Germany}
\newcommand{\papertitle}{Tomography of parametrized quantum states}
\begin{document}	

\title{\papertitle}
\date{\today}

\author{Franz J.\ Schreiber}
\affiliation{\dccqs}

\author{Jens Eisert}
\affiliation{\dccqs}
\affiliation{\hzb}
\affiliation{\hhi}

\author{Johannes Jakob Meyer}
\affiliation{\dccqs}

\begin{abstract}
Characterizing quantum systems is a fundamental task that enables the development of quantum technologies. 
Various approaches, ranging from full tomography to instances of classical shadows, have been proposed to this end. 
However, quantum states that are being prepared in practice often involve families of quantum states characterized by continuous parameters, such as the time evolution of a quantum state.
In this work, we extend the foundations of quantum state tomography to 
parametrized quantum states.
We introduce a framework that unifies different notions of tomography and use it to establish a natural figure of merit for tomography of parametrized quantum states.
Building on this, we provide an explicit algorithm that combines signal processing techniques with a tomography scheme to recover an approximation to the parametrized quantum state equipped with explicit guarantees.
Our algorithm uses techniques from compressed sensing to exploit structure in the parameter dependence and operates with a \enquote{plug and play} nature, using the underlying tomography scheme as a black box. 
In an analogous fashion, we derive a figure of merit that applies to parametrized quantum channels. 
Substituting the state tomography scheme with a scheme for process tomography in our algorithm, we then obtain a protocol for tomography of parametrized quantum channels. 
We showcase our algorithm with two examples of shadow tomography of states time-evolved under an NMR Hamiltonian and a free fermionic Hamiltonian.
\end{abstract}

\maketitle

Characterizing quantum states is a central task in quantum information science. 
The importance of methods of tomographic recovery, benchmarking and verification~\cite{BenchmarkingReview,Certification} is further elevated by
the advent of \emph{NISQ (noisy intermediate-scale quantum)} devices, where, in particular, quantum state tomography is widely seen as a key method to certify that a device is working correctly, 
and to provide actionable advice on how to improve the experiment in case of a deviation. 
The practical utility of full state tomography beyond the study of very small quantum systems is, however, severely limited by its stark resource requirements in the number of copies of the state, which scales exponentially in the number of involved quantum systems, such as qubits or modes. The same holds true for the classical memory necessary to store the information gathered in the process. 

To overcome this issue, novel methods for \emph{approximate} state tomography have been developed, with \emph{shadow tomography} being a particularly prominent example. In these approaches, a better performance is obtained by relaxing the requirements involved: instead of characterizing the whole quantum state, only some of its properties should be learned.
Owing to these developments, it is possible to construct classical approximations of quantum states that allow for high precision predictions for large classes of observables, while maintaining resource requirements which scale polynomially with respect to the number of involved systems. 

Still, in many scenarios of practical relevance, one is not only interested in a quantum state, but in a whole family of quantum states, depending on possibly multiple continuous parameters -- a \emph{parametrized quantum state}. The archetypal example here is quantum time evolution, where the quantum states are parametrized by the time parameter. Other examples are certification of analog quantum devices, where control parameters naturally give rise to such a setting, or the electronic ground states of a molecule, depending on the coordinates of the heavy nuclei. 

\begin{figure}
    \centering
    \includegraphics{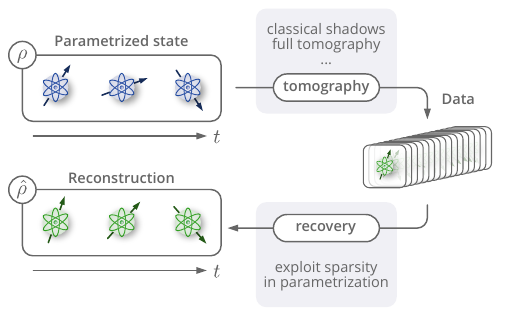}
    \caption{Schematic sketch of our tomography procedure. A parametrized quantum state $\rho(\cdot)$, exemplified by a 
    family of time-dependent states, 
    is to be reconstructed across the whole range of the parameter. We do so by running a tomography procedure as a black box at different values of the parameter. The obtained data is then processed into an estimate $\hat\rho(\cdot)$ of the parametrized state. In this part, a sparsity structure of the parametrization can be exploited through compressed sensing. The estimate has guarantees that are inherited from the tomography procedure.}
    \label{fig:front_figure_parametrized_tomography}
\end{figure}

In this work, we extend the foundations of quantum state tomography to parametrized quantum states, lifting procedures for state tomography to procedures for the tomography of parametrized quantum states. 
Our first key contribution to this end is to introduce a framework that unifies different 
notions of tomography, 
like full tomography and shadow tomography. 
We achieve this by expressing 
the figure of merit of these tomographic notions 
as an optimization over restricted sets of observables. 
For parametrized quantum states observables no longer give rise to scalar expectation values, but to functions taking the parameters as input. By combining the unified framework for state tomography with the associated 
$L^p$ function norms, 
we obtain a natural figure of merit for the tomography of parametrized quantum states that is expressed as the largest $L^p$ norm difference for an observable in the restricted set.

Our next key contribution is to present a 
algorithm that recovers a full parametrized quantum state with respect to the natural figure of merit we introduce. This is realized by using a given tomography scheme as a black box at a number of parameter values and
classically
combining the obtained data through signal processing techniques, as shown in \cref{fig:front_figure_parametrized_tomography}. 
Our algorithm allows us to explicitly use techniques of compressed sensing to exploit \emph{sparsity} in the parametrization. This happens, for example, when only few frequencies contribute to the time-evolution of a quantum state. 

Our algorithm is efficient if the used tomography scheme is efficient and the parameter dependence can be 
accurately  
captured by a polynomial number of basis functions. Furthermore, the representation obtained for the parametrized quantum state inherits its data structure from the tomography scheme. Many classical representations like density matrices or tensor networks allow for more than just estimating observables, but are well suited to compute other properties of the underlying system. 

We emphasize that our algorithm constitutes significant method development, as the facts that computing norms of quantum states can be computationally hard and that quantum states only allow for access via measurements prohibit a straightforward application of compressed sensing techniques to recover parametrized quantum states. 
Our algorithm can be applied to both finite- and infinite-dimensional systems. Like classical shadows, it also gives an \emph{agnostic} representation of the parametrized quantum state that can be used to infer the parameter-dependent expectation value of arbitrary observables as long as they are in the set for which the tomography procedure has guarantees. 

This new tomographic paradigm can be naturally extended to parametrized quantum channels. Analogously to the state case, we take a unifying figure of merit for process tomography and combine it with function $L^p$ norms to obtain a figure of merit for tomography of parametrized quantum channels. Substituting as input the tomographic procedure with a scheme for process tomography, our algorithm outputs a classical representation of the parametrized channel. The efficiency of the resulting scheme depends on an efficient description of the parametrization and on the efficiency of the scheme for process tomography given as input.

There have already been instances where tomographic procedures, in particular classical shadows, have been applied to questions involving parametrized quantum states. The authors of Ref.~\cite{chan2023algorithmic} introduce an algorithmic technique coined \emph{shadow spectroscopy}.
It allows to estimate the energy gaps of a Hamiltonian from classical shadows taken at different short times of an evolution under the Hamiltonian. In Ref.~\cite{zhan2024learning}, the authors learn conservation laws that can be expressed as the sum of few geometrically local Pauli observables from classical shadows constructed at different times of the quantum time evolution. In Ref.~\cite{shivam2023classical}, the authors investigate how well the time evolution of a quantum state can be predicted from its classical shadow. Beyond time evolution, there has been work to predict ground states of different phases of matter~\cite{huang2021provably, lewis2024improved}. There, the ground states of a family of Hamiltonians form a parametrized quantum state, where the task is to obtain classical representations for new parameter values from classical shadows constructed at different points of the parametrization. 
We briefly comment on the relation between this and prior work. Our work is fundamentally different in scope in that we construct \emph{classical representations} of parametrized quantum states, generalizing state tomography to the parameter-dependent setting. However, in doing so, we provide a unified approach as many of the above mentioned techniques can be re-expressed as a tomographic procedure for parametrized quantum states combined with classical post-processing. This is reminiscent of how quantum state tomography appears as a subroutine in various different contexts. We also examine how to exploit structure in the form of sparsity in the parametrization to obtain more efficient algorithms.   

Our work is structured as follows: In \cref{sec:parametrized quantum states}, we rigorously define what we mean by a parametrized quantum state and introduce some conventions regarding notation that we adopt throughout this work. In \cref{sec:a general framework for tomographic procedures}, we define the notion of a tomographic procedure, which encompasses full state tomography as well as approximation techniques like shadow tomography and classical shadows. We especially emphasize how different tomographic procedures 
use different ways of measuring distances between quantum states. This sets the stage for \cref{sec:notions of approximation for tomographic procedures of parametrized quantum states}, where we investigate how such distance measures generalize from quantum states to parametrized quantum states. Then, in \cref{sec:short introduction to compressed sensing}, we provide a crash course in the topic of compressed sensing, leading up to \cref{sec:tomography of parametrized quantum states}, where we give an algorithm detailing how to construct tomographic procedures on parametrized quantum states as well as provide a corresponding error analysis. Special attention is given to classical shadows. In \cref{sec:identification of a sparse support}, we explain how to identify sparse supports in function bases for parametrized quantum states. 
In \cref{sec:application}, we discuss at the hand of two example how our framework enables efficient tomography of two types of parametrized quantum states: First, we show the efficient tomography of states with a sub-Gaussian energy spectrum evolving under an NMR Hamiltonian in the Fourier basis. Second, we efficiently reconstruct a quantum state under free fermionic time evolution through Chebychev polynomials. 
In \cref{sec:channel_tomography}, we show that the ideas and techniques introduced so far can be naturally extended to parametrized quantum channels. 
For the convenience of the reader, we give a high-level overview of our work and techniques in \cref{sec:practicioners_guide}.
An outlook on future directions is given in \cref{sec:future_directions} and we conclude the manuscript in \cref{sec:conclusion}.

\section{Parametrized quantum states} \label{sec:parametrized quantum states}
On a practical level, parametrized quantum states arise in many contexts, for example in the time evolution $\rho(t) = e^{-iHt} \rho(0) e^{iHt}$
of a quantum state under a given Hamiltonian $H$, the encoding of a phase on a probe state in a quantum sensor $\rho(\phi) = \calN(\phi)[\rho(0)]$ or when parametrized quantum circuits are used 
in the context of quantum machine learning $\rho(\ttheta) = \calU(\ttheta)[\rho(0)]$. 

On an abstract level, a parametrized quantum state $\rho(\cdot)$ is a function from the parameter space $\calX$ to the space of linear operators on a Hilbert space $\calM_n$ associated to $n$ physical systems that can be finite- or infinite-dimensional. We write $\rho\colon \calX \to \calM_n$ and require that for all $\xx \in \calX$, $\rho(\xx)$ is a valid quantum state. 

For our purpose, it will be especially important to consider expansions of such operator-valued functions in terms of orthonormal functions from $\calX$ to $\bbC$. We thus introduce some notation, for functions $f,g\colon \calX \to \bbC$, we denote with
\begin{align}
    \braket{f, g} \coloneqq \int_{\calX}\mathrm{d}\mu(\xx) \, f^*(\xx) g(\xx) 
\end{align}
the scalar product of the corresponding function space~\cite{rudin1987analysis}
equipped with probability measure $\mu$ over $\calX$. This scalar product induces the $L^2$ function norm as $\norm{f}_{2} = \sqrt{\braket{f,f}}$. 
For general $p$, this generalizes to
\begin{align}\label{eqn:sec_pqs_lp_norm}
    \norm{f}_p \coloneqq \left( \int_{\calX} \mathrm{d}\mu(\xx) \, |f(\xx)|^p \right)^{\frac{1}{p}}  \, ,
\end{align}
called 
the 
$L^p$-function 
norm, 
where 
\begin{align}
    \norm{f}_{\infty} \coloneqq \inf \left\{\text{$C:$ $|f(\xx)| \leq C$ $\mu$-almost everywhere} \right\} 
\end{align}
is the special case $p=\infty$. Later, these will important for the definition of distance measures between two parametrized quantum states $\rho, \sigma\colon\calX \to \calM_n$. Note that only the $L^2$-norm is induced by a scalar product.

An \emph{orthonormal function basis} (ONB) is a set of functions $\{\varphi_k\colon\calX \to \bbC\}_{k}$, such that
\begin{align}
    \braket{\varphi_k, \varphi_j} = \delta_{k,j}
\end{align}
which allows the representation of arbitrary functions $f\colon\calX \to \bbC$ as
\begin{align}
    f(\xx) = \sum_{k\in \Lambda} c_k \varphi_k(\xx)
\end{align}
with and $c_k = \braket{f, \varphi_k} \in \bbC$. We call the ONB bounded by the constant $K$ if 
\begin{align}
    \lVert \varphi_k \rVert_{\infty} \leq K \text{ for all } k \in \Lambda \, .
\end{align}
Every entry of the parametrized quantum state $\rho$ can be seen as a function from $\calX$ to $\bbC$. Therefore, for any ONB, we can expand the operator-valued function $\rho(\xx)$ as
\begin{align} \label{eqn:sec_pqs_expanding_rho}
    \rho(\xx) = \sum_{k \in \Lambda} \alpha_k \varphi_k(\xx)
\end{align}
with operator-valued coefficients $\alpha_k \in \calM_n$. The coefficients are given explicitly as
\begin{align} \label{eqn:sec_pqs_coefficients}
    \alpha_k \coloneqq& \langle \varphi_k, \rho\rangle = \int \diff \mu(\xx) \, \varphi_k^{*}(\xx)\rho(\xx).
\end{align}
While, in principle, infinitely many basis functions might be necessary
to perfectly describe $\rho(\xx)$, we assume that we are able to pick some finite set $\Lambda$ with $|\Lambda|=D$, such that the error from truncating the infinite series is negligible.

As an example, consider a Hamiltonian $H$ with integer $\lambda_i \in \bbN$ and an energy bound $0 \leq \lambda_i \leq E_{\max}$. We have $\calX = [0, 2\pi)$ and
\begin{align} \label{eqn:sec_pqs_time_evo}
    \rho(t) = e^{-iHt} \rho(0) e^{iHt} \, .
\end{align}
Any orthonormal basis of functions from $[0, 2\pi) \to \bbC$ would allow for an expansion in the style of \cref{eqn:sec_pqs_expanding_rho}, but in this case it is natural to choose the Fourier basis $\varphi_k(t) = \exp(-ikt)$ and $\mu(t)=1/(2\pi)$. We can then express any time-evolution of the form in \cref{eqn:sec_pqs_time_evo} as 
\begin{align}
    \rho(t) = \sum_{k \in \Lambda} \alpha_k e^{-ikt}  \, ,
\end{align}
where the set of frequencies $\Lambda$ is determined by the maximal eigenvalues bound of the Hamiltonian,
\begin{align}
    \Lambda = \left\{ -E_{\max}, -E_{\max}+1, \dots, 0, \dots, E_{\max} \right\} \, ,
\end{align}
and again, $\alpha_k$ are operator-valued.
While the size of $\Lambda$ can be very large, situations arise where the structure of the initial state of the time evolution are such that only a comparatively small set $S \subset \Lambda$ contribute significantly.

In our work, we are especially interested in cases like the above example, where we can pick a (possibly very large) set $\Lambda$ with an unknown subset $S \subset \Lambda$ with $|S| \ll |\Lambda|$ that also describes $\rho(\cdot)$ exactly or at least well. In such cases, we refer to $\rho(\cdot)$ as \emph{sparse} or \emph{approximately sparse}, respectively. In \cref{sec:a general framework for tomographic procedures,sec:notions of approximation for tomographic procedures of parametrized quantum states} we will develop the tools to provide rigorous definitions of both cases, while in \cref{sec:identification of a sparse support}, we provide an algorithm for the identification of such a set $S$.

\section{A general framework for state tomography} \label{sec:a general framework for tomographic procedures}
Before we can establish a scheme for the tomography of parametrized quantum states, we first need to establish an understanding of tomography of non-parametrized states.
In the typical form of state tomography, we are given $T$ copies of the system in state $\rho$ that we assume to be composed of $n$ quantum systems. We are tasked with outputting an approximation $\hat\rho$ that is close in trace distance with high probability. 
Formally, the output $\hat\rho$ of the tomography procedure needs to fulfill
\begin{align}\label{eqn:full_tomography_guarantee}
    \bbP\big[ \lVert \rho - \hat\rho \rVert_1 \leq \epsilon \big] \geq 1 - \delta,
\end{align}
where $\lVert.\rVert_1$ denotes the trace-norm.
We know that solving this task of \emph{full state tomography} necessitates at least~\cite{ODonnel2016efficient}
\begin{align}
    T = \Omega \left( \frac{2^{2n} + \log\frac{1}{\delta} }{\epsilon^2}  \right)
\end{align}
many samples, even if entangling measurements are being allowed, a simple consequence of the fact that quantum states have exponentially many degrees of freedom in the number of qubits. Under the additional assumption that the state $\rho$ has rank at most $r$ (low-rank tomography), we can reduce the factor $2^{2n}$ to $r 2^n$. While this is an exponential gain in sample complexity, it is still scaling exponentially in the number of qubits.

But for many practical applications, performing full tomography of the state $\rho$ is not necessary. A particularly prominent example of this is \emph{shadow tomography}~\cite{aaronson2018shadow,huang2020predicting}, where we are satisfied if out estimate of the state faithfully reproduces the expectation values of a given set of observables $\calO = \{ O_1, O_2, \dots, O_M \}$
\begin{align}\label{eqn:shadow_tomography_guarantee}
    \bbP\big[ {\textstyle \max_{1 \leq i \leq M}} |\Tr[ O_i \rho] - \Tr[ O_i \hat\rho] | \leq \epsilon \big] \geq 1 - \delta.
\end{align}
Assuming without loss of generality that we have $\Tr[O_i] = 0$ for all $O_i \in \calO$, the classical shadows protocol of Ref.~\cite{huang2020predicting} achieves the above using
\begin{align}
    T = O \left( \frac{{\max_{1 \leq i \leq M}}\lVert O_i \rVert_{\mathrm{shadow}}^2}{\epsilon^2} \log\frac{M}{\delta}\right) 
\end{align}
many samples.
The \emph{shadow norm} $\lVert \cdot \rVert_{\mathrm{shadow}}$ essentially captures the compatibility of the observable with the particular protocol used to construct the classical shadow. In the practically relevant case where all observables $O_i$ act only on $\ell$ subsystems, we can use local Clifford shadows, 
to get \cite[Proposition S3]{huang2020predicting} 
\begin{align}
    \lVert O_i \rVert_{\mathrm{shadow}}^2 \leq 4^{\ell} \lVert O_i \rVert_{\infty}^2 
\end{align}
which is an efficient scaling for constant $\ell$.

Instead of demanding that the difference of estimates over a set of observables is small in the worst case we can also demand that it should be small in expectation relative to a distribution over the observables or that the probability of an observable drawn from said distribution being large is small. This average case notion of tomography also allows for much more efficient protocols~\cite{aaronson2007learnability}.

\subsection*{Tomographic procedures}

Having reviewed two seemingly very different approaches to quantum state tomography, we will no go on to establish a unifying viewpoint that allows us to actually treat both full tomography and shadow tomography on the same footing, and that also extends to other ways of performing tomography.

The objective in shadow tomography (see \cref{eqn:shadow_tomography_guarantee}) is expressed as an optimization over different observables, whereas the objective in full tomography (see \cref{eqn:full_tomography_guarantee}) is written in terms of the trace norm. However, we can also express the trace norm as an optimization 
\begin{align}
    \lVert \rho - \hat\rho \rVert_1 = \sup_{\lVert O \rVert_{\infty} \leq 1} \Tr[ O (\rho - \hat\rho ) ]
\end{align}
over observables.
We use this \enquote{dual} viewpoint where a norm is expressed as an optimization to establish a unifying treatment of tomographic procedures. To this end, we define the \emph{induced semi-norm}
of a set of observables $\calO$. 
\begin{definition}[Induced semi-norm]\label{def:induced_semi-norm}
Let $\calO$ be a set of observables. The semi-norm induced by $\calO$ is defined as
\begin{align}
 \lVert X \rVert_{\calO} \coloneqq \sup_{O \in \calO} |\Tr[ X O ]|.
\end{align}
\end{definition}
This is indeed a semi-norm: it is non-negative by construction, homogeneity follows from the homogeneity of trace and absolute value and the triangle inequality can be easily established from the triangle inequality of the absolute value.
If $\calO$ is a compact convex set, then the supremum is achieved on the boundary $\partial \calO$. If $\calO$ is also closed under negation, \emph{i.e.},  $-O \in \calO$ if $O \in \calO$, then the absolute value is not necessary.

To achieve an average-case notion of approximation, one could in principle replace the supremum in the definition of the induced semi-norm with an expectation value relative to some distribution over observables and would still retain the semi-norm property.

The induced semi-norm
\begin{align} \label{eqn:sec_AGF_semi-norm_difference}
    \lVert \rho - \sigma \rVert_{\calO}=\sup_{O \in \calO} |\Tr[ \rho O ] - \Tr[\sigma O ]|
\end{align}
quantifies the largest deviation between the predictions of two quantum states for observables in the set $\calO$.
A semi-norm differs from a norm only in the fact that $\lVert X \rVert_{\calO} = 0$ does not imply that $X = 0$, which in the example of shadow tomography reflects that two distinct quantum states can give the exact same expectation values for a given set of observables. 

With the definition of the induced semi-norm as a figure of merit, we are ready to state the definition of a \emph{tomographic procedure} which will form the backbone of our generalization to parametrized quantum states.
\begin{definition}[Tomographic procedure]\label{def:tomographic_procedure}
An $(\epsilon, \delta, n)$ \emph{tomographic procedure} relative to a set of observables $\calO$ is an experimental procedure that uses $T(\epsilon, \delta, n)$ copies of a quantum state $\rho$ on $n$ copies of a physical system to produce a classical representation $\hat{\rho}$ such that
\begin{align}
    \bbP\big[ \lVert \rho - \hat\rho \rVert_{\calO} \leq \epsilon\big] \geq 1 - \delta.
\end{align}
The number of copies $T(\epsilon, \delta, n)$ is the \emph{sample complexity} of the tomographic procedure.
\end{definition}
The fact that $\lVert \cdot \rVert_{\calO}$ (typically) is a semi-norm instead of a norm reflects that we relaxed the requirements of full state tomography and means that even in the limit $\epsilon \to 0$ the tomographic procedure cannot necessarily differentiate between distinct states.  

In the light of the above definition, we can now understand full state tomography as a tomographic procedure relative to the 
set of observables $\calO = \{ O : \lVert O \rVert_{\infty} \leq 1\}$. If we, for example, restrict the observables in this definition to only act non-trivially on $\ell$ of the $n$ subsystems, the induced semi-norm is the $\ell$-local trace norm
\begin{align}
    \lVert X \rVert_{1,\ell} \coloneqq \max_{I \subseteq [n], |I| = \ell} \lVert X_{I} \rVert_1,
\end{align}
where we use the notation $X_I$ to denote the reduced operator on the subsystem $I$, \emph{i.e.},  what is left after tracing out the complement $\bar{I}$ of $I$, $X_I = \Tr_{\bar{I}}[X]$. 
The local Clifford shadows of Ref.~\cite{huang2020predicting} can be used to construct an $(\epsilon, \delta, n)$ tomographic procedure for the $\ell$-local trace norm with a sample complexity of
\begin{align}
    T(\epsilon, \delta, n) = O\left( \frac{\ell 12^\ell}{\epsilon^2}\log \frac{n}{\delta}\right).
\end{align}
Another norm captured by the definition of the induced semi-norm is the quantum Wasserstein distance of order 1~\cite{de_palma2021quantum}.

\section{Distance measures for parametrized quantum states} \label{sec:notions of approximation for tomographic procedures of parametrized quantum states}

We have introduced a general framework that captures the quality of a tomographic procedure through the semi-norm induced by a set of observables $\calO$. In this section, we give a natural way of lifting this definition to parametrized states to obtain a measure of distance between functions $\rho, \sigma\colon \calX \to \calM_n$ that can act as a figure of merit for tomography. 

We do so by thinking along similar lines as in the preceding section, where we computed the largest expectation value relative to a set of observables. 
Here, we now lift this reasoning to the parametrized case by recognizing that after an observable is fixed, the expectation value relative to this observable becomes itself a function from $\calX$ to $\bbR$ of which we can compute the $L^p$ function norm. Optimizing now again over all observables in the set $\calO$ leads us to the definition of the induced semi-norm for parametrized operators.
\begin{definition}[Induced $L^p$ semi-norm for parametrized operators]\label{def:sec_NoA_semi-norm_parametrized_operators} 
    Let $X\colon\calX \to \calM_n$ be a parametrized operator and $\calO$ a set of observables. Then the $L^p$ semi-norm induced by $\calO$ 
    is defined as
    \begin{align}
        \tvert{X(\cdot)}_{\calO, p} &\coloneqq \sup_{O \in \calO} \norm{\Tr[OX(\cdot)]}_p.
    \end{align}
    For $p < \infty$, it is 
    \begin{align}
        \tvert{X(\cdot)}_{\calO,p} = \sup_{O \in \calO} \left(\int_{\calX} \mathrm{d}\mu(\xx) \,|\Tr[OX(\xx)]|^p \right)^{\mathrlap{\frac{1}{p}}}.
    \end{align}
\end{definition}
Note that \cref{def:sec_NoA_semi-norm_parametrized_operators} does not require $X(\cdot)$ to be a quantum state. Analogous to \cref{eqn:sec_AGF_semi-norm_difference}, if the above semi-norm is applied to a pair of quantum states, it quantifies the largest deviation between predictions for the set of observables $\calO$, only here the predictions are functions $\Tr[O\rho(\cdot)]\colon \calX \to \bbR$ whose difference is accordingly measured with an $L^p$ function norm as
\begin{align}
    \tvert{\rho(\cdot) - \sigma(\cdot)}_{\calO, p} = \sup_{O \in \calO} \norm{\Tr[O\rho(\cdot)] - \Tr[O \sigma(\cdot)]}_p .
\end{align}

For a scalar function $f(\xx) = \sum_{k \in \Lambda} c_k \varphi_k(\xx)$, the coefficients of the basis expansion $\{ c_k \}_{k \in \Lambda}$ carry all the information about the function and the $\ell^q$ norms of the vector of coefficients $\cc = (c_k)_{k \in \Lambda}$ can be related to the $L^p$ norms of $f$ in certain cases, most strikingly in the case $p=q=2$ through Parseval's Theorem and in more general cases through the Hausdorff-Young inequalities. We would like to use similar tools to relate the induced $L^p$ semi-norm of a parametrized operator $X(\xx) = \sum_{k\in \Lambda} \alpha_k \varphi_k(\xx)$ to some norm of the vector of coefficients $\aalpha = (\alpha_k)_{k\in\Lambda}$. 

To define a compatible norm of vectors of operators, we will again make use of the same intuition -- namely that after fixing an observable, a vector of operators turns into a simple vector of complex numbers. This leads us to the definition of the induced $\ell^p$ semi-norm for vectors of operators.
\begin{definition}[Induced $\ell^p$ semi-norm for vectors of operators] \label{def:sec_NoA_semi-norm_vectors_of_operators}
    Let $\XX = (X_1, X_2,  \dots , X_m)$ be a vector of operators $X_i \in \calM_n$. 
    Then the $\ell^p$ semi-norm of $\XX$ induced by $\calO$ is defined as 
    \begin{align}
    \tvert{\XX}_{\calO, p} \coloneqq \sup_{O \in \calO} \left( \sum_{i=1}^m |\Tr[OX_i]|^p \right)^{\frac{1}{p}} \,.
    \end{align}
\end{definition}
Again, we quantify the largest deviation between predictions for the set of observables $\calO$, but here we have a vector of predictions, which we treat with the usual $p$-norm:
Let $X_O \coloneqq (\Tr[O X_1], \Tr[O X_2],\ldots  
, \Tr[O X_m)]$, note that $X_O \in \bbR^m$, then
\begin{align}
    \tvert{\rrho - \ssigma}_{\calO, p} &= \sup_{O \in \calO} \norm{(\rrho-\ssigma)_O}_p \\
    &= \sup_{O \in \calO} \norm{\rrho_O - \ssigma_O}_p
     \, .\nonumber
\end{align}
In the following, we will also need to understand linear transformations of vectors of operators.
For a matrix $A \in \bbC^{m' \times m}$ with elements $a_{i,j}$, we denote
\begin{align}
    \norm{A}_{p \to q} \coloneqq \sup_{\norm{\xx}_p = 1} \norm{A\xx}_q
\end{align}
and define multiplication between a matrix and a operator-valued vector in the usual sense as
\begin{align}
    A \XX \coloneqq \begin{pmatrix}
        a_{1,1} X_1 + a_{1,2} X_2 +\ldots  + a_{1,m} X_m \\
        a_{2,1} X_1 + a_{2,2} X_2 +\ldots  + a_{2,m} X_m \\
        \vdots \\
        a_{m',1} X_1 + a_{m',2} X_2 +\ldots  + a_{m',m} X_m
    \end{pmatrix} \, .
\end{align}
Then, the semi-norm is submultiplicative, \emph{i.e.}, 
\begin{align}
    \tvert{A\XX}_{\calO, p} \leq \norm{A}_{p \to p} \tvert{\XX}_{\calO, p} \, . \label{eqn:sec_NoA_submultiplicativity}
\end{align}
As already hinted at, in the 
special case $p=2$, there is a connection between the semi-norms defined in \cref{def:sec_NoA_semi-norm_parametrized_operators} and \cref{def:sec_NoA_semi-norm_vectors_of_operators} via an analogue of Parseval's Theorem. 
\begin{lemma}[Parseval's Theorem for induced norms] \label{lem:sec_NoA_parseval}
    Let $X$ be a parametrized operator $X\colon\calX \to \calM_n$ and consider the expansion into an orthonormal basis $\{\varphi_k(\xx)\}_{k \in \Lambda}$ as
    \begin{align}
        X(\xx) = \sum_{k \in \Lambda} \alpha_k \varphi_k(\xx).
    \end{align}
    Denote with $\aalpha$ the operator-valued vector consisting of the coefficient operators $\alpha_k$. Then,
    \begin{align}
        \tvert{X(\cdot)}_{\calO, 2} = \tvert{\aalpha}_{\calO, 2} \,.
    \end{align}
\end{lemma}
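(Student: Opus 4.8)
The plan is to fix an observable and reduce the statement to the ordinary Parseval identity for scalar functions on $\calX$. First I would observe that, since the trace is linear and continuous, applying $\Tr[O\,\cdot\,]$ term by term to the expansion $X(\xx) = \sum_{k\in\Lambda}\alpha_k\varphi_k(\xx)$ yields
\begin{align}
    \Tr[OX(\xx)] = \sum_{k\in\Lambda}\Tr[O\alpha_k]\,\varphi_k(\xx),
\end{align}
so that the scalar function $f_O(\xx)\coloneqq\Tr[OX(\xx)]$ has expansion coefficients $c_k = \Tr[O\alpha_k]$ in the ONB $\{\varphi_k\}_{k\in\Lambda}$. One checks this identity either entrywise (writing $\Tr[OX(\xx)]=\sum_{a,b}O_{ba}\,X(\xx)_{ab}$ and using that each matrix entry $X(\cdot)_{ab}$ has coefficients $(\alpha_k)_{ab}$), or directly from $\braket{\varphi_k,f_O} = \Tr[O\braket{\varphi_k,X}] = \Tr[O\alpha_k]$ via \cref{eqn:sec_pqs_coefficients}.

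Second, I would invoke the classical Parseval theorem for the inner product $\braket{\cdot,\cdot}$ on $L^2(\calX,\mu)$: since $\{\varphi_k\}_{k\in\Lambda}$ is orthonormal and complete,
\begin{align}
    \norm{\Tr[OX(\cdot)]}_2^2 = \norm{f_O}_2^2 = \sum_{k\in\Lambda}|c_k|^2 = \sum_{k\in\Lambda}|\Tr[O\alpha_k]|^2 .
\end{align}
The right-hand side is precisely $\norm{\aalpha_O}_2^2$ with $\aalpha_O \coloneqq (\Tr[O\alpha_k])_{k\in\Lambda}$, the complex vector appearing in \cref{def:sec_NoA_semi-norm_vectors_of_operators}. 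Hence for every fixed $O$ we get $\norm{\Tr[OX(\cdot)]}_2 = \norm{\aalpha_O}_2$. Taking the supremum over $O\in\calO$ on both sides and unpacking \cref{def:sec_NoA_semi-norm_parametrized_operators,def:sec_NoA_semi-norm_vectors_of_operators},
\begin{align}
    \tvert{X(\cdot)}_{\calO,2} = \sup_{O\in\calO}\norm{\Tr[OX(\cdot)]}_2 = \sup_{O\in\calO}\norm{\aalpha_O}_2 = \tvert{\aalpha}_{\calO,2},
\end{align}
which is the claim.

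The only genuine obstacle is justifying the term-by-term application of $\Tr[O\,\cdot\,]$ — equivalently, interchanging the trace with the (possibly infinite) sum and with the integral defining $\alpha_k$ — when $\calM_n$ or $\Lambda$ is infinite; in finite dimension this is immediate. In general I would argue entrywise: the expansion \cref{eqn:sec_pqs_expanding_rho} is understood to converge strongly enough that each matrix element $X(\cdot)_{ab}$ lies in $L^2(\mu)$ with coefficients $(\alpha_k)_{ab}$, so the scalar Parseval identity applies to each entry, and boundedness of the observable $O$ (so the weights $O_{ba}$ are controlled, or more invariantly the trace pairing with a bounded operator is continuous) rules out convergence problems. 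Everything else is an immediate consequence of the scalar Parseval theorem together with the two definitions.
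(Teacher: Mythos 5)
Your proof is correct and follows essentially the same route as the paper's: fix an observable $O$, identify the coefficients of the scalar function $\Tr[OX(\cdot)]$ as $\Tr[O\alpha_k]$, apply the classical Parseval identity, and take the supremum over $\calO$. The paper's version is merely more terse; your added remarks on term-by-term application of the trace and convergence in the infinite-dimensional case are a welcome elaboration of a step the paper leaves implicit.
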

\begin{proof}
We have
\begin{align}
    \tvert{X(\cdot)}_{\calO, 2} &= \sup_{O \in \calO} \norm{\Tr [OX(\cdot)]}_2 \\
    \nonumber 
    &= \sup_{O \in \calO} \norm{\left( \Tr [O\alpha_{k_1}], \Tr [O\alpha_{k_2}],  \dots  \right)}_2 \\
    \nonumber
    &= \tvert{\aalpha}_{\calO, 2} \, .
\end{align}
To arrive at the second equality, we have used Parseval's theorem for the $L^2$-space. Note the change from the $L^2$ function norm to the euclidean vector norm going from the first to the second equality.
\end{proof}

Having established natural distance measures between parametrized quantum states, we are now ready to study the tomography of parametrized states, or, respectively, how to lift a tomographic procedure relative to a set of observables $\calO$ to a tomographic procedure for parametrized quantum states. Such a procedure should give a generalized guarantee of a form similar to
\begin{align}
    \bbP\big[ \tvert{ \rho(\cdot) - \hat\rho(\cdot) }_{\calO, p} \leq \epsilon \big] \geq 1 - \delta.
\end{align}

\subsection*{Sparse parametrized quantum states}

As already said, in this context we care a lot about the case when $\rho(\cdot)$ can be described using a set of indices $S \subseteq \Lambda$ such that $|S|$ is small. Even if that is not the case, it might still be that $\rho(\cdot)$ can be well approximated using a small set $S$ and is thus approximately sparse. This situation can arises for instance if a sparse parametrized quantum state is subject to noise. With the tools developed in this section, we can give a precise definition of sparsity and approximate sparsity. 

\begin{definition}[Approximately sparse parametrized quantum state]
    Let $\{\varphi_k(\xx)\}_{k \in \Lambda}$ be an orthonormal basis and 
    \begin{align}
        \rho(\xx) = \sum_{k \in \Lambda} \alpha_k \varphi_k(\xx)
    \end{align}
    a parametrized quantum state. 
    If there exists a a set $S$ of cardinality $|S|=s$ such that the approximation
    \begin{align}
        \rho_{S}(\xx) = \sum_{k \in S} \alpha_k \varphi_k(\xx)
    \end{align}
    has bounded error with respect to the induced function $p$-norm
    \begin{align}
        \tvert{\rho(\cdot) - \rho_S(\cdot)}_{\calO, p} \leq \gamma_{L^p},
    \end{align}
    or the induced vector $p$-norm
    \begin{align}
        \tvert{\aalpha(\cdot) - \aalpha_S(\cdot)}_{\calO, p} \leq \gamma_{\ell^p},
    \end{align}
    we call the parametrized quantum state $(s, \gamma_{L^p})$-sparse or $(s, \gamma_{\ell^p})$-sparse 
     with respect to the orthonormal basis $\{\varphi_k(\xx)\}_{k \in \Lambda}$ and a set of observables $\calO$.
\end{definition}

For $p=2$, per \cref{lem:sec_NoA_parseval}, we have
\begin{align}
    \tvert{\rho(\cdot)-\rho_S(\cdot)}_{\calO,2} &= \tvert{\aalpha - \aalpha_S}_{\calO, 2} \\
    &= \tvert{\aalpha_{\bar{S}}}_{\calO,2} \, ,\nonumber
\end{align}
where $\bar{S} = \Lambda \setminus S$. This means that in the case $p=2$, the two notions of sparsity are equivalent.

\section{Compressed sensing in orthonormal function bases} \label{sec:short introduction to compressed sensing}

Before we come to our algorithm for tomography of parametrized quantum states, we have to review the basics of compressed sensing~\cite{boche2015survey,CompressedSensingGitta}.
In signal theory, sampling rates were traditionally determined in accordance with the celebrated Nyquist-Shannon sampling theorem, which states that for a band-limited signal
\begin{align}
    f(x) = \sum_{k \in \Lambda} c_{k} e^{-i k x},
\end{align}
sampling at twice the maximum frequency $k_{\max} = \max \{ |k| : k \in \Lambda \}$ of the signal -- in other words, taking $O(|\Lambda|)$ many samples -- suffices to capture the signal in its entirety. 

This ultimate limit can, however, be beaten when additional information about the \emph{structure} of the signal is available. The most studied structural assumption is the aforementioned \emph{sparsity} of a signal, \emph{i.e.},  when a signal is only supported on $s$ frequencies $S \subseteq \Lambda$. In this case, it can be shown that evaluating the signal at $\tilde{O}(s)$ positions is sufficient to obtain the sparse vector of Fourier coefficients $\cc = (c_k)_{k\in\Lambda}$ that describes the signal. 
Other structures considered in compressed sensing are sparsity with respect to frames instead of orthonormal function bases~\cite{candes2011compressed}, hierarchical sparsity~\cite{eisert2021hierarchical},
or low rank structures
\cite{CandesRecht,gross_quantum_2010}. 

Let us explore how this can be achieved.
Formally, evaluating a general scalar signal 
\begin{align}
    f(\xx) = \sum_{k \in \Lambda} c_k \varphi_k(\xx)
\end{align}
at a point in parameter space gives us an observation that we can express as an inner product between the coefficient vector $\cc = (c_k)_{k\in\Lambda}$ and a vector $\aa_i = (\varphi_k(\xx_i))_{k\in\Lambda}$  that represents the particular values of the basis functions 
\begin{align}
    f(\xx_i) = \sum_{k \in \Lambda} c_k \varphi_k(\xx_i) = \langle \cc, \aa_i \rangle
\end{align}
at the point $\xx_i$. If we now obtain $M$ different observations, we can again arrange them in a vector $\ff = (f(\xx_i))_{i=1}^M$ and form a matrix $A$ whose columns are the vectors $\aa_i$ to arrive at the following linear relation between observations and the underlying coefficient vector
\begin{align}
    \begin{pmatrix}
        f(\xx_1) \\ f(\xx_2) \\ \vdots \\ f(\xx_M)
    \end{pmatrix} = \begin{pmatrix}
        \varphi_{1}(\xx_1) & \varphi_2(\xx_1) & \dots & \varphi_D(\xx_1)\\
        \varphi_{1}(\xx_2) & \varphi_2(\xx_2) & \dots & \varphi_D(\xx_2) \\
        \vdots &  \vdots & \ddots & \vdots \\
        \varphi_{1}(\xx_M) & \varphi_2(\xx_M) & \dots & \varphi_D(\xx_M) \\
    \end{pmatrix} \begin{pmatrix}
        c_{1} \\ c_2 \\ \vdots \\ c_D
    \end{pmatrix},
\end{align}
or in short $\ff = A \cc$. 

As we know both $\ff$ and $A$, we can try to solve the system of equations $\ff = A \cc$ for $\cc$. But if we obtain less than $D$ different observations the system of equations will be underdetermined and thus have many different solutions. Formally speaking, any vector from the \emph{kernel} of $A$, $\operatorname{ker}(A) = \{ \vv : A\vv = 0\}$, can be added to a solution $\cc$ and the result still satisfies the above system of equations. 

The standard approach to deal with an underdetermined system of equations is to choose the $\cc$ which has minimal $\ell^2$ norm, \emph{i.e.},  which solves the optimization problem 
\begin{align}
    \begin{split}
    \min\mathstrut &\ \lVert \cc \rVert_2 ,\\
    \text{ subject to}&\  \ff = A \cc.
    \end{split}
\end{align}
Interestingly, the above optimization problem has a closed-form solution that can be obtained from the \emph{pseudoinverse} of $A$. 
For injective $A$, the pseudo-inverse is defined as 
\begin{align}
    A^{+} \coloneqq (A^{\dagger}A)^{-1} A^{\dagger} \, .
\end{align} 
The solution to the optimization problem is then
\begin{align}
    \cc_{\ell^2} = A^{+} \ff.
\end{align}
This approach does, however, not make use of the additional knowledge we have that the solution to the system of equations should be \emph{sparse} and will usually not produce sparse vectors in the first place.

To produce a sparse solution, it is tempting to instead solve the optimization problem
\begin{align}\label{eqn:l0_optimization_problem}
    \begin{split}
    \min\mathstrut &\ \lVert \cc \rVert_0 ,\\
    \text{ subject to}&\  \ff = A \cc,
    \end{split}
\end{align}
where $\lVert \cc \rVert_0$ is the number of non-zero components of $\cc$. This optimization problem can be shown to be NP-hard~\cite{boche2015survey}. But this is no reason to despair because we can efficiently solve a relaxed version of the above problem 
\begin{align}\label{eqn:l1_optimization_problem}
    \begin{split}
    \min\mathstrut &\ \lVert \cc \rVert_1 ,\\
    \text{ subject to}&\  \ff = A \cc
    \end{split}
\end{align}
where $\lVert \cc \rVert_0$ is replaced with $\lVert \cc \rVert_1$.
In fact, this is the tightest convex relaxation of the above problem.
Under the guarantee that the true solution only has $s$ non-zero entries, the solution of the optimization problem of \cref{eqn:l1_optimization_problem} can be shown to coincide 
with the one of \cref{eqn:l0_optimization_problem} when $A$ has the so-called \emph{null space property} of order $s$~\cite{cohen2009compressed}. On an intuitive level, this property states that the kernel of $A$ contains only vectors that are not very sparse. 

While the null space property is both a necessary and a sufficient condition, it is not easy to 
work with. This is why most works in compressed sensing are based on a stronger sufficient notion called the \emph{restricted isometry property (RIP)} \cite{foucart2013mathematical}, 
which in turn implies the null-space property.

\begin{definition}[Restricted isometry property]
A matrix $A \in \bbC^{M \times D}$ has the \emph{restricted isometry property (RIP)} with constant $\Delta_s$ if for all $s$-sparse vectors $\vv$, we have that
\begin{align}
    (1-\Delta_s)\lVert \vv \rVert_2^2 \leq \lVert A \vv \rVert_{2}^2 \leq (1 + \Delta_s)\lVert \vv \rVert_2^2.
\end{align}
We will use the notation $\Delta_s(A)$ to denote the smallest $\Delta_s$ such that the above holds for a matrix $A$.
\end{definition}
An equivalent reading of the restricted isometry property is given by introducing a projector $\Pi_S$ onto all the entries in an index set $S$. Then, defining 
\begin{align}
    A_S \coloneqq A \Pi_S,
\end{align}
we see that the restricted isometry property is equivalent to 
\begin{align}
    \lVert \Pi_S - A_S^{\dagger} A_S \rVert_{\infty} \leq \Delta_s
\end{align}
for all sets of indices $S$ of cardinality at most $s$. In other words, to sparse vectors, the matrix $A$ looks like an isometry.

The theory of compressed sensing tells us that we can in principle determine the sparse coefficients $\cc$ as long as the matrix $A$ that contains the basis function values $\varphi_k(\xx_i)$ has the restricted isometry property with a low enough value of the RIP constant $\Delta_{2s} < 1/3$~\cite{boche2015survey}. Other, sometimes even more practical approaches to compressed sensing are available that have slightly different requirements on the RIP constant~\cite{rauhut2010compressive,foucart2011hard,foucart2013mathematical,lee2013oblique}.

While it is notoriously difficult to hand-craft deterministic matrices $A$ that have a good RIP constant
\cite{bandeira2013road, Bourgain_2011, Deterministic}, 
randomized methods perform surprisingly well \cite{boche2015survey,CompressedSensingGitta}. In the case of bounded orthonormal systems we care about, we can make use of the following Corollary of a result of Rauhut~\cite[Theorem 8.4]{rauhut2010compressive} that bounds the number of points $\xx_i$ we need to sample according to the orthogonality measure $\mu(\xx)$ of the bounded orthonormal system to achieve a desired RIP constant.

\begin{corollary}[Sample complexity bound]\label{corr:rip_sample_complexity_bos}
Let $A \in \bbC^{M \times D}$ be the random sampling matrix associated to a bounded orthonormal system $\{ \varphi_{\kk}(\cdot)\}$ with constant $K$. Then,
\begin{align}
    \bbP\big[ \Delta_s(A/\sqrt{M}) \leq \Delta \big] \geq 1 - \delta
\end{align}
if
\begin{align}
    M = \frac{s K^2}{\Delta^2} \left( C_1 \ln 300 s \ln 4 D + C_2 \ln \frac{2}{\delta}\right) \, ,
\end{align}
where the values of the constants do not exceed $C_1 \leq 103\,140$ and $C_2 \leq 2\,736$.
\end{corollary}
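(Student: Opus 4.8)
The plan is to obtain the statement as a direct specialization of Rauhut's restricted isometry bound for bounded orthonormal systems~\cite[Theorem 8.4]{rauhut2010compressive}; the only real content is a reconciliation of notation and the merging of two separate thresholds on $M$ into one. First I would record a harmless normalization. Since $\mu$ is a probability measure and each basis function has $\norm{\varphi_{\kk}}_2 = 1$, we have $\norm{\varphi_{\kk}}_{\infty}^2 \geq \int_{\calX} \mathrm{d}\mu(\xx)\, |\varphi_{\kk}(\xx)|^2 = 1$, so the bounding constant automatically satisfies $K \geq 1$, which is exactly the regime covered by Rauhut's theorem.

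Next I would invoke \cite[Theorem 8.4]{rauhut2010compressive} directly. In the form we need, it guarantees the existence of explicit universal constants such that, writing the target restricted isometry constant as $\Delta$ and the failure probability as $\delta$, the rescaled sampling matrix $A/\sqrt{M}$ obeys $\Delta_s(A/\sqrt{M}) \leq \Delta$ with probability at least $1 - \delta$ as soon as $M$ simultaneously exceeds two thresholds: a \emph{structural} one of the form $C_1 K^2 \Delta^{-2} s \ln(300 s)\ln(4 D)$, which controls the restricted isometry estimate uniformly over all $s$-sparse vectors, and a \emph{confidence} one of the form $C_2 K^2 \Delta^{-2} s \ln(2/\delta)$, which controls the probability of the exceptional event, with $C_1 \leq 103\,140$ and $C_2 \leq 2\,736$. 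The one point that demands attention at this step is purely notational: the compressed-sensing literature conventionally writes $\delta$ for the restricted isometry constant, whereas in the present paper that role is played by $\Delta$ and the symbol $\delta$ is reserved for the failure probability; once this substitution is made, ``$\Delta_s(A/\sqrt{M}) \leq \Delta$'' is nothing but the statement that $A/\sqrt{M}$ has the restricted isometry property of order $s$ with constant at most $\Delta$, as in our definition.

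Finally I would merge the two requirements. A condition of the form ``$M \geq T_1$ and $M \geq T_2$'' is implied by ``$M \geq T_1 + T_2$'' for non-negative $T_1, T_2$, since $\max\{T_1, T_2\} \leq T_1 + T_2$; so it suffices to take $M$ equal to the sum of the two thresholds, and factoring out the common prefactor $s K^2/\Delta^2$ produces precisely
\begin{align}
    M = \frac{s K^2}{\Delta^2}\left( C_1 \ln(300 s)\ln(4 D) + C_2 \ln\frac{2}{\delta}\right),
\end{align}
which is the asserted bound with $C_1 \leq 103\,140$ and $C_2 \leq 2\,736$. I do not anticipate a genuine obstacle: the entire argument is bookkeeping, and the only steps needing care are the notational clash just flagged, the faithful transcription of the numerical constants $C_1, C_2$ (and of the harmless constants $300$, $4$, $2$) from the thresholds in Rauhut's statement, and the observation that relaxing a pair of lower bounds to a single lower bound given by their sum loses nothing for the purpose of a sufficient condition.
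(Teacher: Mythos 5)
Your overall strategy --- obtaining \cref{corr:rip_sample_complexity_bos} as a direct specialization of \cite[Theorem 8.4]{rauhut2010compressive} --- is the same one the paper takes (the paper states the corollary without writing out the derivation). However, your rendering of Rauhut's theorem is not faithful, and the inaccuracy hides exactly the step that carries the content of the corollary. Theorem 8.4 of Ref.~\cite{rauhut2010compressive} does not supply two explicit lower bounds on $M$ with the constants $103\,140$ and $2\,736$ and the factors $\ln(300s)\ln(4D)$ and $\ln(2/\delta)$; it requires
\begin{align}
\frac{M}{\ln(10M)} \;\geq\; C\,\Delta^{-2}K^2 s \ln^2(100 s)\ln(4D)
\qquad\text{and}\qquad
M \;\geq\; \tilde{C}\,\Delta^{-2}K^2 s \ln(7/\delta),
\end{align}
with constants bounded by $C \leq 17\,190$ and $\tilde{C} \leq 456$. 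The first condition is \emph{implicit} in $M$ (the unknown appears on both sides through $\ln(10M)$), the sparsity-dependent logarithm is squared, and the confidence term reads $\ln(7/\delta)$ rather than $\ln(2/\delta)$. The nontrivial work in passing to the corollary is precisely to convert the implicit condition into an explicit sufficient lower bound on $M$, absorbing the $\ln(10M)$ factor and the change of argument from $7$ to $2$ into enlarged constants; this is where the factor of $6$ relating $17\,190$ to $103\,140$ and $456$ to $2\,736$ must come from, and none of it appears in your argument. By attributing the final thresholds and final constants directly to the cited theorem, you assume the conclusion at the only place where something has to be checked. The remaining steps --- the observation that $K \geq 1$, the dictionary between the restricted-isometry constant $\Delta$ and the failure probability $\delta$, and the replacement of ``$M \geq T_1$ and $M \geq T_2$'' by ``$M \geq T_1 + T_2$'' --- are all correct but were never the issue.

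A smaller point: you (like the corollary as printed) write a single factor $\ln(300 s)$, whereas Rauhut's bound carries $\ln^2(100 s)$ and \cref{alg:classical_representation} accordingly uses $\log^2(300 s)$. A single logarithm in $s$ cannot be extracted from Theorem 8.4 by adjusting constants, so the squared logarithm should be retained; the unsquared version in the corollary statement is best read as a typo rather than as something your proof could establish.
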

We see that we, indeed, get the desired dependence $\tilde{O}(s)$.

So far, we only considered the setting in which all the observations we obtain are perfect. 
In reality, however, the observations can come with errors, which means we observe $\hat{f}_i = f(\xx_i) + \eta_i$ for some error $\eta_i$. In this case, the condition $\mathrlap{\hat{\ff \,}}\hphantom{\ff} = A \cc$ might not be satisfiable, which in turn means that the $\ell^1$ optimization problem of \cref{eqn:l1_optimization_problem} cannot be solved. A relaxation of the problem, 
\begin{align}\label{eqn:l1_optimization_problem_with_errors}
    \begin{split}
    \min\mathstrut &\ \lVert \cc \rVert_1 ,\\
    \text{ subject to}&\  \lVert A \cc - \mathrlap{\hat{\ff \,}}\hphantom{\ff} \rVert_2 \leq \kappa,
    \end{split}
\end{align}
can, however, remedy this issue. The parameter $\kappa$ controls how much \enquote{slack} the solution can have, and given it is chosen suitably relative to the magnitude of the errors, the solution $\cc$ can be recovered with good accuracy even from noisy observations~\cite{boche2015survey}. The basic approach presented here has found numerous generalizations and improvements across the compressed sensing literature.

So far, the solution strategy for compressed sensing reconstruction was relaxing the problem to the convex optimization \cref{eqn:l1_optimization_problem} or, in the noisy setting, \cref{eqn:l1_optimization_problem_with_errors}. 
Next to that, a large family of greedy strategies for compressed sensing reconstruction exist. These algorithms are generally much easier to implement and offer very similar performance guarantees. The most prototypical example of these is \emph{iterative hard thresholding}~\cite{blumensath2009iterative}, which is both conceptually simple and still achieves surprising performance. The algorithm consists of starting from a candidate $\cc_0 = 0$ and then iteratively updating with 
the 
rule
\begin{align}\label{eqn:iht_update_rule}
    \cc_{t+1} = \calT_s[ \cc_t + A^{\dagger} (\mathrlap{\hat{\ff \,}}\hphantom{\ff}  - A \cc_t) ],
\end{align}
where $\calT_s$ is the \emph{thresholding operator} that cuts off all entries of the given vector except for the $s$ largest in absolute value. Essentially, iterative hard thresholding can be seen as gradient descent on the objective $\lVert \mathrlap{\hat{\ff \,}}\hphantom{\ff}  - A \cc \rVert_2^2$, interleaved with projections onto the set of sparse solutions. The greedy optimization algorithms for compressed sensing have also been extensively studied and improved. For a broader review of 
compressive sensing algorithms, see 
Ref.~\cite{foucart2013mathematical}.

\section{Tomography for parametrized quantum states} \label{sec:tomography of parametrized quantum states}

In this section, we explain how to perform tomographic procedures on parametrized quantum states to obtain approximate classical representations $\hat{\rho}(\cdot)$ of parametrized quantum states $\rho(\cdot)$
with rigorous error guarantees. 

\subsection*{Algorithm}
Our approach brings techniques from compressed sensing to the quantum realm, and as such generalizes the approaches presented in the preceding section. There, the goal was to reconstruct a scalar signal $f(\cdot)$. In the quantum case, we aim to find a good classical approximation of a parametrized quantum state
\begin{align}
    \rho(\xx) = \sum_{k\in \Lambda} \alpha_k \varphi_k(\xx).
\end{align} 

In the scalar case, the reconstruction has been obtained from observations of the signal at different parameter values.
To reconstruct a parametrized quantum state, we can proceed in a similar fashion.
We randomly sample $M$ parameter values $\xx_i \in \calX$ according to the orthogonality measure $\mu(\cdot)$ of the orthonormal system $\{ \varphi_k(\cdot) \}$ (see \cref{sec:parametrized quantum states}). 
Analogous to the scalar case, we can arrange the actual states in an operator-valued vector $\rrho=(\rho(\xx_1), \dots, \rho(\xx_M))$ such that
\begin{align}
    A \aalpha = \rrho \, ,
\end{align}
with 
\begin{align}
    \aalpha_k &= \alpha_k, \\
    A_{ik} &= \varphi_k(\xx_i) \, , \label{eqn:sec_ToP_measurement_matrix}
\end{align}
where the measurement matrix $A \in \bbC^{M \times D}$. Note that we can always write the basis functions $\varphi_k(\cdot)$ in a form that ensures that the index $k$ goes $k=1, 2, 3, \dots$, which we assume here for ease of notation. 
The fact that we sample the $\xx_i$ according to the measure $\mu(\cdot)$ for which the $\varphi_k(\cdot)$ form an orthonormal basis ensures we can use \cref{corr:rip_sample_complexity_bos} to establish a recovery guarantee later.

At the parameter values $\{\xx_i\}_{i=1}^M$, we perform tomographic procedures with $N(\epsilon_i, \delta_i, n)$ shots each. These constitute operator-valued {observations} $\hat{\rho}(\xx_i)$ which approximate the true quantum states $\rho(\xx_i)$. 
We also arrange the observations in an operator-valued vector $\hat{\rrho} = (\hat{\rho}(\xx_1), \dots, \hat{\rho}(\xx_M) )$. The fact that we use a tomographic procedure means that the approximation is not perfect, which is why we need to write
\begin{align}
    \hat{\rrho} = \rrho + \eeta = A \aalpha + \eeta.
\end{align}
Here $\eeta$ is an operator-valued vector that captures the deviations of observations $\hat{\rho}(\xx_i)$ from the true states $\rho(\xx_i)$. A guarantee for the control of the magnitude of these deviations $\tvert{\eta_i}_{\calO}$ is inherited from the tomographic procedure used to construct the approximations $\hat{\rho}(\xx_i)$. 
Here, we already see that the quantum setting holds more challenges than the scalar one, as there is a multitude of possible distance measures relative to which the error operators $\eta_i$ are considered \enquote{small}. 

One might now be temped to think that a generalization of the standard approaches for compressed sensing with noise presented in the preceding section would be straightforward. One can of course replace the norms $\lVert \cdot \rVert_{1/2}$ in either the optimization-based approach of \cref{eqn:l1_optimization_problem_with_errors} or the update rule of iterative hard thresholding given in \cref{eqn:iht_update_rule} with their induced semi-norm counterparts $\tvert{\cdot}_{\calO,1/\calO,2}$. Executing the algorithms would then, however, necessitate to hold the full vector $\aalpha$ in memory, which, as it is a vector of operators, is very inefficient. It would also require us to compute the norm $\tvert{\cdot}_{\calO,1/\calO,2}$ explicitly, which can also be numerically intractable, for example when the induced semi-norm is equal to the trace norm. 

We circumvent this problem by splitting the recovery procedure in two parts: one where we identify the sparse support $S$ of a parametrized quantum state (see \cref{sec:identification of a sparse support}) and a recovery procedure that takes this set as input, both using the same data. This section is devoted to showing that upon input of a set $S$, we can recover the parametrized quantum state on this support with bounded error. Note that this also subsumes the non-sparse case where $S = \Lambda$. At the end of this section, we show that in this particular case, the sample complexity of our algorithm can be further improved.

We now assume that we are given a set $S$ such that $\rho(\cdot)$ is $(\gamma_{\ell^1},s)$-sparse with respect to $S$. 
Given $S$, we can approximate the operator valued $\alpha_k$ with $k \in S$ by applying the pseudo-inverse of $A_S$ to the linear, operator valued equation $A\hat{\aalpha} = \hat{\rrho}$, obtaining
\begin{align}
    \hat{\aalpha} &= A_S^+ \left( A\aalpha + \eeta \right) \\
    \nonumber
    &= \aalpha_S + A_S^+ A_{\bar{S}} \aalpha_{\bar{S}} + A_S^+ \eeta \, . \label{eqn:sec_ToP_alpha_hat}
\end{align}
For the second equality, we have used that $A_S \xx_{S'}=0$ for disjoint sets $S, S'$. If the number of measurements $M$ is chosen sufficiently large (more on this in the error analysis), one can guarantee with high probability that (i) $A_S$ is injective, such that $A_S^+A_S = \Id_S$, which we also used for the second equality and (ii) that combined with estimates on $\eeta$ and $\aalpha_{\bar{S}}$, we are able to control the error $\tvert{\rho(\cdot)- \hat{\rho}_S(\cdot)}_{\calO,2} = \tvert{\aalpha - \hat{\aalpha}_S}_{\calO,2}$. In the limit $s \to D$ and $\tvert{\eeta}_{\calO,p} \to 0$ we recover the original parametrized state $\hat{\rho}(\cdot)=\rho(\cdot)$. We summarize this procedure in \cref{alg:classical_representation}.

\begin{algorithm}[H]
\caption{Sparse recovery of a parametrized quantum state} \label{alg:classical_representation}
\begin{algorithmic}
\Require $S$ \Comment{Support of sparse coefficients}
\Require $K$ \Comment{Bound of orthonormal system}
\Require $\Delta$ \Comment{Attenuation of spillover}
\Require $\epsilon$ \Comment{Tolerance}
\Require $\delta$ \Comment{Failure probability}
\State $s \gets |S|$
\State $M \gets \frac{sK^2}{\Delta^2} \left(C_1 \log^2(300s)\log(4D) + C_2\log\frac{2}{\delta} \right)$
\State $\epsilon' \gets \epsilon/\sqrt{6}$
\State $\delta' \gets {\delta}/{2 M}$
\For{$i \in [M]$}
\State \textbf{sample} $\xx_i \sim \mu$ \Comment{Sample from orthogonality measure}
\State $\hat\rho_i \gets \textbf{tomographic procedure}(\rho(\xx_i)), \epsilon', \delta')$ \\ \Comment{Perform tomographic procedure}% with parameters $\epsilon'$, $\delta'$}
\State $(A_{i, \kk})_{\kk} \gets (\varphi_{\kk}(\xx_i))_{\kk}$ \Comment{Construct measurement matrix}
\EndFor
\State \textbf{compute} $A_S^{+}$ \Comment{Pseudoinverse}
\State \textbf{compute} $\hat\aalpha_S = A_S^{+} \hat\rrho$ \Comment{$s$-sparse vector}
\Ensure $\hat\rho_S(\cdot) = \sum_{\kk \in S} \hat\alpha_{\kk} \varphi_{\kk}(\cdot)$ \Comment{$s$-sparse approximation}
\end{algorithmic}
\end{algorithm}

\subsection*{Error analysis}

To bound the error between parametrized quantum state and classical approximation, we start from \cref{eqn:sec_ToP_alpha_hat} to obtain
\begin{align}
    \tvert{\rho(\cdot) - \hat{\rho}_S(\cdot)}_{\calO,2} 
    &= \tvert{\aalpha_{\bar{S}} - A_S^+A_{\bar{S}}\aalpha_{\bar{S}} - A_S^+ \eeta}_{\calO,2} \\
    \nonumber
    \begin{split}
    \nonumber
        &\leq \tvert{\aalpha_{\Bar{S}}}_{\calO,2} + \tvert{A_S^+A_{\bar{S}}\aalpha_{\bar{S}}}_{\calO,2}  \\
    &\phantom{\leq} \qquad + \norm{A_S^+}_{\infty} \tvert{\eeta}_{\calO,2} .
    \end{split} \label{eqn:sec_ToP_error_terms}
\end{align}
To provide a rigorous error analysis for \cref{alg:classical_representation}, we need to establish control over the quantities \smash{$\tvert{A_S^+A_{\bar{S}} \aalpha_{\bar{S}}}_{\calO,2}$}, \smash{$\norm{A_S^+}_{\infty}$} and \smash{$\tvert{\eeta}_{\calO,2}$} while ensuring injectivity of $A_S$. The term \smash{$\tvert{\aalpha_{\Bar{S}}}_{\calO,2}$} can be bounded by the $\ell^2$ sparsity constant $\gamma_{\ell^2}$ and cannot be avoided in general.
We start by observing a bound on $\tvert{\eeta}_{\calO,2}$. 
\begin{lemma}[Bound on the deviations] \label{lem:sec_ToP_observation_deviation_bound}
    If observations $\hat{\rrho} = (\hat{\rho}(\xx_1), \dots, \hat{\rho}(\xx_M))$ are constructed from true states $\rrho = (\rho(\xx_1), \dots, \rho(\xx_M))$ at points $\{\xx_i\}_{i=1}^M$ using an $(\epsilon_i, \delta_i, n)$ tomographic procedure, then the deviations $\eeta = \hat{\rrho}-\rrho$ are bounded as 
    \begin{align}
    \lVert \eeta \rVert_{\calO,p}  &\leq \norm{\eepsilon}_p \,,
\end{align}
where $\eepsilon = (\epsilon_1, \dots, \epsilon_M)$, with probability at least $1-\sum_{i=1}^M \delta_i$. 
\end{lemma}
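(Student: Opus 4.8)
The plan is to reduce the statement to a union bound over the $M$ invocations of the tomographic procedure, followed by a coordinatewise comparison inside the induced $\ell^p$ semi-norm. First I would unpack \cref{def:tomographic_procedure}: running an $(\epsilon_i, \delta_i, n)$ tomographic procedure on $\rho(\xx_i)$ produces $\hat\rho(\xx_i)$ with $\bbP[\lVert \rho(\xx_i) - \hat\rho(\xx_i)\rVert_{\calO} \leq \epsilon_i] \geq 1 - \delta_i$; writing $\eta_i \coloneqq \hat\rho(\xx_i) - \rho(\xx_i)$, this says $\lVert \eta_i\rVert_{\calO} \leq \epsilon_i$ except on a failure event of probability at most $\delta_i$. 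A union bound over $i \in [M]$ — which needs only sub-additivity of probability, not independence of the runs — then shows that with probability at least $1 - \sum_{i=1}^M \delta_i$ the bounds $\lVert \eta_i\rVert_{\calO} \leq \epsilon_i$ hold simultaneously for all $i$. I would condition on this good event for the remainder of the argument.

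Next I would translate the bounds on the individual semi-norms $\lVert \eta_i\rVert_{\calO}$ into a bound on $\tvert{\eeta}_{\calO,p}$. The key observation is that for \emph{any} fixed $O \in \calO$, \cref{def:induced_semi-norm} gives $|\Tr[O\eta_i]| \leq \sup_{O' \in \calO}|\Tr[O'\eta_i]| = \lVert \eta_i\rVert_{\calO} \leq \epsilon_i$, i.e. the estimate $|\Tr[O\eta_i]| \leq \epsilon_i$ holds uniformly in $O$. Substituting this into \cref{def:sec_NoA_semi-norm_vectors_of_operators} yields, for $p<\infty$,
\begin{align}
\tvert{\eeta}_{\calO,p} = \sup_{O \in \calO}\left(\sum_{i=1}^M |\Tr[O\eta_i]|^p\right)^{1/p} \leq \left(\sum_{i=1}^M \epsilon_i^p\right)^{1/p} = \norm{\eepsilon}_p,
\end{align}
and the case $p = \infty$ is identical, with the sum over $i$ replaced by a maximum. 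This completes the proof.

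The only place that needs a little care — and what I would flag as the conceptual rather than technical obstacle — is that the supremum defining $\tvert{\cdot}_{\calO,p}$ is taken with a single observable $O$ shared across all $M$ coordinates, so one is not permitted to optimize $O$ coordinate by coordinate. This is precisely why the uniform-in-$O$ estimate $|\Tr[O\eta_i]| \leq \epsilon_i$ (as opposed to some $O$-dependent bound) is the right quantity to extract from the tomographic guarantee; once it is in hand, the $\ell^p$ bound follows by monotonicity of $t \mapsto t^p$. Beyond this observation the argument is entirely routine, so I would not expect any genuine difficulty.
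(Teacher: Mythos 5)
Your proof is correct and follows the same route as the paper's: apply the per-point guarantee of the tomographic procedure, combine the failure probabilities via a union bound, and pass from the individual bounds $\lVert \eta_i \rVert_{\calO} \leq \epsilon_i$ to the vector semi-norm bound. The only difference is that you spell out the final coordinatewise step (the uniform-in-$O$ estimate $|\Tr[O\eta_i]| \leq \epsilon_i$ inserted into \cref{def:sec_NoA_semi-norm_vectors_of_operators}), which the paper leaves implicit, and you correctly observe that the union bound needs no independence of the runs.
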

Note that for constant $\epsilon_i = \epsilon$, one obtains
\begin{align}
    \tvert{\eeta}_{\calO,p} \leq M^{\frac{1}{p}} \epsilon \, .
\end{align}
\begin{proof}
We know that the entries of $\eeta$ fulfill
\begin{align}
    \lVert \eta(\xx_i) \rVert_{\calO} = \lVert \hat\rho(\xx_i) - \rho(\xx_i) \rVert_{\calO} \leq \epsilon_i
\end{align}
with probability greater than $1-\delta_i$ by the definition of a tomographic procedure. As such, the desired result holds with probability at least
\begin{align}
    \prod_{i=1}^M (1- \delta_i ) \geq 1 - \sum_{i=1}^M \delta_i
\end{align}
by the union bound. 
\end{proof}

Next, we give a bound for the term \smash{$\tvert{A_S^+A_{\bar{S}}\aalpha_{\bar{S}}}_{\calO,2}$} which quantifies the \enquote{spillover} from imperfect sparsity into our estimate.
\begin{lemma}[Imperfect sparsity]\label{lem:sec_ToP_approximately_sparse_error_term}
    If $\tvert{\aalpha}_{\calO, 1} \leq \gamma_{\ell^1}$ and $\Delta_{2s}(A/\sqrt{M}) \leq \Delta/2 \leq 1/2$, then
    \begin{align}
        \tvert{A_S^+A_{\bar{S}} \aalpha_{\bar{S}}}_{\calO,2} \leq \Delta \gamma_{\ell^1} \, .
    \end{align}
\end{lemma}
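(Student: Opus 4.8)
The plan is to reduce the operator-valued estimate to a purely scalar compressed-sensing bound by testing against a single fixed observable, and then run the classical restricted-isometry argument for $\ell^1$-recovery essentially verbatim.

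First I would fix an arbitrary $O \in \calO$ and apply the linear functional $\Tr[O\,\cdot\,]$ componentwise to the operator-valued vector $A_S^+ A_{\bar S}\aalpha_{\bar S}$. Since $A_S^+ A_{\bar S}$ is an ordinary scalar matrix, it commutes with this functional, so the $O$-image of $A_S^+A_{\bar S}\aalpha_{\bar S}$ equals $A_S^+ A_{\bar S}\,(\aalpha_{\bar S})_O$, where $(\aalpha_{\bar S})_O = (\Tr[O\alpha_k])_{k\in\bar S}$ is a genuine vector in $\bbC^{|\bar S|}$. By \cref{def:sec_NoA_semi-norm_vectors_of_operators}, proving the lemma amounts to showing $\norm{A_S^+ A_{\bar S}\,(\aalpha_{\bar S})_O}_2 \le \Delta\,\gamma_{\ell^1}$ uniformly in $O$. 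The input side is immediate: for every $O$ one has $\norm{(\aalpha_{\bar S})_O}_1 = \sum_{k\in\bar S}|\Tr[O\alpha_k]| \le \sum_{k\in\Lambda}|\Tr[O\alpha_k]| \le \tvert{\aalpha}_{\calO,1}\le\gamma_{\ell^1}$, so it suffices to prove the scalar statement: for every $x\in\bbC^{|\bar S|}$, $\norm{A_S^+ A_{\bar S}\,x}_2 \le \Delta\norm{x}_1$.

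For the scalar statement I would pass to the normalized matrix $\tilde A \coloneqq A/\sqrt{M}$; the factors of $\sqrt{M}$ cancel, so $A_S^+ A_{\bar S} = \tilde A_S^+ \tilde A_{\bar S} = (\tilde A_S^\dagger \tilde A_S)^{-1}\tilde A_S^\dagger \tilde A_{\bar S}$, and by hypothesis $\tilde A$ has RIP of order $2s$ with constant at most $\Delta/2 \le 1/2$. The RIP of order $s$ (implied by order $2s$) makes $\tilde A_S$ injective and gives $\norm{(\tilde A_S^\dagger \tilde A_S)^{-1}}_\infty \le 1/(1-\Delta/2) \le 2$. To control $\norm{\tilde A_S^\dagger \tilde A_{\bar S}\,x}_2 = \norm{\Pi_S \tilde A^\dagger \tilde A\,\hat x}_2$, with $\hat x$ the zero-padding of $x$ to the index set $\Lambda$ (supported on $\bar S$), I would partition $\bar S$ into blocks $T_1,\dots,T_L$ of size at most $s$, write $\hat x = \sum_{\ell} \hat x_{T_\ell}$, and invoke the standard RIP near-orthogonality bound — for disjoint index sets $U,V$ with $|U\cup V|\le 2s$ one has $\norm{\Pi_U \tilde A^\dagger \tilde A v}_2 \le \Delta_{2s}(\tilde A)\norm{v}_2$ whenever $v$ is supported on $V$ (see, e.g., \cite{foucart2013mathematical}) — applied with $U = S$ and $V = T_\ell$. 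Summing over $\ell$ and using the crude bound $\norm{\cdot}_2 \le \norm{\cdot}_1$ on each block gives $\norm{\Pi_S\tilde A^\dagger\tilde A\,\hat x}_2 \le \tfrac{\Delta}{2}\sum_{\ell} \norm{\hat x_{T_\ell}}_2 \le \tfrac{\Delta}{2}\sum_{\ell}\norm{\hat x_{T_\ell}}_1 = \tfrac{\Delta}{2}\norm{x}_1$. Multiplying by the bound on $\norm{(\tilde A_S^\dagger\tilde A_S)^{-1}}_\infty$ yields $\norm{A_S^+ A_{\bar S}x}_2 \le 2\cdot\tfrac{\Delta}{2}\norm{x}_1 = \Delta\norm{x}_1$, and combining with the input bound and taking the supremum over $O\in\calO$ proves the lemma.

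\textbf{Main obstacle.} The only genuinely conceptual step is the first one: noticing that $A_S^+ A_{\bar S}$ is a purely classical matrix, so the operator structure factors out cleanly under $\Tr[O\,\cdot\,]$, and that the $\calO$-induced $\ell^1$ semi-norm of the full coefficient vector dominates the ordinary $\ell^1$ norm of the tail once a single $O$ has been fixed. Everything after that is the textbook RIP-to-tail-bound computation; in particular, using the crude per-block estimate $\norm{\cdot}_2\le\norm{\cdot}_1$ — rather than the sharper $\norm{\hat x_{T_{\ell+1}}}_2 \le \norm{\hat x_{T_\ell}}_1/\sqrt{s}$ used when sparsity is measured by a normalized $\ell^1$ norm — is precisely what matches the un-normalized sparsity parameter $\gamma_{\ell^1}$ appearing in the statement.
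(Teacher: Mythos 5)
Your proposal is correct and follows essentially the same route as the paper's proof: fix an observable to reduce to a scalar vector, partition $\bar S$ into blocks of size at most $s$, apply the RIP near-orthogonality bound (Proposition 6.3 of Ref.~\cite{foucart2013mathematical}) blockwise together with the bound $\lVert (A_S^\dagger A_S)^{-1}\rVert_\infty \le 1/(1-\Delta_{2s})$, and control the blockwise $\ell^2$ norms by the $\ell^1$ norm. The only cosmetic difference is that the paper packages the inverse factor and the cross term into a single estimate $\lVert A_S^+ A_{S_i}\rVert_\infty \le \Delta_{2s}/(1-\Delta_{2s})$ before summing, whereas you bound the two factors separately; the resulting constants are identical.
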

\begin{proof}
    We give a sketch of the proof, for details see \cref{sec:appendix_proof_sparse_spillover}.
    For two disjoint sets $S$, $S'$ of equal cardinality $s$, we can control \smash{$\lVert A_S^\dagger A_{S'} \rVert_{\infty} \leq \Delta_{2s}(A)$}~\cite{foucart2013mathematical}. Armed with this result for disjoint sets of equal cardinality, one then splits $\bar{S}$ into sets of cardinality at most $s$ and then bounds the $\ell^2$ with the $\ell^1$ norm to obtain the desired result.
\end{proof}

Lastly, we state an error bound on the term $\norm{A_S^+}_{\infty}$.
\begin{lemma}[Error bound to $\norm{A_S^+}_{\infty}$] \label{lem:sec_ToP_inf_norm_bound}
    If $\Delta_s(A/\sqrt{M}) \leq \Delta \leq 1/2$, then 
    \begin{align}
        \norm{A_S^+}_{\infty} \leq \frac{1}{\sqrt{M}}\frac{\sqrt{1+\Delta}}{1-\Delta} \leq \sqrt{\frac{6}{M}} \, .
    \end{align}
\end{lemma}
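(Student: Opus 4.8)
The plan is to combine submultiplicativity of the operator norm $\norm{\cdot}_\infty = \norm{\cdot}_{2\to 2}$ with the singular-value reading of the restricted isometry property. First I would note that $\Delta_s(A/\sqrt{M}) \le \Delta < 1$ already forces $A_S$ to be injective, so the pseudo-inverse $A_S^+ = (A_S^{\dagger}A_S)^{-1}A_S^{\dagger}$ is well-defined and $A_S^+ A_S = \Id_S$ on the coordinate subspace indexed by $S$. Then one writes
\begin{align}
    \norm{A_S^+}_\infty \le \norm{(A_S^{\dagger}A_S)^{-1}}_\infty \, \norm{A_S^{\dagger}}_\infty .
\end{align}

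Next I would turn the RIP into spectral bounds on $A_S^{\dagger}A_S$. Using the equivalent formulation recalled in the excerpt, $\norm{\Pi_S - (A/\sqrt{M})_S^{\dagger}(A/\sqrt{M})_S}_\infty \le \Delta$, the Hermitian operator $A_S^{\dagger}A_S/M$ restricted to the $s$-dimensional range of $\Pi_S$ has all eigenvalues in $[1-\Delta, 1+\Delta]$. Hence $\norm{(A_S^{\dagger}A_S)^{-1}}_\infty \le \frac{1}{M(1-\Delta)}$, while $\norm{A_S^{\dagger}}_\infty = \norm{A_S}_\infty \le \sqrt{M(1+\Delta)}$, which is just the upper RIP inequality $\norm{A\vv}_2^2 \le M(1+\Delta)\norm{\vv}_2^2$ applied to vectors $\vv$ supported on $S$. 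Plugging these in gives
\begin{align}
    \norm{A_S^+}_\infty \le \frac{\sqrt{M(1+\Delta)}}{M(1-\Delta)} = \frac{1}{\sqrt{M}}\frac{\sqrt{1+\Delta}}{1-\Delta},
\end{align}
and since $\Delta \mapsto \frac{\sqrt{1+\Delta}}{1-\Delta}$ is increasing on $[0,1/2]$, it is at most $\frac{\sqrt{3/2}}{1/2} = \sqrt{6}$, yielding $\norm{A_S^+}_\infty \le \sqrt{6/M}$.

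There is no real obstacle here: the estimate is essentially one line once the pieces are in place. The only point requiring a little care is the bookkeeping between $A_S = A\Pi_S$ as an $M \times D$ matrix and its restriction to the $s$ active coordinates — i.e., making sure that $(A_S^{\dagger}A_S)^{-1}$ is understood as the inverse on $\operatorname{range}(\Pi_S)$ (where $A_S^{\dagger}A_S$ is positive definite by the lower RIP bound), so that the pseudo-inverse formula and the identity $A_S^+ A_S = \Id_S$ are legitimate. With that convention fixed, the RIP bounds on the extreme eigenvalues and singular values feed directly into the submultiplicative estimate above.
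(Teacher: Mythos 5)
Your proposal is correct and follows essentially the same route as the paper's proof: the submultiplicative bound $\norm{A_S^+}_{\infty} \le \norm{(A_S^{\dagger}A_S)^{-1}}_{\infty}\norm{A_S^{\dagger}}_{\infty}$, the RIP-derived spectral bounds placing the eigenvalues of $A_S^{\dagger}A_S/M$ in $[1-\Delta, 1+\Delta]$, and the monotonicity argument evaluated at $\Delta = 1/2$ to obtain $\sqrt{6/M}$. Your added care about interpreting $(A_S^{\dagger}A_S)^{-1}$ on the range of $\Pi_S$ is a harmless clarification of the same argument.
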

\begin{proof}
    By definition, the bound on the RIP constant $\Delta_s(A/\sqrt{M})$ ensures that the singular values of all submatrices of size at most $s$ are in the interval $[1-\Delta, 1+\Delta]$.  Then, using the definition of the pseudo inverse for injective matrices, we find
    \begin{align}
        \norm{A_S^+}_{\infty} &= \norm{\left(A_S^\dagger A_S\right)^{-1} A_S^\dagger}_{\infty} \\
        &\leq \frac{1}{\sqrt{M}}\norm{\left(\frac{A_S^\dagger}{\sqrt{M}} \frac{A_S}{\sqrt{M}}\right)^{-1}}_{\infty} \norm{\frac{A_S^\dagger}{\sqrt{M}}}_{\infty}
        \nonumber\\
        &\leq \frac{1}{\sqrt{M}} \frac{\sqrt{1+\Delta}}{1-\Delta} \, .
        \nonumber
    \end{align}
    The full statement follows from the fact that this expression is monotonically increasing in $\Delta$ and evaluating at $\Delta = 1/2$.
\end{proof}

Combining these three lemmas, we arrive at the desired error bound for the quantity $\tvert{\rho(\cdot)-\hat{\rho}_S(\cdot)}_{\calO,2}$ and therefore a recovery guarantee for our \cref{alg:classical_representation}.

\begin{theorem}[Sample complexity bound] \label{thrm:sec_ToP_error_bound}
Let $\rho(\cdot)$ be a parametrized quantum state with coefficient vector $\aalpha$ relative to a bounded orthonormal system with constant $K$ that is $\gamma_{\ell^1}$- and $\gamma_{\ell^2}$-sparse with respect to a set of indices $S$ of cardinality $s$. The output of \cref{alg:classical_representation} satisfies
\begin{align}\label{eqn:algorithm_output_abstract_guarantee}
    \bbP\big[\tvert{\rho(\cdot) - \hat\rho_S(\cdot)}_{\calO,2} &\leq \gamma_{\ell^2} + \Delta \gamma_{\ell^1} + \epsilon \big] \geq 1 - \delta
\end{align}
for a parameter $0 < \Delta \leq 1$ using a number
\begin{align}
    M = \frac{s K^2}{\Delta^2} \left( C_1 \ln 300 s \ln 4 D + C_2 \ln \frac{2}{\delta}\right)
\end{align}
of randomly sampled parameters and a total number of samples
\begin{align}
    N(\epsilon, \delta, n) = M \, T\left( \frac{\epsilon}{\sqrt{6}}, \frac{\delta}{2M}, n\right).
\end{align}
The procedure furthermore guarantees that $\Delta_{3s}(A/\sqrt{M}) \leq 1/2$. The values of the constants do not exceed $C_1 \leq 103\,140$ and $C_2 \leq 2\,736$.
\end{theorem}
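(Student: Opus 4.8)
The plan is to combine the triangle-inequality decomposition \cref{eqn:sec_ToP_error_terms} — which already bounds $\tvert{\rho(\cdot)-\hat\rho_S(\cdot)}_{\calO,2}$ by $\tvert{\aalpha_{\bar S}}_{\calO,2} + \tvert{A_S^+ A_{\bar S}\aalpha_{\bar S}}_{\calO,2} + \norm{A_S^+}_\infty\tvert{\eeta}_{\calO,2}$ — with the three lemmas of this section and the sampling bound \cref{corr:rip_sample_complexity_bos}. Everything genuinely ``quantum'' (the passage from ordinary vector norms to the operator-valued induced semi-norms) has already been pushed into \cref{lem:sec_NoA_parseval}, the submultiplicativity \cref{eqn:sec_NoA_submultiplicativity}, and those three lemmas, so what remains is a matter of allocating the probability budget and tracking constants.

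\emph{Step 1 (the RIP event).} The matrix $A$ with $A_{ik}=\varphi_k(\xx_i)$ is the random sampling matrix of the bounded orthonormal system $\{\varphi_k\}$ with constant $K$, since the $\xx_i$ are drawn i.i.d.\ from its orthogonality measure $\mu$. I would invoke \cref{corr:rip_sample_complexity_bos} at sparsity level $3s$, RIP target $\Delta/2\le 1/2$ (valid because $\Delta\le 1$) and failure probability $\delta/2$, and verify that the displayed $M$ — with the generous constants $C_1\le 103\,140$, $C_2\le 2\,736$ and the extra $\log$-factor of \cref{alg:classical_representation} — is at least the sample count the corollary requires; this is the routine estimate in which the factors $3$ and $4$ and the passages from $\ln(300 s)$ to $\ln(900 s)$ and from $\ln(2/\delta)$ to $\ln(4/\delta)$ are absorbed. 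On the resulting event, of probability at least $1-\delta/2$, monotonicity of the RIP constant in the sparsity level gives $\Delta_{3s}(A/\sqrt M)\le\Delta/2$, hence also $\Delta_{2s}(A/\sqrt M)\le\Delta/2\le 1/2$ and $\Delta_s(A/\sqrt M)\le 1/2$; in particular $A_S$ is injective, so $A_S^+=(A_S^\dagger A_S)^{-1}A_S^\dagger$ is well defined with $A_S^+ A_S=\Id_S$, which legitimizes \cref{eqn:sec_ToP_alpha_hat}, and $\Delta_{3s}(A/\sqrt M)\le 1/2$ is the stated bonus guarantee.

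\emph{Step 2 (the measurement event and conclusion).} \cref{alg:classical_representation} runs the black-box tomographic procedure at each of the $M$ points with tolerance $\epsilon'=\epsilon/\sqrt 6$ and failure probability $\delta'=\delta/(2M)$, so \cref{lem:sec_ToP_observation_deviation_bound} (with every $\epsilon_i=\epsilon'$) yields $\tvert{\eeta}_{\calO,2}\le\sqrt M\,\epsilon'$ on an event of probability at least $1-M\delta'=1-\delta/2$. A union bound over the two failure events gives probability at least $1-\delta$ for both holding, and I would argue on that intersection. There the three terms of \cref{eqn:sec_ToP_error_terms} are controlled as: (i) $\tvert{\aalpha_{\bar S}}_{\calO,2}\le\gamma_{\ell^2}$, which is precisely the $(s,\gamma_{\ell^2})$-sparsity hypothesis (equal to $\tvert{\rho(\cdot)-\rho_S(\cdot)}_{\calO,2}$ via \cref{lem:sec_NoA_parseval}); (ii) $\tvert{A_S^+ A_{\bar S}\aalpha_{\bar S}}_{\calO,2}\le\Delta\gamma_{\ell^1}$ by \cref{lem:sec_ToP_approximately_sparse_error_term}, whose hypotheses $\Delta_{2s}(A/\sqrt M)\le\Delta/2\le 1/2$ and the $\ell^1$-bound (the lemma needs only the tail $\tvert{\aalpha_{\bar S}}_{\calO,1}\le\gamma_{\ell^1}$) hold by Step~1 and the $(s,\gamma_{\ell^1})$-sparsity hypothesis; (iii) $\norm{A_S^+}_\infty\tvert{\eeta}_{\calO,2}\le\sqrt{6/M}\cdot\sqrt M\,\epsilon'=\sqrt 6\,\epsilon'=\epsilon$, using \cref{lem:sec_ToP_inf_norm_bound} (applicable as $\Delta_s(A/\sqrt M)\le 1/2$; here $\sqrt 6=\sqrt{1+\Delta}/(1-\Delta)$ at $\Delta=1/2$) and Step~2. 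Summing the three gives $\tvert{\rho(\cdot)-\hat\rho_S(\cdot)}_{\calO,2}\le\gamma_{\ell^2}+\Delta\gamma_{\ell^1}+\epsilon$ on the good event, and the total number of shots is $M$ parameter points each costing $T(\epsilon/\sqrt 6,\delta/(2M),n)$, i.e.\ $N(\epsilon,\delta,n)=M\,T(\epsilon/\sqrt 6,\delta/(2M),n)$.

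\emph{Main obstacle.} None of the steps is deep once the three lemmas are available; the only genuine care is in the constant-matching of Step~1 — checking that one single choice of $M$ simultaneously certifies $\Delta_{2s}(A/\sqrt M)\le\Delta/2$ (needed in \cref{lem:sec_ToP_approximately_sparse_error_term}), $\Delta_s(A/\sqrt M)\le 1/2$ (needed in \cref{lem:sec_ToP_inf_norm_bound}) and $\Delta_{3s}(A/\sqrt M)\le 1/2$ (the advertised bonus), and that the quoted numerical constants survive re-running \cref{corr:rip_sample_complexity_bos} at sparsity $3s$ and target $\Delta/2$. The clean way around this is to observe that $\Delta_{3s}(A/\sqrt M)\le\Delta/2$ implies all three inequalities, so a single application of the corollary does the whole job — which is how I sketched Step~1 above.
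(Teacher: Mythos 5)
Your proposal is correct and follows essentially the same route as the paper's own (much terser) proof: choose $M$ via \cref{corr:rip_sample_complexity_bos} so that $\Delta_{3s}(A/\sqrt{M})\le\Delta/2$ with probability $1-\delta/2$ (which by monotonicity of RIP constants covers the hypotheses of \cref{lem:sec_ToP_approximately_sparse_error_term} and \cref{lem:sec_ToP_inf_norm_bound}), apply \cref{lem:sec_ToP_observation_deviation_bound} with $\epsilon'=\epsilon/\sqrt{6}$ and $\delta'=\delta/(2M)$, and union-bound the two failure events before summing the three terms of \cref{eqn:sec_ToP_error_terms}. Your explicit remark that the stated constants must absorb the factor-of-$3$ in the sparsity level and the factor-of-$4$ from the target $\Delta/2$ is a point the paper's proof also leaves implicit, so it is a fair (and slightly more careful) rendering of the same argument.
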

\begin{proof}
    $M$ is chosen according to \cref{corr:rip_sample_complexity_bos} such that that $\Delta_{3s}(A/\sqrt{M}) \leq \Delta/2$ with probability at least $1-\delta_{\text{RIP}}$, such that \cref{lem:sec_ToP_approximately_sparse_error_term} and \cref{lem:sec_ToP_inf_norm_bound} apply. Those together with \cref{lem:sec_ToP_observation_deviation_bound} applied to \cref{eqn:sec_ToP_error_terms} give the desired result. Choosing $\epsilon' = \epsilon/\sqrt{6}$, $\delta' = \delta/2M$ and $\delta_{\text{RIP}} = \delta/2$ yields the desired statement.
\end{proof}   

Let us reflect on this result: We first observe that the sample complexity of our reconstruction is of order $O(s \log s \log D)$, which is much more efficient than full recovery in the practically relevant setting where $s\ll D$.
We observe that the error of our reconstruction is bounded by three separate contributions. The constant $\gamma_{\ell^2}$ quantifies how well the optimal set of indices $S$ approximates the true signal and is therefore unavoidable. The next term $\Delta \gamma_{\ell^1}$ stems from the possible spillover of contributions outside of $S$ into our estimate that stem from the fact that in compressed sensing, we can only control subsets of cardinality $s$. We can mitigate this term by making the parameter $\Delta$ smaller, at inverse quadratic cost. The last term, $\epsilon$, captures the inherent error of our tomographic procedure and can therefore also be reduced with a cost depending on the specific procedure chosen, usually this cost is also inverse quadratic in $\epsilon$. As such, \cref{thrm:sec_ToP_error_bound} provides a way for us to approximate $\rho(\cdot)$ as well as possible on a sparse support in a controlled way.

One might wonder why $M$ is chosen such that a RIP constant of order $3s$ is guaranteed, despite the fact that \cref{lem:sec_ToP_approximately_sparse_error_term} and \cref{lem:sec_ToP_inf_norm_bound} only require RIP constants of order $2s$ and $s$, respectively. The reason for this is that this ensures that the data we use for the sparse recovery can equivalently used in the support identification procedure introduced in the next section. 

\subsection*{Full recovery}

The preceding discussion and analysis pertains to the case of sparse recovery, where an approximation is only performed on a small subset $S$ of the whole set of basis functions $\Lambda$. Often, full recovery, \emph{i.e.},  the setting where $S = \Lambda$ and $s = D$, is still desired. This is for example the case when there is no known sparsity structure or simply when full knowledge of $\rho(\cdot)$ is wanted. In this case, the 
analysis of \cref{thrm:sec_ToP_error_bound} still applies and the constants $\gamma_{\ell^1} = \gamma_{\ell^2} = 0$ by construction. In this setting, however, the result we have used to impose a certain RIP constant on the measurement matrix $A$, \cref{corr:rip_sample_complexity_bos}, is sub-optimal as the additional sparsity considerations are not needed anymore. By using a different result, we can improve both the logarithmic dependencies as well as the constants to obtain the following guarantee for full recovery through the simplified \cref{alg:full_recovery}.
\begin{algorithm}[H]
\caption{Full recovery of a parametrized quantum state} \label{alg:full_recovery}
\begin{algorithmic}
\Require $K$ \Comment{Bound of orthonormal system}
\Require $\epsilon$ \Comment{Tolerance}
\Require $\delta$ \Comment{Failure probability}
\State $M \gets C D K^2 \log ({2D}/{\delta})$
\State $\epsilon' \gets \epsilon/\sqrt{6}$
\State $\delta' \gets {\delta}/{2 M}$
\For{$i \in [M]$}
\State \textbf{sample} $\xx_i \sim \mu$ \Comment{Sample from orthogonality measure}
\State $\hat\rho_i \gets \textbf{tomographic procedure}(\rho(\xx_i)), \epsilon', \delta')$ \\ \Comment{Perform tomographic procedure}
\State $(A_{i, \kk})_{\kk} \gets (\varphi_{\kk}(\xx_i))_{\kk}$ \Comment{Construct measurement matrix}
\EndFor
\State \textbf{compute} $A^{+}$ \Comment{Pseudoinverse}
\State \textbf{compute} $\hat\aalpha = A^{+} \hat\rrho$
\Ensure $\hat\rho(\cdot) = \sum_{\kk \in \Lambda} \hat\alpha_{\kk} \varphi_{\kk}(\cdot)$
\end{algorithmic}
\end{algorithm}

We obtain the following recovery guarantee.
\begin{theorem}[Full recovery guarantee]\label{thrm:sec_ToP_error_bound_full_recovery}
Let $\rho(\cdot)$ be a parametrized quantum state with coefficient vector $\aalpha$ relative to a bounded orthonormal system with constant $K$. The output of \cref{alg:full_recovery} satisfies
\begin{align}\label{eqn:algorithm_output_abstract_guarantee}
    \bbP\big[\tvert{\rho(\cdot) - \hat\rho_S(\cdot)}_{\calO,2} &\leq \epsilon \big] \geq 1 - \delta
\end{align}
using a number
\begin{align}
    M = CDK^2 \log \frac{2D}{\delta}
\end{align}
of randomly sampled parameters and a total number of samples
\begin{align}
    N(\epsilon, \delta, n) = M T\left( \frac{\epsilon}{\sqrt{6}}, \frac{\delta}{2M}, n\right).
\end{align}
The procedure furthermore guarantees that $\Delta_{3s}(A/\sqrt{M}) \leq 1/2$. The value of the constant does not exceed $C \leq 11$.
\end{theorem}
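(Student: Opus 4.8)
The plan is to reuse the architecture of the proof of \cref{thrm:sec_ToP_error_bound}, specialised to $S=\Lambda$, and to replace the sparse sampling bound \cref{corr:rip_sample_complexity_bos} by a direct control of the conditioning of the \emph{full} sampling matrix. Since $S=\Lambda$ we have $\bar S=\emptyset$, so $\aalpha_{\bar S}=0$ and one may take $\gamma_{\ell^1}=\gamma_{\ell^2}=0$. The error decomposition \cref{eqn:sec_ToP_error_terms} then collapses to a single term: writing $\hat\aalpha = A^{+}(A\aalpha+\eeta) = \aalpha + A^{+}\eeta$ — legitimate as soon as $A$ is injective, so that $A^{+}A=\Id_{D}$ — \cref{lem:sec_NoA_parseval} together with the submultiplicativity \cref{eqn:sec_NoA_submultiplicativity} gives
\begin{align}
    \tvert{\rho(\cdot)-\hat\rho(\cdot)}_{\calO,2} = \tvert{\aalpha-\hat\aalpha}_{\calO,2} = \tvert{A^{+}\eeta}_{\calO,2} \leq \norm{A^{+}}_{\infty}\,\tvert{\eeta}_{\calO,2}.
\end{align}
Applying \cref{lem:sec_ToP_observation_deviation_bound} with constant parameters $\epsilon_i=\epsilon'=\epsilon/\sqrt{6}$ and $\delta_i=\delta'=\delta/(2M)$ bounds $\tvert{\eeta}_{\calO,2}\leq\sqrt{M}\,\epsilon'$ with probability at least $1-M\delta'=1-\delta/2$, and \cref{lem:sec_ToP_inf_norm_bound} — whose proof uses only that the singular values of the relevant submatrix lie in $[1-\Delta,1+\Delta]$ — gives $\norm{A^{+}}_{\infty}\leq\sqrt{6/M}$ once the conditioning bound holds with $\Delta=1/2$. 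Multiplying the two estimates yields $\tvert{\rho(\cdot)-\hat\rho(\cdot)}_{\calO,2}\leq\sqrt{6}\,\epsilon'=\epsilon$, exactly as in \cref{thrm:sec_ToP_error_bound} but with no $\gamma$-terms.

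What is left is to show that the reduced budget $M=CDK^{2}\log(2D/\delta)$ already forces all $D$ singular values of $A/\sqrt{M}$ into $[1/2,3/2]$ with probability at least $1-\delta/2$; equivalently $\Delta_{D}(A/\sqrt{M})\leq 1/2$, which is what the statement calls $\Delta_{3s}(A/\sqrt{M})\leq 1/2$, since with $s=D$ no vector in $\bbC^{D}$ is more than $D$-sparse. Here I would \emph{not} route through \cref{corr:rip_sample_complexity_bos}, whose additional logarithmic factors are wasteful when $s=D$; instead I would write $A^{\dagger}A=\sum_{i=1}^{M}\aa_{i}\aa_{i}^{\dagger}$, where $\aa_{i}\in\bbC^{D}$ collects the $i$-th row $(\varphi_{k}(\xx_{i}))_{k\in\Lambda}$ of the sampling matrix. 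Each summand is rank-one positive semidefinite with $\norm{\aa_{i}}_{2}^{2}=\sum_{k}|\varphi_{k}(\xx_{i})|^{2}\leq DK^{2}$ by the bound on the orthonormal system, and $\bbE[\aa_{i}\aa_{i}^{\dagger}]=\Id_{D}$ by orthonormality of $\{\varphi_{k}\}$ under $\mu$ (the point $\xx_{i}$ being drawn from $\mu$). The matrix Chernoff inequality then controls the extreme eigenvalues of $\sum_{i}\aa_{i}\aa_{i}^{\dagger}$ around its mean $M\,\Id_{D}$: the lower-tail event $\lambda_{\min}\leq M/2$ and the upper-tail event $\lambda_{\max}\geq 3M/2$ each have probability at most $D\exp(-cM/(DK^{2}))$ for an explicit rate $c$, so choosing $M$ a suitable constant multiple of $DK^{2}\log(2D/\delta)$ makes the union of these two bad events have probability at most $\delta/2$, and in particular makes $A$ injective.

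The only real work is the constant bookkeeping in this last step: to land on a clean constant such as $C\leq 11$ one must use the sharp, asymmetric forms of the two matrix-Chernoff tails — the upper tail, with rate $(1+\Delta)\log(1+\Delta)-\Delta$ evaluated at $\Delta=1/2$, is the binding one — rather than a coarser symmetric Bernstein-type estimate, and combine the two tail probabilities together with the dimensional prefactor $D$ before solving for $M$. Given this RIP-type event, a final union bound with the $\delta/2$ failure probability of the $M$ tomographic procedures (\cref{lem:sec_ToP_observation_deviation_bound}) produces the overall success probability $1-\delta$, and the total number of samples is $N(\epsilon,\delta,n)=M\,T(\epsilon/\sqrt{6},\delta/(2M),n)$, as claimed. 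I expect no conceptual obstacle beyond this: the structure is identical to \cref{thrm:sec_ToP_error_bound}, with the improvement localised entirely in trading the bounded-orthonormal-system RIP corollary for a direct eigenvalue concentration bound.
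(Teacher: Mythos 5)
Your proof is correct and follows essentially the same route as the paper: the error analysis is exactly the specialization of \cref{thrm:sec_ToP_error_bound} to $S=\Lambda$ with $\gamma_{\ell^1}=\gamma_{\ell^2}=0$, and the conditioning of the full sampling matrix is the content of Theorem~12.12 of Ref.~\cite{foucart2013mathematical}, which the paper simply cites (it is proved there by precisely the matrix-concentration argument you sketch, via matrix Bernstein rather than matrix Chernoff, yielding $C=32/3\leq 11$). Your self-contained Chernoff derivation is a legitimate substitute: after the union bound over the two one-sided tails the binding rate is $c=\tfrac{3}{2}\log\tfrac{3}{2}-\tfrac{1}{2}\approx 0.108$, giving $M\geq (DK^2/c)\log(2D/\delta_A)$, i.e.\ $C\approx 9.3\leq 11$, so the constant bookkeeping also goes through.
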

\begin{proof}
Theorem~12.12 of Ref.~\cite{foucart2013mathematical} posits that when sampling in a bounded orthonormal system with constant $K$,
\begin{align}
    M = \frac{8}{3}\frac{D K^2}{\Delta^2} \log \frac{2D}{\delta_{A}}
\end{align}
samples are sufficient to guarantee that the singular values of $A/\sqrt{M}$ lie in the interval $[1-\Delta, 1+\Delta]$ with probability at least $1-\delta_A$. We now proceed as in the proof of \cref{thrm:sec_ToP_error_bound} but choose $\Delta = 1/2$, $\epsilon'=\epsilon/\sqrt{6}$, $\delta' = \delta/2M$ and $\delta_A = \delta/2$ to arrive at the desired statement. This yields a constant $C = 32/3 \leq 11$.
\end{proof}

\subsection*{Predicting expectation values}
On the level of the data, the approximation of $\rho(\cdot)$, $\hat\rho(\cdot)$, is given by the coefficients $\{ \hat\alpha_k \}_{k \in S}$. The $\hat\alpha_k$ themselves are given as linear combinations of observations,
\begin{align}
    \hat{\alpha}_k = \sum_{i=1}^M (A_S^+)_{k,i} \,\hat{\rho}(\xx_i).
\end{align}
We can use this data to predict the expectation value of an observable $O$ evaluated on $\rho(\xx)$ for all $\xx$.
\begin{align}
\nonumber
    \Tr[O\hat{\rho}(\xx)] &= \sum_{k \in S} \Tr[O \alpha_k] \varphi_k(\xx) \\
    \nonumber
    &= \sum_{k \in S} \sum_{i=1}^M (A_S^+)_{k,i} \, \varphi_k(\xx) \Tr[O\hat{\rho}(\xx_i)]  \\
    &= \sum_{i=1}^M m_i(\xx) \Tr[O\hat{\rho}(\xx_i)] . \label{eqn:coeffs_for_prediction}
\end{align}
The coefficients $m_i(\xx)$ are given as
\begin{align}
    m_i(\xx) \coloneqq \sum_{k \in S} (A_S^+)_{k,i} \varphi_k(\xx) .
\end{align}
The observations $\hat{\rho}(\xx_i)$ are classical representations obtained from a tomographic procedure, therefore there is a protocol to obtain expectation values $\Tr[O\hat{\rho}(\xx_i)]$. Overall, this is efficient as long as computing $\Tr[O\hat{\rho}(\xx_i)]$ is efficient. 
This is not the case for full tomography, where the observations $\hat{\rho}(\xx_i)$ are density matrices. For local Clifford classical shadows, however, it would be efficient for all observables for which the shadow protocol gives guarantees.

In principle, parametrized expectation values could also be computed without performing the recovery of $\rho(\cdot)$ in terms of $\hat\rho(\cdot)$ that we described it in this section. This is because
expectation values could also be computed from the observations $\hat{\rho}(\xx_i)$ directly. For a fixed observable $O$, the values $\Tr[O\hat{\rho}(\xx_i)]$ are noisy measurements of the signal 
\begin{align}
    f(\xx) &= \Tr[O\rho(\xx)] \\
    &= \sum_{k \in \Lambda} c_k(O) \varphi_k(\xx) .\nonumber
\end{align}
Therefore, the coefficients $c_k(O)$ could be approximately recovered from the linear system
\begin{align}
    A\hat{\cc}_O = \hat{\rrho}_O 
\end{align}
using compressed sensing techniques for scalar signals as presented in \cref{sec:short introduction to compressed sensing}. This, however, would require us to run a compressed sensing algorithm for every new observable we want to evaluate, incurring unnecessary overheads. 

Furthermore, this approach would not give rise to a meaningful tomography of the parametrized quantum state $\rho(\cdot)$ where we want explicit access to the operator-valued coefficients $\alpha_k$.
Our algorithm gives rise to an operationally meaningful classical representation of the parametrized state $\rho(\cdot)$, where we have the same access to the coefficients $\hat{\alpha}_k$ that we have to the observations $\hat{\rho}(\xx_i)$. For example, when constructing the observations using full state tomography, we can explicitly give the coefficients $\hat{\alpha}_k$ in matrix form. From this, we can for example compute a classical representation of the reduced parametrized state $\hat{\rho}_A(\cdot)=\Tr_A[\hat{\rho}(\cdot)]$, where $A$ is a subsystem. Without direct access to the coefficients $\hat{\alpha}_k$, it is not obvious how to do this.

Direct access to the coefficients $\hat{\alpha}_k$ can also speed up the computation of the expectation value $\Tr[O\hat{\rho}(\xx)]$. For a classical shadow $\hat{\sigma}$ of a quantum state $\sigma$, expectation values are computed via snapshots $\hat{\sigma}_j$ as
\begin{align} \label{eqn:shadow_snapshots}
    \Tr[O\hat{\sigma}] = \frac{1}{J}\sum_{j=1} \Tr[O\hat{\sigma}_j] .
\end{align}
If we take our observations $\hat{\rho}(\xx_i)$ to be classical shadows, computing $\Tr[O\hat{\rho}(\xx)]$ means evaluating a weighted sum of classical shadows. Let $t$ denote the run time needed to compute the expectation value of a single classical shadow, then the overall computational complexity of computing $\Tr[O\hat{\rho}(\xx)]$ is $O(Mt)$. However, if we have access to the coefficients $\hat{\alpha}_k$ as a linear combination of observations, we can compute the coefficients $m_i(\xx)$ in \cref{eqn:coeffs_for_prediction} for a fixed $\xx$ before evaluating the expectation values $\Tr[O\hat{\rho}(\xx_i)]$. The precision of a prediction from a classical shadow scales with the number of snapshots evaluated. Evaluating 
\begin{align}
    \Tr[O\hat{\rho}(\xx)] = \sum_{i=1}^M m_i(\xx) \Tr[O\hat{\rho}(\xx_i)]
\end{align}
can be interpreted as a sampling problem, where we compute the values $\Tr[O\hat{\rho}(\xx_i)]$ from samples of the snapshots that make up the observations $\hat{\rho}(\xx_i)$. If we have access to the coefficients $m_i$ beforehand though, we can do much better than uniform sampling. Instead, we apply an importance sampling, where we sample from the distribution $\Tr[O\hat{\rho}(\xx_i)]$ with probability $|m_i(\xx)|/\sum_{i'} |m_{i'}(\xx)|$. It takes a moment of though to see that this reduces the computational time required to $O(t)$. Naturally, this method can also be applied if the observable $O$ is given as a sum $O=\sum_j h_j O_j$, where each component $O_j$ can be efficiently evaluated by a classical shadow. For an explicit calculation and further details on this see Ref.~\cite{guo2023estimating}.

\section{Identification of a sparse support} \label{sec:identification of a sparse support}
In this section, we tackle the question of identifying a sparse support $S$ such that $\rho_S(\cdot)$ is a good approximation to $\rho(\cdot)$ which we postponed in \cref{sec:tomography of parametrized quantum states}. 
While it might at first sound like a straightforward exercise, it comes with a lot of subtle difficulties
\cite{5766202}. 
The reason for this is that we only have indirect access to the coefficients $\alpha_k$ in the expansion of $\rho(\cdot) = \sum_{k \in \Lambda} \alpha_k \varphi_k(\cdot)$ and that we can usually not compute the norms $\lVert \cdot \rVert_{\calO}$ in a computationally efficient manner. Furthermore, we desire a computational procedure that is compatible with the underlying restrictions of the chosen tomographic procedure. Classical shadows~\cite{huang2020predicting}, for example, only allow an efficient computation of expectation values of Pauli words.

In an ideal world, we would like to find the set $S$ that achieves the best constant $\gamma_{\ell^2}$ in \cref{thrm:sec_ToP_error_bound}, \emph{i.e.}, 
\begin{align}
    S \coloneqq \argmin_{S \subseteq \Lambda, |S| = s} \tvert{\aalpha_{\bar{S}}}_{\calO,2}.
\end{align}
If we disregard computational complexity considerations, the data we gathered in \cref{alg:classical_representation} is sufficient to identify a good candidate for the support. The guarantees on the RIP constant of the measurement matrix $A$ allow us to run standard compressed sensing approaches for any fixed observable $O$, and as there is an observable whose scalar coefficients achieve the induced semi-norm $\tvert{\aalpha}_{\calO,2}$, one could in principle always find a good candidate for the support by trying enough observables.

In general, it is however computationally intractable due to the aforementioned constraints. We,  therefore, present a probabilistic algorithm that uses the expectation values of randomly chosen observables to guarantee that $\tvert{\aalpha_{\bar{S}}}_{\calO,2}$ is small for $\calO = \left\{ O : \norm{O}_2 \leq 1 \right\}$. In particular our method gives a guarantee that the coefficients $\alpha_k$ outside $S$ have small Hilbert-Schmidt norm. We believe this to be a reasonable proxy for sparsity -- first of all, the notions of sparsity coincide in the setting of exact sparsity where a set $S$ exists such that all $\alpha_k$ outside of $S$ are zero because the Hilbert-Schmidt norm is a proper norm. Second, the Hilbert-Schmidt norm subsumes all sets of observables with small Hilbert-Schmidt norm $\calO \subseteq \{ O : \lVert O \Vert_2 \leq 1 \}$ and is compatible with global Clifford shadows. 

\subsection*{Algorithm}

We give a protocol to estimate the Hilbert-Schmidt norm $\alpha_k$ such that the $s$ coefficient operators largest in Hilbert-Schmidt norm can be determined. We estimate from random Pauli measurements, which is convenient as it is hardware friendly as well as compatible with broadly used tomographic procedures like classical shadows. 

Recall the parameter dependent expectation value for some observable $O$, which is 
given by
\begin{align}
    \Tr[O\rho(\xx)] &= \sum_k \Tr[O\alpha_k] \varphi_k(\xx) \\
    &\eqqcolon \sum_k c_k(O) \varphi_k(\xx) \, .
    \nonumber
\end{align}
Now, the task of identifying $S$ is equivalent to identifying the indices of the $s$ largest expectation values $\bbE\left(|c_k(P)|^2\right)$, where $P$ is drawn uniformly at random from all $n$-qubit Pauli words. To see this, consider writing an operator $\alpha_k$ in terms of the normalized Pauli basis. We denote
\begin{align}
    \alpha_k = \sum_{l=1}^{d} a_{k,l} \frac{B_l}{\sqrt{d}} \, ,
\end{align}
where $d=4^n$, $a_{k,m} \in \bbC$ and $\{ B_l \}_{l=1,  \dots , d} = \{\Id, X, Y, Z\}^{\otimes n }$ is the normalized Pauli basis. 
In other words, we can identify the operator $\alpha_k$ with a vector $\aa_k$ corresponding to its Pauli expansion.
Note that 
\begin{align}
    c_k(B_l) &= \sqrt{d} |a_{k,l}|, \\
    \bbE[ |c_k(P)|^p ] &= d^{\frac{p}{2}-1} \norm{\aa_k}_p^p,
\end{align}
and especially
\begin{align}
    \norm{\alpha_k}_2 &= \norm{\aa_k}_2= \sqrt{\bbE\left[|c_k(P)|^2\right]} \, .
\end{align}

Realizations of the random variable $c_k(P)$ can be obtained by solving the linear system 
\begin{align} \label{eqn:sec_ISS_ideal_linear_system}
    A \cc(P) = \rrho_{P} \, ,
\end{align}
where we denote $\cc(P) := (c_1(P),  \dots , c_D(P))$ and $\rrho_{P} = (\Tr[P \rho(\xx_1)], \dots, \Tr[P\rho(\xx_M)])$. 
We only want to probe the parametrization at $M=O(\log(D))$ points. Therefore, $M \ll D$ and the above system is heavily underdetermined and can only be  reasonably queried by means of sparse recovery, thus further restricting our ability to read out vectors $\cc(P)$ directly.

Additionally, we do not have direct access to the above linear system, as we can only obtain observations $\hat{\rho}(\xx_i)$, not quantum states $\rho(\xx_i)$. Thus, we can only obtain information from the linear system
\begin{align} \label{eqn:sec_ISS_realistic linear system}
    A \hat{\cc}(P) &= \hat{\rrho}_{P} \\
    &= \rrho_{P} + \eeta_{P}  \, .
    \nonumber
\end{align}

In principle, any sparse recovery algorithm might be employed to extract information from \cref{eqn:sec_ISS_realistic linear system}. 
In practice, most sparse recovery algorithms work reasonably well \cite{foucart2013mathematical}. 
Here, we employ \emph{hard thresholding pursuit} (HTP) \cite{blumensath2009iterative}, mainly since it is easy to implement.
The constants in \cref{thrm:sec_ToP_error_bound} are chosen such that $\Delta_{3s}(A/\sqrt{M}) \leq 1/2$, which ensures convergence of HTP. Applying HTP to \cref{eqn:sec_ISS_realistic linear system}, we recover a vector $\cc^{\#}(P)$. The error between $\cc(P)$ and $\cc^{\#}(P)$ depends the maximum error rate over all Pauli observables
\begin{align}
    \eta \coloneqq \max_{P} \norm{\eeta_{P}}_2
\end{align}
and the distance between $\cc(P)$ and its best possible $s$-sparse approximation in the 1-norm
\begin{align}
    \sigma_s(\cc(P))_1 \coloneqq \min \left\{ \norm{\cc(P) - \zz}_1 : \text{$\zz$ is $s$-sparse} \right\} \, .
\end{align}
The error between $\cc(P)$ and $\cc^{\#}(P)$ then obeys the error bound
\begin{align} \label{eqn:sec_ISS_sparse_recovery_guarantee}
    \norm{\cc(P) - \cc^\#(P)}_2 &\leq \frac{D_1}{\sqrt{s}} \sigma_s(\cc(P))_1 + D_2 \norm{\eeta_{P}}_2 \\
    &\leq \frac{D_1}{\sqrt{s}} \sigma_s(\cc(P))_1 + D_2 \eta
    \nonumber
\end{align}
where $D_1, D_2$ are constants. This matches the intuition that the performance of sparse recovery algorithms depend on the noise level and the compressibility of the target vector, quantified here via $\sigma_s(\cc(P))$. For the right-hand side in \cref{eqn:sec_ISS_sparse_recovery_guarantee} we will use the shorthand 
\begin{align}
    \kappa(P) \coloneqq \frac{D_1}{\sqrt{s}} \sigma_s(\cc(P))_1 + D_2 \eta \, .
\end{align}
The algorithm we use to identify $S$ computes estimates of the largest expectation values 
\begin{align}\label{eqn:estimator_construction_supp_id}
    X_k \coloneqq \sqrt{\bbE[|c_k^{\#}(P)|^2]}
\end{align}
and outputs the support of the $s$ largest values as the set $S$. We estimate the expectation value $X_k$ by the empirical mean
\begin{align}
    \hat{X}_k \coloneqq \sqrt{\frac{1}{L} \sum_{l=1}^L |c_k^{\#}(P_l)|^2} \,,
\end{align}
where the Pauli observables $P_1, \dots, P_L$ are drawn uniformly at random. We summarize this procedure in \cref{alg:support_identification}.
\begin{algorithm}[H]
\caption{Support identification algorithm} \label{alg:support_identification}
\begin{algorithmic}
\Require $s$ \Comment{Size of sparse support}
\Require $\epsilon$ \Comment{Tolerance}
\Require $\delta$ \Comment{Failure probability}
\Require $\{\hat{\rho}(\xx_i)\}_{i=1}^M$ \Comment{$M$ s.t. $\Delta_{3s} \leq \frac{1}{2}$, see \cref{thrm:sec_ToP_error_bound}}
\State $L \gets   \log\left( \frac{2D}{\delta}\right) \frac{\left(1 + \kappa\right)^2}{2 \epsilon^4} $ \Comment{see \cref{thrm:sec_ISS_hoeffding}}
\For{$l=1, \dots, L$}
\State \textbf{solve} $A\hat{\cc}(P_l) = \hat{\rrho}_{P_l}$ \Comment{HTP algorithm}
\State \textbf{store} $\cc^{\#}(P_l)$
\EndFor
\For{k=1, \dots, D}
\State $\hat{X}_k \gets \sqrt{\frac{1}{L} \sum_{l=1}^L |c_k^{\#}(P_l)|^2}$
\EndFor
\Ensure $S = \left\{ k : \text{$\hat{X}_k$ belongs to $s$ largest} \right\}$ 
\end{algorithmic}
\end{algorithm}
Clearly, this algorithm does not necessarily succeed. It is not clear that the indices of the largest expectation values $X_k$ coincide with the largest values $\norm{\alpha_k}_2$. Note that the parameters $\epsilon$ and $\delta$ only control how well we approximate $X_k$ with the empirical mean $\hat{X}_k$ (see \cref{thrm:sec_ISS_hoeffding}) and have no implications for how close $X_k$ and $\norm{\alpha_k}_2$ are. Therefore, we analyze the requirements which are needed to ensure the success of \cref{alg:support_identification}.

\subsection*{Error analysis}
\cref{alg:support_identification} guarantees that the expectation values $X_k$ are estimated to precision $\epsilon$. We can ensure that picking the set $S$ associated to the largest expectation values $X_k$ is the correct set, only if there is a gap of at least $2\epsilon$ between the possible values in $S$ and in $\bar{S}$, or, formally, 
\begin{align} \label{eqn:sec_ISS_separability_criterion}
    \min_{k \in S} X_k - \max_{k' \in \bar{S}} X_{k'} \geq 2 \epsilon,
\end{align}
a condition that is not automatically fulfilled by construction. 
We can, however, relate this condition to the parameters of the parametrized quantum state and the guarantees of the compressed sensing procedure.

\begin{theorem}[Performance guarantees] \label{thrm:sec_ISS_separability_criterion_lowerB}
    If 
    \begin{align}
    \begin{split}
        \min_{k \in S} \norm{\alpha_k}_2 &- \max_{k' \in \bar{S}} \norm{\alpha_{k'}}_2  \geq \\
        &2\epsilon + \frac{2D_1}{\sqrt{s}} \sqrt{\bbE \left[\sigma_s(\cc(P))_1^2 \right]} 
          + 2D_2 \eta,
    \end{split}
    \end{align}
    then we can guarantee that
    \begin{align}
        \min_{k \in S} X_k - \max_{k' \in \bar{S}} X_{k'} \geq 2 \epsilon \, .
    \end{align}
\end{theorem}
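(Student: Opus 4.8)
\emph{Proof proposal.} The plan is to control, uniformly in the index $k$, the deviation between the estimator target $X_k = \sqrt{\bbE[|c_k^{\#}(P)|^2]}$ and the Hilbert–Schmidt norm $\norm{\alpha_k}_2 = \sqrt{\bbE[|c_k(P)|^2]}$, and then simply propagate the assumed spectral gap. Everything rests on the sparse-recovery guarantee \cref{eqn:sec_ISS_sparse_recovery_guarantee} together with two triangle inequalities.

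First I would use that, by the choice of $M$ in \cref{thrm:sec_ToP_error_bound}, the measurement matrix satisfies $\Delta_{3s}(A/\sqrt{M}) \leq 1/2$, so HTP applied to \cref{eqn:sec_ISS_realistic linear system} is guaranteed to return, for \emph{every} Pauli word $P$, a vector $\cc^{\#}(P)$ with $\norm{\cc(P) - \cc^{\#}(P)}_2 \leq \kappa(P)$. Reading off a single coordinate and using $|x_k| \leq \norm{\xx}_2$, this gives the pointwise bound $|c_k(P) - c_k^{\#}(P)| \leq \kappa(P)$ for all $k$ and all $P$.

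Next I would pass to the $L^2$-norm over the uniform distribution on Pauli words. The reverse triangle inequality for this $L^2$-norm (applied to the functions $P \mapsto c_k(P)$ and $P \mapsto c_k^{\#}(P)$), followed by the pointwise bound above, yields
\begin{align}
 \bigl| X_k - \norm{\alpha_k}_2 \bigr| \;\leq\; \sqrt{\bbE\bigl[|c_k(P) - c_k^{\#}(P)|^2\bigr]} \;\leq\; \sqrt{\bbE[\kappa(P)^2]}.
\end{align}
Since $\kappa(P) = \tfrac{D_1}{\sqrt{s}}\sigma_s(\cc(P))_1 + D_2\eta$ with $\eta = \max_P \norm{\eeta_P}_2$ deterministic, Minkowski's inequality gives
\begin{align}
 \sqrt{\bbE[\kappa(P)^2]} \;\leq\; \frac{D_1}{\sqrt{s}} \sqrt{\bbE\bigl[\sigma_s(\cc(P))_1^2\bigr]} + D_2\eta \;=:\; \beta,
\end{align}
so that $|X_k - \norm{\alpha_k}_2| \leq \beta$ uniformly in $k$. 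Finally I would combine the two ends: $\min_{k \in S} X_k \geq \min_{k\in S}\norm{\alpha_k}_2 - \beta$ while $\max_{k'\in\bar S} X_{k'} \leq \max_{k'\in\bar S}\norm{\alpha_{k'}}_2 + \beta$, hence
\begin{align}
 \min_{k\in S} X_k - \max_{k'\in\bar S} X_{k'} \;\geq\; \min_{k\in S}\norm{\alpha_k}_2 - \max_{k'\in\bar S}\norm{\alpha_{k'}}_2 - 2\beta \;\geq\; 2\epsilon,
\end{align}
where the last step is exactly the hypothesis $\min_{k\in S}\norm{\alpha_k}_2 - \max_{k'\in\bar S}\norm{\alpha_{k'}}_2 \geq 2\epsilon + 2\beta$. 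This is the claimed separation.

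I do not anticipate a genuine obstacle here; the content is essentially bookkeeping. The points that need care are (i) deriving the coordinate-wise bound from the $\ell^2$ recovery bound, (ii) keeping track of which quantities are random in $P$ and which (namely $\eta$) are deterministic when invoking Minkowski, and (iii) applying the reverse triangle inequality $|\,\norm{u}-\norm{v}\,| \leq \norm{u-v}$ in the function space $L^2$ over Pauli words rather than pointwise in $P$. The HTP recovery guarantee \cref{eqn:sec_ISS_sparse_recovery_guarantee} is taken as given from the compressed-sensing literature, and the required RIP bound $\Delta_{3s}(A/\sqrt M)\le 1/2$ has already been secured in \cref{thrm:sec_ToP_error_bound}.
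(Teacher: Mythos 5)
Your proof is correct and follows essentially the same route as the paper's (much terser) argument: a coordinate-wise bound $|c_k(P)-c_k^{\#}(P)|\leq\kappa(P)$ from the $\ell^2$ recovery guarantee, the (reverse) triangle inequality in $L^2$ over the uniform distribution on Pauli words, and taking expectations via Minkowski to obtain the uniform deviation $|X_k - \norm{\alpha_k}_2|\leq \tfrac{D_1}{\sqrt{s}}\sqrt{\bbE[\sigma_s(\cc(P))_1^2]}+D_2\eta$. The details you supply (which quantities are random in $P$, where Minkowski is invoked) are exactly the bookkeeping the paper leaves implicit.
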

\begin{proof}
    For any $P$, it follows from \cref{eqn:sec_ISS_sparse_recovery_guarantee} that $|c_k(P) - c^{\#}(P)| \leq \kappa(P)$. Applying the triangle inequality and taking the expectation value then leads to the desired result.
\end{proof}
We see that the success of \cref{alg:support_identification} depends on the magnitude of the squared expectation value of $\kappa(P)$, it is clear that as the size of the subset $s$ approaches $D$, this term vanishes and $S$ is identified correctly. Besides giving a condition on the success of \cref{alg:support_identification}, the theorem also dictates the necessary precision for the estimate $\hat{X}_k$. What is left is to determine the sample complexity required such that $|X_k - \hat{X}_k|\leq \epsilon$. 
First, note that $c_k^{\#}(P)$ is bounded. Denote with 
\begin{align}
    \kappa \coloneqq \max_P \kappa(P) \, .
\end{align}
the maximum error bound between vector $\cc(P)$ and $\cc^{\#}_k(P)$ over all Pauli observables.

\begin{lemma}[Coefficient bound]\label{lem:sec_ISS_coefficient_bound}
    For an $n$-Pauli observable 
    $P \in \{\Id, X, Y, Z\}^{\otimes n}$, we have that
    \begin{align}
        |c^{\#}_k(P)| \leq 1 + 
        \kappa \, . 
    \end{align}
\end{lemma}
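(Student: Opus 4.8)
The plan is to combine the sparse-recovery guarantee already recorded in \cref{eqn:sec_ISS_sparse_recovery_guarantee} with an elementary pointwise bound on the \emph{noiseless} coefficients $c_k(P)$, the point being that $c^{\#}_k(P)$ can differ from $c_k(P)$ by at most the recovery error $\kappa$.

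First I would recall that $c^{\#}_k(P)$ is the $k$-th entry of the vector $\cc^{\#}(P)$ returned by HTP applied to the noisy system $A\hat{\cc}(P)=\hat{\rrho}_{P}$. Because \cref{thrm:sec_ToP_error_bound} guarantees $\Delta_{3s}(A/\sqrt{M})\leq 1/2$, the recovery bound \cref{eqn:sec_ISS_sparse_recovery_guarantee} applies and yields $\norm{\cc(P)-\cc^{\#}(P)}_2 \leq \kappa(P) \leq \kappa$ for every Pauli $P$, with $\kappa = \max_P \kappa(P)$. In particular $|c_k(P)-c^{\#}_k(P)| \leq \norm{\cc(P)-\cc^{\#}(P)}_2 \leq \kappa$ for each index $k$, so by the triangle inequality it suffices to prove the noiseless bound $|c_k(P)|\leq 1$.

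For the noiseless bound I would use \cref{eqn:sec_pqs_coefficients} and linearity of the trace to write $c_k(P)=\Tr[P\alpha_k]=\int_{\calX}\diff\mu(\xx)\,\varphi_k^{*}(\xx)\,\Tr[P\rho(\xx)]$. Since $P\in\{\Id,X,Y,Z\}^{\otimes n}$ is Hermitian with spectrum in $\{-1,+1\}$ we have $-\Id \preceq P \preceq \Id$, hence $|\Tr[P\rho(\xx)]|\leq 1$ for all $\xx$ because $\rho(\xx)$ is a state. Pulling the modulus inside the integral gives $|c_k(P)| \leq \int_{\calX}\diff\mu(\xx)\,|\varphi_k(\xx)| = \norm{\varphi_k}_1$, and since $\mu$ is a probability measure and $\{\varphi_k\}$ is an orthonormal system, Cauchy--Schwarz gives $\norm{\varphi_k}_1 \leq \norm{\varphi_k}_2\,\norm{1}_2 = 1$. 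Combining, $|c^{\#}_k(P)| \leq |c_k(P)| + \kappa \leq 1+\kappa$.

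There is no substantial obstacle; the only subtlety worth flagging is that one obtains the clean constant $1$ rather than the ONB bound $K$ precisely by exploiting that $\mu$ is normalized, so that $\norm{\varphi_k}_1\leq\norm{\varphi_k}_2=1$ -- the cruder estimate $\norm{\varphi_k}_\infty\leq K$ would only give $|c_k(P)|\leq K$. One should also make sure the recovery guarantee is invoked with the uniform noise level $\eta = \max_P \norm{\eeta_P}_2$, so that the resulting bound $\kappa$ is independent of $P$ and the conclusion holds for all Pauli observables simultaneously.
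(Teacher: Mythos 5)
Your proof is correct and follows essentially the same route as the paper: both arguments split $|c^{\#}_k(P)| \leq |c_k(P)| + |c_k(P)-c^{\#}_k(P)|$, control the second term by the entrywise consequence of the recovery guarantee in \cref{eqn:sec_ISS_sparse_recovery_guarantee}, and reduce the first term to showing $|c_k(P)|\leq \norm{P}_\infty = 1$. The only (immaterial) difference is in that last step, where the paper invokes Parseval's identity $\sum_k |c_k(O)|^2 = \norm{f}_2^2 \leq \norm{O}_\infty^2$ while you bound $|\braket{\varphi_k,f}|$ directly via Cauchy--Schwarz and the normalization of $\mu$.
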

\begin{proof}
    We start by observing a 
    bound on the absolute value of the original coefficients $c_k(O)$.
    By Hölder's inequality, $\Tr[O\rho(\xx)] \leq \norm{O}_{\infty}$. Then, it follows
    that
    \begin{align}
        \norm{f}_2^2 = \sum_k |c_k(O)|^2 &= \int_{\calX} \mathrm{d} \mu(\xx) \, |f(\xx)|^2 \\
        &\leq \norm{O}_{\infty}^2 
        \nonumber
    \end{align}
    and thus
    \begin{align} \label{eqn:origin_coeff_bound}
        |c_k(O)| \leq \norm{O}_{\infty} \, .
    \end{align}
    Now, apply the triangle inequality to a single term of the right-hand side in \cref{eqn:sec_ISS_sparse_recovery_guarantee} and apply \cref{eqn:origin_coeff_bound} to bound $|c_k(P)|$. Note that for any Pauli $P$, $\norm{P}_{\infty}=1$.
\end{proof}

As we deal with bounded random variables, we obtain performance guarantees by applying Hoeffding's inequality. 
\begin{theorem}[Performance guarantee for hard thresholding pursuit]\label{thrm:sec_ISS_hoeffding}
    Using HTP to probe linear systems of the form \cref{eqn:sec_ISS_realistic linear system} to obtain data vectors $\cc^{\#}(P)$, one achieves
    \begin{align}
        | \hat{X}_k - X_k |\leq  \epsilon 
    \end{align}
    with probability at least $1-\delta$ by using 
    \begin{align} \label{eqn:SIT_general_scaling_in_M}
        L \geq \log\left( \frac{2D}{\delta}\right) \frac{\left(1 + \kappa\right)^2}{2 \epsilon^4} 
    \end{align}
    many data vectors.
\end{theorem}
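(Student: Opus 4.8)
The plan is a routine concentration argument over the uniformly sampled Paulis: fix a coordinate $k$, treat the summands $|c_k^{\#}(P_l)|^2$ as bounded i.i.d.\ random variables, apply Hoeffding to concentrate their empirical mean, transfer the resulting bound to $\hat X_k$ itself via a square-root inequality, and close with a union bound over the $D$ coordinates, which is what produces the $\log(2D/\delta)$ factor.

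\textbf{Reduction and concentration.} By definition $\hat X_k^2=\tfrac1L\sum_{l=1}^L|c_k^{\#}(P_l)|^2$ and $X_k^2=\bbE[|c_k^{\#}(P)|^2]$, and since the $P_l$ are drawn independently and uniformly, $\hat X_k^2$ is an unbiased estimator of $X_k^2$. Using the elementary bound $|\sqrt a-\sqrt b|\le\sqrt{|a-b|}$ for $a,b\ge 0$, it suffices to guarantee $|\hat X_k^2-X_k^2|\le\epsilon^2$ with probability at least $1-\delta/D$ for each fixed $k$; a union bound over $k\in\{1,\dots,D\}$ then yields $|\hat X_k-X_k|\le\epsilon$ for all $k$ simultaneously with probability at least $1-\delta$. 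By \cref{lem:sec_ISS_coefficient_bound} each summand lies in $[0,(1+\kappa)^2]$, and this range is uniform over all Paulis precisely because $\kappa=\max_P\kappa(P)$. Hoeffding's inequality for an average of $L$ i.i.d.\ variables of range $(1+\kappa)^2$ then gives $\bbP[|\hat X_k^2-X_k^2|\ge\epsilon^2]\le 2\exp(-2L\epsilon^4/(1+\kappa)^4)$, and imposing $2D\exp(-2L\epsilon^4/(1+\kappa)^4)\le\delta$ and solving for $L$ yields the stated lower bound on $L$: the $\epsilon^{-4}$, the $\log(2D/\delta)$ and the constant $1/2$ come out directly, while the power of $1+\kappa$ traces to the squared Hoeffding range and could be lowered with a Bernstein-type inequality exploiting variance information on $|c_k^{\#}(P)|^2$.

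\textbf{Where the work is.} There is no deep obstacle — this is a textbook tail bound. The two points deserving attention are (i) the square-root step: because the quantity that actually concentrates is the second moment $\hat X_k^2$, one must drive its error below $\epsilon^2$ to certify error $\epsilon$ on $\hat X_k$, which is why the sample complexity scales as $\epsilon^{-4}$ rather than the more familiar $\epsilon^{-2}$; and (ii) checking that the deterministic per-Pauli bound $|c_k^{\#}(P)|\le 1+\kappa$ of \cref{lem:sec_ISS_coefficient_bound} is genuinely uniform in $P$, so that a single boundedness constant can be fed into Hoeffding for every draw $P_l$. Unbiasedness, the union bound over coordinates, and the final rearrangement for $L$ are all immediate.
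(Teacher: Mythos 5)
Your proof follows essentially the same route as the paper's: bound each summand by $(1+\kappa)^2$ via \cref{lem:sec_ISS_coefficient_bound}, apply Hoeffding to the empirical second moment at precision $\epsilon^2$, transfer to $\hat{X}_k$ with $|\sqrt{a}-\sqrt{b}|\le\sqrt{|a-b|}$, and union-bound over the $D$ indices. One small remark: your own displayed tail bound $2\exp\left(-2L\epsilon^4/(1+\kappa)^4\right)$ actually yields $L\ge\log(2D/\delta)\,(1+\kappa)^4/(2\epsilon^4)$, i.e.\ the fourth power of $(1+\kappa)$ rather than the square appearing in the theorem statement --- a discrepancy equally present in the paper's own proof, so your argument matches the paper's both in substance and in this mismatch between derivation and stated constant.
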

\begin{proof}
    We give a proof for completeness in \cref{sec:appendix_proof_estimator}.
\end{proof}
The scaling of $\epsilon^4$ naturally appears because we require guarantees on $\hat{X}_k$ and not on only on $\frac{1}{L}\sum_{l=1}^L |c^{\#}(P_l)|^2$.

\subsection*{Error analysis under additional assumptions}
Here, we introduce a reasonable additional assumption on the input of the support identification task and, as a consequence, provide a more refined analysis of the precision requirement for the success of \cref{alg:support_identification} given in \cref{thrm:sec_ISS_separability_criterion_lowerB}.
As a starting observation, the dependence on the term $\sigma_s(\cc(P))_1$ provides overly pessimistic bounds on the sample complexity in certain practical scenarios. 
To see this, consider first the following example: For all $k \in S$, let $|c_k(P)|=1$ for all $P$. For $k' \in \bar{S}$, let $|c_{k'}(P)|=x$ for all $P$, $x \ll 1$. Under these conditions, one would expect that a precision $\epsilon = O(1-x)$ suffices to provably determine the correct support. 

However, from \cref{thrm:sec_ISS_separability_criterion_lowerB}, specifically from the term $\sigma_s(\cc(P))_1$, we would conclude that a precision of $O((D-s)x)$ is necessary. This is due to the form of the error bounds for sparse recovery in \cref{eqn:sec_ISS_sparse_recovery_guarantee}, from which the dependence on $\sigma_s(\cc(P))_1$ is inherited. This indicates that for certain inputs, the hardness of the support identification task might not be best described by sparsity or compressibility, indicated by small $\bbE[\sigma_s(\cc(P))_1^2]$, but rather by some measure of distance between coefficients corresponding to $S$ and $\bar{S}$. The purpose of this section is to equip this intuition with rigorous results.

Throughout this paragraph, we will assume that what we call the \emph{local support identification} assumption holds true.

\begin{assumption}[Local support identification] \label{assptn:sec_ISS_local_support_identification}
Consider the linear system given in \cref{eqn:sec_ISS_realistic linear system}, which is
\begin{align}
    A\hat{\cc}(P) = \rrho_{P} + \eeta_{P} \, .
\end{align}
Then, the $s$ non-zero entries of the approximate solution $\cc^{\#}(P)$ obtained by means of sparse recovery coincide with the $s$ largest entries of the exact solution $\hat{\cc}(P)$.
\end{assumption}
It is important to note that we are not assuming that the non-zero entries coincide with the $s$ largest entries of the noise-free solution $\cc(P)$. \cref{assptn:sec_ISS_local_support_identification} becomes increasingly valid the larger the gap between the smallest of the $s$ largest values of $\hat{\cc}(P)$ and the largest of the rest. For a sufficiently large gap, \cref{assptn:sec_ISS_local_support_identification} is even provably true: According to \cref{eqn:sec_ISS_sparse_recovery_guarantee}, it holds that  
\begin{align}
    \norm{\hat{\cc}(P) - \cc^{\#}(P)}_2 \leq \frac{D_1}{\sqrt{s}} \sigma_s(\hat{\cc}(P))_1 \, .
\end{align}
As the gap increases, $\sigma_s(\hat{\cc}(P))_1$ goes to zero, such that at some point violating \cref{assptn:sec_ISS_local_support_identification} would immediately violate the above error bound.

As we will see, \cref{assptn:sec_ISS_local_support_identification} allows us to analyze a slightly different error model. We can express  the deviations in the parametrized quantum state as
\begin{align}
    \hat{\rho}(\xx) &= \sum_k (\alpha_k + \gamma_k) \varphi_k(\xx) \\
    &= \rho(\xx) + \eta(\xx) \, .
    \nonumber
\end{align}
If we solved the linear system $A\hat{\cc}(P) = \rrho_{P} + \eeta_{P}$ traditionally, \emph{i.e.},  without using sparse recovery techniques, we would obtain coefficients 
\begin{align}
    \hat{c}_k(P) &= \Tr[(\alpha_k + \gamma_k)P] \\
    &\coloneqq c_k(P) + n_k(P) \, .
    \nonumber
\end{align}
When using sparse recovery techniques, we instead obtain $c^{\#}_k(P)$, which are not easily decomposed in elementary terms, making the analysis of the precision requirements for the success of \cref{alg:support_identification} cumbersome. However, by virtue of \cref{assptn:sec_ISS_local_support_identification}, for any fixed Pauli $B_l$ we can write
\begin{align}
    c_k^{\#}(B_l) = \left(c_k(B_l) + n_k(B_l)\right) \chi_{k,l} \, ,
\end{align}
where 
\begin{align}
    \chi_{k,l} \coloneqq \chi\left(|c_k(B_l)+n_k(B_l)|>\max_{k' \in \bar{S}} |c_{k'}(B_l) + n_{k'}(B_l)| \right) \, ,
\end{align}
that is the indicator function which assumes $1$ if there is no entry in $\hat{\cc}(B_l)$ corresponding to an index in $\bar{S}$ that is larger than $\hat{c}_k(B_l)$. The role of \cref{assptn:sec_ISS_local_support_identification} is to give rise to the clear condition that defines the indicator function $\chi_{k,l}$. With this, we can write 
\begin{align} \label{eqn:sec_ISS_coefficient_expr_assumption}
    X_k = \left( \frac{1}{d} \sum_{l=1}^d \left|c_k(B_l) - n_k(B_l) \right|^2 \chi_{k,l} \right)^{\frac{1}{2}} \, ,
\end{align}
where, as a reminder, 
\begin{align}
    X_k \coloneqq \sqrt{\bbE\left[|c_k^{\#}(P)|^2\right]} \, .
\end{align}
The support identification task cannot succeed if the $s$ largest $X_k$ do not correspond to the $s$ largest $\norm{\alpha_k}_2$. From \cref{eqn:sec_ISS_coefficient_expr_assumption}, we see that there are essentially two mechanisms that can lead to such an outcome: (i) The values $c_k^{\#}(B_l)$ are significantly smaller than $c_k(B_l)$ for many measurements $B_l$, underselling the contribution of the $\alpha_k$. This is controlled by the magnitude of the noise $n_k(B_l)$. (ii) It cannot be ruled out that the coefficients $\hat{c}_{k'}(B_l) = |c_{k'}(B_l) + n_{k'}(B_l)|$ for $k' \in \bar{S}$ act as adversaries against a certain index $k \in S$, collectively trying to push $\hat{c}_k(B_l)$ out of the $s$ largest entries for as many measurements $B_l$ as possible. Despite being small in expectation, $\hat{c}_{k'}(B_l)$ can be large for few measurements $B_l$, such that a coordinated adversarial effort of all indices in $\bar{S}$ could mask the contributions an $\alpha_k$ with $k \in S$. One possibility 
to rule out such a case is requiring that for all measurements $B_l$, 
\begin{align}
    |\hat{c}_k(B_l)| > \sum_{k' \in \bar{S}} |\hat{c}_{k'}(B_l)| \, .
\end{align}
With this in mind, we can also better understand the term $\sigma_s(\cc(B_l))_1$ in the original precision requirement, which sums over contributions from indices that are mostly in $\bar{S}$.

However, such an adversarial coordination across multiple indices in $\bar{S}$ seems highly unlikely, which warrants going beyond a worst-case analysis. In order to do so, our key insight is that the ability to perform such adversarial attacks as described above is rooted in a large variance of the magnitude of $\hat{c}_{k'}(B_l)$ for different $B_l$. In an extreme case like in the above example, where there is no variation across different measurements $B_l$, no adversarial strategies can be employed at all. In order to capture this phenomenon, we denote the vectors $\cc(k') \coloneqq (c_{k'}(B_1), c_{k'}(B_2), \dots c_{k'}(B_d))$ and $\nn({k'}) \coloneqq (n_{k'}(B_1), n_{k'}(B_2), \dots, n_{k'}(B_d))$ and define the so-called \emph{flatness} constant as
\begin{align}
    \beta_{\cc({k'})} \coloneqq \frac{1}{\sqrt{d}} \frac{\norm{\cc({k'})}_2}{\norm{\cc({k'})}_{\infty}}
\end{align}
and analogous $\beta_{\nn({k'})}$ for $\nn({k'})$. 
The value lies in
$\beta_{\cc({k'})} \in [1, 1/\sqrt{d}]$, where $\beta_{\cc({k'})}=1$ is obtained for a perfectly flat (\emph{i.e.},  all entries are of the same magnitude) vector and $\beta_{\cc({k'})}=1/\sqrt{d}$ for a perfectly peaked (\emph{i.e.},  only one entry is non-zero) vector. 
We further denote
\begin{align}
    \beta_{\cc} \coloneqq \min_{k' \in \bar{S}} \beta_{\cc(k')}
\end{align}
and analogously $\beta_{\nn}$.
Now, with \cref{assptn:sec_ISS_local_support_identification} in place and the concept of flatness at hand to parametrize the hardness of the support identification task, we can give a more fine grained alternative to \cref{thrm:sec_ISS_separability_criterion_lowerB}. 
\begin{theorem}\label{thrm:sec_ISS_separability_criterion_lowerB_assumptions}
Choosing $\epsilon$ such that 
\begin{align}
\begin{split}
    &\min_{k \in S} \left\{\norm{\alpha_k} - \norm{\gamma_k}_2 \right\}   \\
     &\ \geq 2 \epsilon +  \frac{\beta_{\cc}+1 }{\beta_{\cc}}\max_{k'\in \bar{S}} \norm{\alpha_{k'}}_2 
     + \frac{\beta_{\nn}+1}{\beta_{\nn}}\max_{k' \in \bar{S}} \norm{\gamma_{k'}}_2
\end{split}
\end{align}
    guarantees correct identification of the sparse support.
\end{theorem}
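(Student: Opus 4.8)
The plan is to reduce the claim to the separability criterion \cref{eqn:sec_ISS_separability_criterion}. It suffices to show that the stated hypothesis implies
\begin{align}
    \min_{k\in S} X_k - \max_{k'\in\bar{S}} X_{k'} \;\geq\; 2\epsilon ,
\end{align}
because from there the argument used after \cref{thrm:sec_ISS_separability_criterion_lowerB} applies without change: by \cref{thrm:sec_ISS_hoeffding}, with the value of $L$ fixed in \cref{alg:support_identification}, all empirical estimates obey $|\hat X_k - X_k|\leq\epsilon$ simultaneously with probability at least $1-\delta$, so a true gap of $2\epsilon$ forces $\hat X_k > \hat X_{k'}$ for every $k\in S$, $k'\in\bar{S}$, and the $s$ largest $\hat X_k$ then coincide with $S$. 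Hence the work is to lower bound $\min_{k\in S}X_k$ and upper bound $\max_{k'\in\bar{S}}X_{k'}$, starting from the representation \cref{eqn:sec_ISS_coefficient_expr_assumption} of $X_k$ that \cref{assptn:sec_ISS_local_support_identification} makes available, together with the identities $\norm{\alpha_k}_2 = d^{-1/2}\norm{\cc(k)}_2$ and $\norm{\gamma_k}_2 = d^{-1/2}\norm{\nn(k)}_2$ recorded earlier in this section.

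The upper bound is immediate: for $k'\in\bar{S}$ the thresholding step only keeps or discards the coefficient, so $|c^{\#}_{k'}(B_l)|\leq|c_{k'}(B_l)+n_{k'}(B_l)|$ for all $l$, and the triangle inequality gives $X_{k'}\leq d^{-1/2}\norm{\cc(k')+\nn(k')}_2\leq\norm{\alpha_{k'}}_2+\norm{\gamma_{k'}}_2$. For the lower bound I would split the sum defining $X_k^2$ in \cref{eqn:sec_ISS_coefficient_expr_assumption} according to the indicator, $X_k^2 = \tfrac{1}{d}\norm{\cc(k)+\nn(k)}_2^2 - \tfrac{1}{d}\sum_{l\,:\,\chi_{k,l}=0}|c_k(B_l)+n_k(B_l)|^2$. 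The first term is at least $(\norm{\alpha_k}_2-\norm{\gamma_k}_2)^2$. For the second, $\chi_{k,l}=0$ means exactly $|c_k(B_l)+n_k(B_l)|\leq\max_{k'\in\bar{S}}|c_{k'}(B_l)+n_{k'}(B_l)|$, and here the flatness constants do the essential work, converting peak values into Hilbert–Schmidt norms via $|c_{k'}(B_l)|\leq\norm{\cc(k')}_\infty = \norm{\alpha_{k'}}_2/\beta_{\cc(k')}\leq\max_{j\in\bar{S}}\norm{\alpha_j}_2/\beta_{\cc}$ and likewise $|n_{k'}(B_l)|\leq\max_{j\in\bar{S}}\norm{\gamma_j}_2/\beta_{\nn}$. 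Since there are at most $d$ indices with $\chi_{k,l}=0$, the subtracted term is bounded by $\big(\max_{j\in\bar{S}}\norm{\alpha_j}_2/\beta_{\cc}+\max_{j\in\bar{S}}\norm{\gamma_j}_2/\beta_{\nn}\big)^2$, which under the hypothesis does not exceed $(\norm{\alpha_k}_2-\norm{\gamma_k}_2)^2$; applying $\sqrt{a^2-b^2}\geq a-b$ for $0\leq b\leq a$ then yields
\begin{align}
    X_k \;\geq\; \norm{\alpha_k}_2 - \norm{\gamma_k}_2 - \frac{\max_{j\in\bar{S}}\norm{\alpha_j}_2}{\beta_{\cc}} - \frac{\max_{j\in\bar{S}}\norm{\gamma_j}_2}{\beta_{\nn}} .
\end{align}

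Subtracting the upper bound from this lower bound and passing to the extremes over $S$ and $\bar{S}$ (note the correction term is the same for every $k$) gives
\begin{align}
    \min_{k\in S}X_k - \max_{k'\in\bar{S}}X_{k'} \;\geq\; \min_{k\in S}\big(\norm{\alpha_k}_2-\norm{\gamma_k}_2\big) - \frac{\beta_{\cc}+1}{\beta_{\cc}}\max_{k'\in\bar{S}}\norm{\alpha_{k'}}_2 - \frac{\beta_{\nn}+1}{\beta_{\nn}}\max_{k'\in\bar{S}}\norm{\gamma_{k'}}_2 ,
\end{align}
and the right-hand side is $\geq 2\epsilon$ exactly by the hypothesis, which closes the reduction. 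The step I expect to be the crux is the control of the dropped-coefficient term for $k\in S$: the key realization is that the only feature of the set $\{l:\chi_{k,l}=0\}$ that is needed is the trivial cardinality bound $\leq d$, and that the flatness constant $\beta_{\cc}$ (resp.\ $\beta_{\nn}$) is precisely the device that stops the individually small but possibly spiky $\bar{S}$-contributions from accumulating over as many as $d$ Pauli measurements — this is what replaces the pessimistic $\sigma_s(\cc(P))_1$ term in \cref{thrm:sec_ISS_separability_criterion_lowerB}. A minor but necessary observation is that \cref{eqn:sec_ISS_coefficient_expr_assumption} is used only in the one-sided directions (lower bound on $X_k$ for $k\in S$, upper bound on $X_{k'}$ for $k'\in\bar{S}$), so only the corresponding halves of \cref{assptn:sec_ISS_local_support_identification} are actually required.
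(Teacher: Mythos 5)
Your proof is correct and follows essentially the same route as the paper's: both arguments use the indicator-function representation of $X_k$ granted by \cref{assptn:sec_ISS_local_support_identification}, bound $\max_{k'\in\bar S}X_{k'}$ by dropping the indicator and applying the triangle inequality, and lower bound $\min_{k\in S}X_k$ by controlling the discarded terms through $\norm{\cc(k')}_\infty$ and $\norm{\nn(k')}_\infty$, converted to Hilbert--Schmidt norms via the flatness constants and combined with $\sqrt{x-y}\geq\sqrt{x}-\sqrt{y}$. The only (cosmetic) difference is that you subtract the adversarial maximum only over the set $\{l:\chi_{k,l}=0\}$ and invoke the trivial cardinality bound $d$, whereas the paper subtracts $\max_{k'\in\bar S}|c_{k'}(B_l)+n_{k'}(B_l)|^2$ uniformly over all $l$; both yield identical final bounds.
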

\begin{proof}
    See \cref{sec:appendix_proof_lowerB_assumptions}.
\end{proof}
First, note that compared to \cref{thrm:sec_ISS_separability_criterion_lowerB}, we managed to characterize the precision requirement for successful support identification purely in terms of the Hilbert-Schmidt norms of the original coefficient operators and two flatness parameters. We observe that indeed, for certain inputs, the hardness of the support identification task is determined by some measure of distance between coefficients corresponding to $S$ and $\bar{S}$. The precision requirement only depends on the smallest $\norm{\alpha_k}_2$ in $S$ and the largest in $\bar{S}$, with the flatness constants quantifying to what extend this picture holds.  

\section{Applications}\label{sec:application}

The definitions and algorithms we propose in this work may appear quite abstract. In this section, 
we present two families of examples
that showcase the workings of the
approach. As a first family of examples,
we approach the problem of predicting all $\ell$-local reduced density matrices of an $n$-qubit state evolving under a Hamiltonian with an 
equally spaced spectrum, for which the established framework is particularly easily applicable. 
Within this first family of examples, we mainly consider 
Hamiltonians capturing settings in
\emph{nuclear magnetic resonance} (NMR)~\cite{smith1992hamiltonians} for all times $t$, under the promise that the initial state has a \emph{sub-Gaussian} energy spectrum, which will in turn induce a sparse structure on the time dependence.
We also point out, however, that a similar description is applicable to \emph{quantum many-body scars} \cite{PhysRevB.101.205107}, where perfect revivals of quantum states imply eigenstates with energies placed in an equally spaced ladder. 
As a second family of examples, we discuss how to similarly predict all $\ell$-local reduced density matrix of a \emph{fermionic} system undergoing a fermionic Gaussian time evolution, which is a non-interacting time evolution. 
This second example, in particular, shows that orthonormal function bases beyond the Fourier basis can be very useful in the tomography of parametrized quantum states.

\subsection*{Spectrally equally spaced  Hamiltonians}

We start by discussing in detail 
notions of tomography of parametrized quantum states at hand of systems in NMR.
We denote the NMR Hamiltonian with $H$ and assume its energies are in the range $[0, E_{\max}]$. If we let a state $\rho_0$ evolve under such a Hamiltonian, we obtain a parametrized quantum state
\begin{align}
    \rho(t) &= e^{-i H t} \rho_0 e^{i H t}.
\end{align}
NMR Hamiltonians have integer-valued energies, which means we can write 
$H = \sum_{e=0}^{E_{\max}} e \Pi_e$, where $\Pi_e$ are projectors onto the eigenspaces associated to energy $e$. Inserting this shows that we can expand $\rho(t)$ into a regular Fourier series
\begin{align}
    \rho(t) &= \sum_{e=0}^{E_{\max}}\sum_{e'=0}^{E_{\max}} e^{-i (e - e') t} \Pi_e \rho_0 \Pi_{e'} \label{eqn:energy_differences_nmr}\\
    &= \sum_{k=-E_{\max}}^{E_{\max}} \alpha_k \exp( i k t),\nonumber
\end{align}
where the operators $\alpha_k$ can be obtained by grouping all terms such that $e' - e = k$. 

In the language of our paper, the Fourier basis functions $\varphi_k(t) = \exp(i k t)$ constitute a bounded orthonormal system over the domain $\calX = [0, 2\pi]$ with orthogonality measure $\mu(t) = 1/(2\pi)$ and constant $K=1$. While there are in principle infinitely many basis functions, the fact that the energy of the Hamiltonian is bounded by $E_{\max}$ means it is sufficient to consider $\Lambda = \{ - E_{\max}, -E_{\max} + 1, \dots, E_{\max}-1, E_{\max}\}$.

Having established the structure of the parametrization of the time-evolved state, we now turn to the objective. We wish to recover all $\ell$-local reduced density matrices of the parametrized quantum state. This corresponds to the semi-norm induced by the set of observables
\begin{align}
    \calO_{\ell} = \{ O : \lVert O \rVert_{\infty} \leq 1, O \text{ is } \ell\text{-local} \}.
\end{align}
To this end, we will employ the tomographic procedure based on local Clifford shadows of Ref.~\cite{huang2020predicting}, which has a sample complexity
\begin{align}
    T(\epsilon, \delta, n) = O\left( \frac{\ell 12^\ell}{\epsilon^2}  \log \frac{n}{\delta} \right),
\end{align}
as already states in \cref{sec:a general framework for tomographic procedures}. 

With the objective clear, we now have to analyze the sparsity of the parametrized state with respect to the $\ell$-local trace norm.
We now assume that the energy spectrum of $\rho_0$ is sub-Gaussian, which means that there exist constants $\tau$, $\sigma$ and $e_0$ such that
\begin{align}
    \Tr[ \rho_0 \Pi_e ] \leq \tau \exp\left(-\frac12 \frac{(e - e_0)^2}{\sigma^2} \right).
\end{align}
This assumption naturally induces sparsity, as we only expect that a energies whose distance to $e_0$ is on the order of the standard deviation $\sigma$ will contribute significantly to the time evolution. In \cref{sec:appendix_sub-Gaussian_energy_spectrum} (\cref{lem:sub-Gaussian_energy_sparsity_guarantee}), we show that the sub-Gaussian assumption additionally allows us to control the $\ell^1$ sparsity defect $\gamma_{\ell^1}$. Explicitly, we can consider the sparse support $S = \{-R, -R+1, \dots, R - 1, R\}$ for
\begin{align}\label{eqn:R_for_sparse_tomography}
    R = \tilde{O}\left( \sigma \sqrt{n +  \log \frac{\tau}{\gamma}} \right)
\end{align}
and guarantee that $\gamma_{\ell^1} \leq \gamma$. With this result at hand, we can apply \cref{alg:classical_representation} with the guarantees of \cref{thrm:sec_ToP_error_bound} to obtain an efficient tomography scheme for $\rho(\cdot)$. Note that in this particular example, we can forgo the support identification procedure, because we can infer the sparse support from the properties of the initial state.

Bringing all ingredients together, we obtain a tomographic procedure whose output $\hat{\rho}(\cdot)$ fulfills the guarantee
\begin{align}
    \bbP\big[ \tvert{ \rho(\cdot) - \hat\rho(\cdot)}_{\calO_{\ell}, 2 } \leq 2 \gamma + \epsilon \big] \geq 1 - \delta
\end{align}
by performing local Clifford shadow tomography at 
\begin{align}
    M = \tilde{O}\left( \sigma \sqrt{n + \log \frac{\tau}{\gamma}} \log \frac{E_{\max}}{\delta}  \right)
\end{align}
uniformly sampled values of $t$. This is achieved by using \cref{thrm:sec_ToP_error_bound} with $\Delta = 1$, bounding $\gamma_{\ell^2} \leq \gamma_{\ell^1}$ and using $s = 2R + 1$ with $R$ given in \cref{eqn:R_for_sparse_tomography}. At every parameter value, a number of samples
\begin{align}
    T = O\left( \frac{\ell 12^\ell}{\epsilon^2} \log \frac{n M}{\delta} \right)
\end{align}
are used, resulting in a total number of samples of
\begin{align}
    N(\epsilon, \delta, n, \ell) = \tilde{O}\left( \frac{\ell 12^\ell}{\epsilon^2} \sigma \sqrt{n + \log \frac{\tau}{\gamma}} \log \frac{E_{\max}}{\delta} \right).
\end{align}

We stress that spectrally equally spaced Hamiltonians naturally emerge in the context of
\emph{quantum many-body scars}. Signatures of 
many-body scarring have first been observed when 
a system of 51 Rydberg atoms quenched out of equilibrium did not show a
relaxation to a thermal state, but instead featured distinct recurrences \cite{GreinerSpeedup}, stimulating a body of work on quantum many-body scars~\cite{PhysRevLett.119.030601,PhysRevB.101.205107,Scars}.
In Ref.\
\cite{PhysRevB.101.205107},
the converse question has been asked, namely whether recurrences would imply many-body scars, presenting an answer to the affirmative. If recurrences arise, a large number of eigenstates must exist with entanglement features that are typical for quantum many-body scars and with energy spectral values that are equally spaced. For systems prepared in this suitable subspace, all of the above discussion apply.

\subsection*{Gaussian fermionic time evolution}

As a second example, we consider the time evolution of fermions under an arbitrary fermionic Gaussian Hamiltonian. 
In particular, we look at $n$ fermionic modes. In the Majorana representation, we construct $2n$ Hermitian operators $\gamma_i$ that pairwise anti-commute $\{ \gamma_i, \gamma_j \} = \delta_{i,j}$. The first $n$ operators can be seen as \enquote{position} and the second $n$ operators as \enquote{momentum} operators. In this representation, a Hamiltonian is given by~\cite{surace2022fermionic}
\begin{align}
H &= i \sum_{i=1}^n \sum_{j = 1}^n F_{i,j} \gamma_i \gamma_j,
\end{align}
where $F$ is a real skew-symmetric matrix. We denote $J \coloneqq \max_{i,j} |F_{i,j}|$ as the \emph{interaction strength}.

These kinds of Hamiltonians create non-interacting dynamics. This manifests as the fact that we can bring the matrix $F$ to a normal form of $n$ non-interacting fermionic modes by performing an orthogonal transformation of the vector of Majorana operators, which in turn can be physically realized by a fermionic Gaussian unitary transformation of the $n$ modes of the system. 
Practically speaking, we can find a new set of Majorana operators $\tilde\gamma = O \gamma$ such that 
\begin{align}
    H &= i \sum_{i=1}^n \lambda_i (\tilde{\gamma}_i \tilde{\gamma}_{n+i} - {\tilde\gamma}_{n+i} \tilde{\gamma}_i),
\end{align}
where $\pm i \lambda_i$ are the eigenvalues of $F$. As promised, in this form the only Majoranas associated to the same transformed mode interact. 

One particular ingredient we will need shortly is the possibility to perform a fermionic operation that maps $H$ to $-H$ to emulate evolution into the negative time direction. This is easily achieved by the orthogonal transformation
\begin{align}
    \bbZ = \begin{pmatrix}
        \bbI & 0 \\
        0 & -\bbI
    \end{pmatrix}
\end{align}
which corresponds to an exchange of the fermionic annihilation and creation operators $\tilde{a}_i$, $\tilde{a}_i^{\dagger}$ associated to the transformed modes, which is nothing but a Pauli $X$ transformation on the transformed modes. Hence, this transformation can be realized by applying the unitary $U_O$ implementing the transformation $O$, $X^{\otimes n}$, and then the inverse $U_O^{\dagger}$
\begin{align}
    \tilde\gamma &\mapsto \bbZ \tilde\gamma ,
    \\
    H &\mapsto U_{O}^{\dagger} X^{\otimes n} U_O H U_O^{\dagger} X^{\otimes n} U_O = -H. \label{eqn:transformation_time_reversal}
\end{align}
We note that for a particle-preserving fermionic Hamiltonian, there are no terms that mix \enquote{position} and \enquote{momentum} operators in the first place, and as a consequence the time reversal can be emulated without the additional unitary $U_O$.

From the normal form of the Hamiltonian we can also deduce its spectrum completely, as all possible eigenvalues of $H$ are given by the possible sums of the $\pm\lambda_i$,
\begin{align}
    \operatorname{spec}(H) = \{ \pm \lambda_1 \pm \lambda_2 \dots \pm \lambda_n \},
\end{align}
which especially means that $\lVert H \rVert_{\infty} = \lVert F \rVert_1$ and that every eigenvalue can be associated to a vector $\kk \in \{-1,1\}^n$. We will therefore use the expansion 
\begin{align}
    H = \sum_{\kk \in \{ -1,1\}^n} \lambda_{\kk} \Pi_{\kk}.
\end{align}
The time evolution of a fermionic initial state $\rho_0$ under the aforementioned Hamiltonian is then given by
\begin{align}
    \rho(t) &= e^{-i H t} \rho_0 e^{i H t} \\
    &= \sum_{\kk \in \{ -1,1\}^n } \sum_{\lll \in \{ -1,1\}^n } e^{-i (e_{\kk} - e_{\lll}) t} \Pi_{\kk} \rho_0 \Pi_{\lll}
    \nonumber
    \\
    &= \sum_{\kk \in \{ -1,1\}^n } \sum_{\lll \in \{ -1,1\}^n } e^{-i \omega_{\kk,\lll} t} \Pi_{\kk} \rho_0 \Pi_{\lll} . \nonumber
\end{align}
Here, we note that the maximum frequency is bounded as
\begin{align}
    |\omega_{\kk, \lll}| \leq \omega_{\max} = 2 \lVert H \rVert_{\infty} = 2\lVert F \rVert_1 \leq 2 n^2 J,
\end{align}
where we have recalled $J = \max_{i,j} |F_{i,j}|$.

We can well-approximate this time evolution over the interval $[-1, 1]$ using Chebychev polynomials. The Chebychev polynomials are defined for $k \geq 0$ and are given by
\begin{align}
    T_k(t) = \cos ( k \arccos t).
\end{align}
They are the unique polynomials fulfilling the relation $T_k(\cos t) = \cos kt$ and fulfill the orthogonality relation
\begin{align}
    \int_{-1}^1 \diff t \, \frac{T_k(t) T_{l}(t)}{\sqrt{1-t^2}} &=  \delta_{kl} \times \begin{cases}
        \pi & \text{ if } k=0, \\
        \frac{\pi}{2} & \text{ if } k \geq 1.
    \end{cases}
\end{align}
To obtain a system of orthonormal basis functions in the sense of this work, we define the measure
\begin{align}
    \tilde\mu(t) = \frac{1}{\pi} \frac{1}{\sqrt{1-t^2}}
\end{align}
and the rescaled Chebychev polynomials
$T_k(.)$ that satisfy
\begin{align}
    \tilde{T}_k(t) = \xi_k T_k(t),
\end{align}
where $\xi_k = 1$ if $k=0$ and $\xi_k = \sqrt{2}$ otherwise. This ONB is bounded with constant $K = \sqrt{2}$.

We can express the time evolution in terms of this orthonormal function basis as
\begin{align}
    \rho(t) = \sum_{k=0}^{\infty} \tilde\alpha_k \tilde{T}_k(t).
\end{align}
We can find the coefficients $\tilde\alpha_k$ by computing the Chebychev expansion of the Fourier basis function $e^{-i \omega t}$. 
As we establish in \cref{sec:appendix_chebychev} (\cref{lemma:exponential_to_chebychev}), we have that over the interval $[-1, 1]$
\begin{align}
        e^{-i \omega t} &= \sum_{k=0}^{\infty} i^k \xi_k J_k(\omega) \tilde{T}_k(t),
\end{align}
where $J_k$ is the $k$-th Bessel function of the first kind. Inserting this expression and grouping by $\tilde{T}_k$ yields the following formula
\begin{align}
    \tilde\alpha_k &= i^k \xi_k \sum_{\kk \in \{ 0,1\}^n } \sum_{\lll \in \{ 0,1\}^n } J_k(\omega_{\kk,\lll}) \Pi_{\kk} \rho_0 \Pi_{\lll}.
\end{align}

To apply our algorithm, we need to specify a set $S$ of coefficients (in this case indices of the relevant Chebychev polynomials) and bound the $\ell^1$ sparsity defect relative to the norm induced by the set of observables of the used tomographic procedure. As in the preceding example, we are able to construct $S$ explicitly, such that we do not need the support identification procedure.
Again, we care about $\ell$-local observables. In this case, we can bound the sparsity defect in the induced norm by the sum of the trace norm of the coefficients $\tilde\alpha_k$ for $k \in \bar{S}$ as
\begin{align}
    \gamma_{\ell^1} &= \tvert{ \tilde\aalpha_{\bar{S}}}_{\calO_{\ell},1} 
\leq \sum_{k \in \bar{S}} \tvert{ \tilde\alpha_k }_{\calO_{\ell}} 
\leq \sum_{k \in \bar{S}} \lVert { \tilde\alpha_k }\rVert_{1} .
\end{align}
We can bound the magnitude of the coefficients $\tilde\alpha_k$ by exploiting the following upper bound on the Bessel function of the first kind (see \cref{lemma:bessel_bound} of \cref{sec:appendix_chebychev})
\begin{align}
    |J_k(\omega)| \leq \left(\frac{e\omega}{2k}\right)^{k},
\end{align}
which implies that for $m > e\omega/2$, the coefficients decay at least exponentially.
Formally, we have that
\begin{align}
    \lVert \tilde\alpha_k \rVert_1 &\leq |\xi_k| \sum_{\kk \in \{ 0,1\}^n } \sum_{\lll \in \{ 0,1\}^n } |J_k(\omega_{\kk,\lll})| \lVert \Pi_{\kk} \rho_0 \Pi_{\lll} \rVert_1\\
    &\leq 2^{2n+1/2} \left(\frac{e \omega_{\max} }{2 k}\right)^{k},
\end{align}
where we have upper-bounded $\omega \leq \omega_{\max}$ and $|\xi_k| \leq \sqrt{2}$. 

Our plan is to approximate $\rho(t)$ by only including terms with $k$ up to a cutoff $R$, i.e.\ by using the set $S = \{0, 1, \dots, R\}$. 
Let us now set $R = \lceil e \omega_{\max} \rceil + R'$ for $R' \geq 0$, which guarantees that for all $k > R$, $e \omega_{\max} / 2 k \leq 1/2$. 
In this case,
\begin{align}
    \sum_{k=R+1}^{\infty} \lVert \tilde\alpha_k \rVert_1 &\leq 2^{2n+1/2} \sum_{k=R+1}^{\infty} 2^{-k} \\
    \nonumber
    &= 2^{2n+1/2} 2^{-R-1} \sum_{k=0}^{\infty} 2^{-k} \\
    \nonumber
    &= 2^{2n +1/2 -R}
\end{align}
where we have evaluated the 
geometric series for the last step. 
To achieve a certain value $\gamma$ for the tail, we need
\begin{align}
    R \geq 2n + \frac{1}{2} + \log_2 \frac{1}{\gamma}.
\end{align}
This compares favorably to the minimal cutoff $\lceil e \omega_{\max} \rceil$, which we can bound as $\lceil 2en^2 J \rceil$, which is $O(n^2 J)$. 
This means that 
\begin{align}
    R &= \max\left\{ O\left( n + \log \frac{1}{\gamma} \right), O\left( n^2 J \right) \right\} \\
    &\leq O\left( n + n^2 J + \log \frac{1}{\gamma} \right)
    \nonumber
\end{align}
is sufficient to achieve the desired sparsity defect $\gamma_{\ell^1} \leq \gamma$.

Additionally to finding a set $S$ of coefficients that approximates $\rho(t)$ to small error, we also need to truncate the infinite series of $\alpha_k$ coefficients at some $k = D$ to manipulate it in the memory of a computer. The preceding analysis suggests that this can be done at logarithmic cost in the inverse error of the finite-size approximation. As the factor $D$ only enters into the sample complexity of the analysis as $\log D$, we have a doubly-logarithmic dependence on the finite-size approximation error which we can safely neglect in the following.

Having established that a low number of Chebychev basis functions are sufficient to well-approximate the fermionic Gaussian time evolution, we are left to combine this with a tomographic procedure to obtain an actual tomography protocol. As in the previous example, we will consider the estimation of all $\ell$-local observables with bounded operator norm, which corresponds to an $\ell$-local trace distance. A shadow tomography algorithm for this task tailored to fermionic systems was proposed in Ref.~\cite{zhao2021fermionic} and has a sample complexity of
\begin{align}
    T(\epsilon, \delta, n) = O\left(n^{\ell} \frac{\ell^{3/2}}{\epsilon^2} \log \frac{n}{\delta} \right),
\end{align}
where the factor $n^{\ell}$ comes from the binomial coefficients $n$ choose $\ell$ for constant $\ell$.

Exploiting the bound $\gamma_{\ell^2} \leq \gamma_{\ell^1} \leq \gamma$ and looking at the guarantees of \cref{thrm:sec_ToP_error_bound}, we have a tomographic procedure whose output $\hat{\rho}(\cdot)$ fulfills the guarantee
\begin{align}
    \bbP\big[ \tvert{ \rho(\cdot) - \hat\rho(\cdot)}_{\calO_{\ell}, 2 } \leq 2 \gamma + \epsilon \big] \geq 1 - \delta
\end{align}
by performing local fermionic shadow tomography at 
\begin{align}
    M = \tilde{O}\left( \left(n + n^2 J + \log \frac{1}{\gamma}\right) \log \frac{D}{\delta} \right)
\end{align}
values of $t$ sampled according to $\tilde\mu$. If $t < 0$, we apply the unitary transformation of \cref{eqn:transformation_time_reversal} and evolve for time $|t|$ to emulate a time evolution in the negative direction. At every parameter value, a number of samples
\begin{align}
    T =  O\left(n^{\ell} \frac{\ell^{3/2}}{\epsilon^2} \log \frac{n M }{\delta} \right)
\end{align}
are used, resulting -- for constant $J$ -- in a total number of samples of
\begin{align}
\begin{split}
    &N(\epsilon, \delta, n, \ell) \\
    & \qquad = \tilde{O}\left( n^{\ell + 2} J \frac{\ell^{3/2}}{\epsilon^2} \log \frac{n M D}{\gamma} \log^2 \frac{1}{\delta}  \right).
\end{split}
\end{align}
In this example, it was possible to construct $S$ explicitly. However, this is not fundamental to our techniques, as the above sample complexity remain unchanged even if $S$ is unknown and needs to be determined with the support identification procedure detailed in \cref{alg:support_identification}. 

Finally, we emphasize that the same algorithm allows us to recover the time evolution in the interval $[-T, T]$ instead of $[-1, 1]$ by simply rescaling the time parameter appropriately, at the expense of increasing the interaction strength by a factor of $T$, yielding a total sample complexity of
\begin{align}
\begin{split}
    &N(\epsilon, \delta, n, \ell) \\
    & \qquad = \tilde{O}\left( n^{\ell + 2} J T \frac{\ell^{3/2}}{\epsilon^2} \log \frac{n M D}{\gamma} \log^2 \frac{1}{\delta}  \right).
\end{split}
\end{align}

\section{Tomography of parametrized quantum channels}\label{sec:channel_tomography}
While we have formulated both our framework and our algorithm for parametrized quantum states, our strategies equally well apply to parametrized quantum channels. These arise very natural, for example in the context of quantum metrology, where the knowledge of the parameter-encoding evolution is crucial to devise good sensing protocols.
In this section, we outline how the different parts of our work are generalized.

\subsection*{Parametrized quantum channels}
A parametrized quantum channel is a function from the set of parameters $\calX$ to the set of quantum channels from system $\calM_n$ to $\calM_m$, which we denote by $\calN\colon \calX \to \calM_{n \to m}$. This means that for all $\xx \in \calX$, $\calN(\xx) \in \calM_{n \to m}$ needs to be a completely positive trace-preserving map. In other words, $\calN(\cdot)$ is a \emph{parametrized superoperator}.

Similar to a parametrized quantum state, we can expand a parametrized quantum channel in terms of an orthonormal function basis $\{ \varphi_k(\cdot) \}_k$ as
\begin{align}
    \calN(\xx) = \sum_{k} \calA_k \varphi_k(\xx),
\end{align}
where the coefficients of the expansions are now superoperator-valued and given by
\begin{align}
    \calA_k \coloneqq \int \diff \mu(\xx) \,  \varphi^{*}_k(\xx) \calN(\xx),
\end{align}
where $\mu$ is the orthogonality measure of the function basis.

\subsection*{Tomographic procedures}
We generalize the notion of a tomographic procedure from states to channels by generalizing the semi-norm induced by a set of observables. In the case of quantum channels, we have an additional degree of freedom next to the observable, namely the choice of the state that the quantum channel takes as input. Also, it is now necessary to include an ancillary system of the same dimension as the input. 
We can thus define a semi-norm for channels by a set of pairs of input states and observables $\calT$ as
\begin{align}
    \lVert \calN \rVert_{\calT} \coloneqq \sup_{(\rho, O) \in \calT} | \Tr[ (\bbI \otimes \calN)[\rho] O]|.
\end{align}
Another way of looking at this definition which is closer in form to the state definition is obtained by considering \emph{quantum combs}, also known as \emph{quantum process tensors}~\cite{gutoski2007toward,chiribella2009theoretical}. In this language, the combination of state preparation and the measurement of a POVM forms a single object called a \emph{tester}~\cite{ziman2008process}. Our definition slightly generalizes this idea by weighing the different outcomes of the POVM to obtain an expectation value. 

If we use the set $\calT$ constructed by combining all possible input states on the joint system of input and ancilla with all possible joint observables $\lVert O \rVert_{\infty}\leq 1$, the induced semi-norm will be equal to the diamond norm $\lVert \calN \rVert_{\diamond}$ which is the natural generalization of the trace norm on the level of quantum channels. Even more so than in the state case, performing tomography with respect to the diamond norm is very resource intensive. 

Efficient recovery of a quantum channel is, nevertheless, possible under relaxed assumptions. For example, the method of Ref.~\cite{caro2023learning} guarantees an efficient approximation if the states and observables in $\calT$ are restricted to \emph{Pauli-sparse} states and observables, \emph{i.e.}, whose Pauli expansion has only few terms. Another way to relax the requirements of channel tomography is to replace the joint supremum over inputs and observables with a combination of supremum and expected value. An example 
for this is Ref.~\cite{huang2022learning}, which shows that channels can be learned efficiently when we only consider local observables and take the expectation 
\begin{align}
    \operatornamewithlimits{\bbE}_{\rho} \left\{ \sup_{O \text{ local}} |\Tr[ \calN[\rho] O ] |\right\} 
\end{align}
over Haar random single-qubit states on the input.
As a parametrized superoperator becomes a parametrized scalar function upon fixing a combination of input and observable, the induced $L^p$ semi-norm of parametrized superoperators and the induced $\ell^p$ semi-norm of vectors of superoperators generalize in the natural ways and the Parseval Theorem also carries over.
As the notions of induced semi-norms generalize directly, so does the notion of an approximately sparse parametrized quantum channel.

\subsection*{Algorithm}
Our algorithm itself is agnostic to the underlying quantum object, and hence can be applied in the exact same way as in the case of a parametrized quantum state. The only thing that changes are the guarantees and the sparsity constants $\gamma_{\ell^1}$ and $\gamma_{\ell^2}$, which are now given relative to the semi-norm for channels.

\subsection*{Support identification}

The support identification algorithm we outlined in this work can also be generalized in a relatively straightforward manner. In it, we have used the Hilbert-Schmidt 2-norm as a proxy for the magnitude of the coefficients $\alpha_k$. The natural generalization of this is to use the Hilbert-Schmidt 2-norm of the superoperator-valued coefficients $A_k$. It can be obtained by preparing a maximally entangled state $\ket{\Omega}$ between the input and the ancillary register, then applying the quantum channel and subsequently measuring the expectation value of a Pauli word $P \otimes Q$. This gives us access to the entries of the Pauli transfer matrix of $\calA$ given by
\begin{align}
    \Tr[ (\bbI \otimes \calA_k)[|\Omega \rangle\!\langle \Omega |] (P \otimes Q)] = 2^{-n} \Tr[Q \calA_k [P]].
\end{align}
Choosing the Pauli operators uniformly at random then gives us an estimate of the coefficients, from which we can construct an estimator as in \cref{eqn:estimator_construction_supp_id}.

\section{A short practitioners guide}\label{sec:practicioners_guide}
For convenience, we give a more intuitive summary of our results and guidance for their practical application.

Tomography is the task of constructing a classical representation $\hat{\rho}(\cdot)$ of a parametrized quantum state $\rho(\cdot)$ which should be sufficiently close for all possible values of the parameters $\xx \in \calX$. Our method is based on the simple realization that we can always expand the parameter dependence in terms of a function basis $\{ \varphi_k(\cdot) \}$ 
\begin{align}
    \rho(\xx) &= \sum_{k \in \Lambda} \alpha_k \varphi_k(\xx)
\end{align}
for $\calX \to \bbC$.  
The choice of the function basis itself is crucial and will necessarily depend on the kind of parametrized quantum state we deal with. In this work, we considered two very different examples with very different properties, namely the Fourier basis on one hand and the Chebychev polynomials on the other hand. Note that we always have to assume that the state can be approximated with a finite number of basis functions $k \in \Lambda$ as we have to hold a vector of size $D = |\Lambda|$ in memory. 

It comes to no great surprise that for an arbitrary parametrized quantum state, the sample complexity must scale as $\Omega(D)$ as we have to recover all the coefficients $\{ \alpha_k \}$. We show that a randomized protocol can achieve this sample complexity in \cref{thrm:sec_ToP_error_bound_full_recovery} with favorable overheads. 

Interestingly, we can do much better if the parameter dependence of the state is \emph{sparse}, i.e.\ there exists a set $S \subset \Lambda$ of small cardinality $s = |S|$ such that
\begin{align}
    \rho_S(\xx) = \sum_{k \in S} \alpha_k \varphi_k(\xx)
\end{align}
approximates $\rho(\xx)$ well. Sparsity of the time dependence is a property that exists relative to the chosen function basis $\{ \varphi_k \}$, which means that a state that is sparse with respect to one basis will usually not be sparse with respect to another basis. The flip side of this is that there often exist bases that allow for a sparse representation of a specific parametrized quantum state $\rho(\cdot)$. The challenge lies in finding a function basis from first principles that allows for a sparse representation of a whole family of parametrized quantum states, something that we exemplified in our two example applications.

Given a sparse state, we present a protocol that performs tomography of the parametrized quantum state with a sample complexity of $\tilde{O}(s \log D)$, see \cref{thrm:sec_ToP_error_bound}, which means that the savings in samples for tomography of a parametrized quantum state can be exponentially large.  

A crucial property of our proposed protocols is that they use a state tomography scheme as a black box. This allows us to switch between different schemes of full or approximate tomography (like shadow tomography), while still retaining the guarantees of the tomographic procedure in a sensible way over the full parameter space $\calX$. 

It is important to mention a few caveats and details that are relevant in practice. 
When we say efficient, we generally mean sample efficient as well as computationally efficient. 
However, when choosing $\Lambda$ it should be noted that our algorithm involves computations on the $M \times D$ sized matrix $A$ and scales linearly in $D$ in computational resources~\cite{blumensath2009iterative}. 
This contrasts with the sample complexity which scales only logarithmic in $D$. 
When finding an orthonormal basis, care needs to be taken that the constant $K=\norm{\varphi_k}_{\infty}$ that bounds the function basis does not scale unfavorably, as this could negate some of the gains we get from our randomized procedures.

\section{Future directions}\label{sec:future_directions}
Our work opens up many exciting directions of research into the better characterization and understanding of parametrized quantum circuits. In this section, we sketch some of these
future directions that may emerge from our work.

\subsection*{Applications}
Key to the success of our methods is knowledge of an orthonormal basis in which the parametrized quantum state can be expressed with few basis elements. The main challenge in applying our techniques lies in identifying such good basis sets for problems of practical interest. Inspiration here might be taken from the vast literature on basis sets for computational methods in quantum chemistry and solid state physics, for which a body of heuristic knowledge is available \cite{Perlt,Scheffler}.
We expect that often, such basis sets will be found by observing what works in practice and will, therefore, be of a heuristic nature, not necessarily accompanied by formal proofs. However, we consider the rigorous study of basis sets and their approximation quality for parametrized quantum states to be of equal interest and as a question of fundamental relevance in its own right.

An interesting way of realizing compressed sensing approaches that work well in practice is to employ overcomplete bases or use dictionary methods that combine different bases together to induce a sparse representation of a signal~\cite{candes2011compressed,PhysRevB.90.115110,Ozolins}. Taking the 
same avenue to characterize parametrized quantum systems in practice should prove fruitful.

\subsection*{Algorithms}
The plug and play nature of our framework is well suited for the incorporation of other compressed sensing techniques, because large parts of the algorithm are agnostic to the actual technique used. However, as seen in \cref{sec:identification of a sparse support}, the extension to the quantum case brings its own difficulties and pitfalls and is not guaranteed to be straight forward.

There are many advanced techniques to perform compressed sensing in different scenarios that one could presumably integrate with our recovery algorithm. The aforementioned approach of using frames and overcomplete dictionaries allows to better recover signals that do not have a sparse expansion in a known orthonormal basis, a likely feature in some practical applications. Another promising approach allows one to replace the sampling from the orthogonality measure of the orthonormal basis $\{\mu(.)\}$, with a more measure that is more accessible in an experiment~\cite{lee2013oblique}.

Beyond harmonizing our protocols with more powerful tools from compressed sensing, there are also other directions that we find promising. As explained in \cref{sec:identification of a sparse support}, finding the optimal support $S$ involves a minimization that is computationally intractable. We, therefore, give a different figure of merit that we argue works well in practice. Can we find support identification protocols with yet other figures of merit? Here, it is interesting to device algorithms that find the largest coefficients $\alpha_k$ not in Hilbert-Schmidt norm, but in another induced semi-norms $\norm{\cdot}_{\calO}$. On the highest
level, these are ideas of model selection.
At the end of the day, one will have to strike a balance between mathematical and conceptual approaches \cite{Akaike} as well as
heuristic ones \cite{ExperimentalCompressed}.
This would, presumably, also require algorithms that efficiently evaluate the semi-norm of a given operator.

A further noteworthy point is that the procedure introduced in this manuscript is modular: the tomographic problem is split into a recovery problem and a support identification problem. This contrasts with the situation in traditional compressed sensing, where support identification and recovery are carried out in an integrated fashion. Can such an integrated algorithm also be given in the quantum setting? Are there potential advantages to such an approach?

\subsection*{Tomographic tasks}

A reoccurring theme in the research field of quantum state tomography has been to simplify the task by focusing it on the most important properties of a state. 
Central to the success of \emph{shadow tomography} \cite{aaronson2007learnability,huang2020predicting} is the insight that most of the time, one is not interested in knowing all observables, but only a finite collection of them, which turns out the make a significant difference \cite{aaronson2007learnability}.

While we already include the different restrictions on the tomographic task in our figure of merit, we note that requiring that predictions can be made in the whole parameter space $\calX$ might be excessive in some applications. This could be because the parametrization is too complex and eludes an efficient description or because only small parts of the parameter space $\calX$ are relevant, for examples when performing an optimization step in a quantum neural network.
Here, it is interesting to consider relaxations of the tomographic task for parametrized quantum states, \emph{i.e.}, a tomographic procedures with performance guarantees for a small environment around a fixed point in parameter space $\calX$.

Going in the opposite direction of relaxations, the protocol we give for tomography of parametrized quantum states solves the tomographic task relative to the norm $\tvert{\cdot}_{\calO,p}$, where $p=2$. Open questions and conceivably
harder are the cases $p=1$ and $p=\infty$, where the latter would ensure $\epsilon$-closeness to the original state for all points in $\calX$, not in an average sense, but even in the worst case. Here, we expect that schemes efficient in the number of basis functions cannot be given without further restrictions on the parametrized quantum 
state. 

\section{Conclusions}\label{sec:conclusion}

Parametrized quantum states are of fundamental interest in quantum theory. They occur in many different areas, including quantum optimization and machine learning, dynamical quantum systems, quantum metrology and quantum many body physics. Characterizing such objects well is of crucial importance. For quantum states without parameter dependence, this task is broadly known as \emph{tomography}. But what is its analogue for parametrized quantum states? 

We address this question by first defining the notion of a tomographic procedure, which unifies existing approaches to quantum state tomography in a single framework. This is done by expressing the figures of merit of different tomography tasks in a systematic fashion: as a semi-norm induced by the set of observables targeted. We show that this framework can be naturally combined with function $L^p$ norms such that it extends to parametrized quantum states. This way, we can rigorously define the task of performing tomography for parametrized quantum states.  

With this in place, we provide a first class of learning algorithms for this new task. Our algorithm combines a tomographic procedure for quantum states with a recovery algorithm from signal processing.

We show that by using techniques from compressed sensing as recovery algorithms, we can exploit \emph{structure} in the form of sparsity of the parameter dependence to significantly lower the
number of quantum states that need to be prepared. Overall, the resulting scheme is efficient if two conditions hold: The tomographic procedure chosen as input is efficient and the parametrization admits an efficient description in terms of an orthonormal basis. 

In quantum information science, we further encounter parametrized evolutions, \emph{i.e.},  parametrized quantum channels. This happens for example in the context of quantum metrology, where knowledge of the dependence of the evolution on the parameter to be estimated is crucial to devise protocols or when error channels in quantum devices depend on adjustable parameters of the system.
Analogous to the case of parametrized quantum states, one may extend notions of quantum process tomography, and define the task of tomography for parametrized quantum channels. By substituting a tomographic procedure with a protocol for process tomography as subroutine of our algorithm, we obtain a tomography scheme that applies to parametrized quantum channels. Again, the resulting scheme is efficient if both the process tomography protocol chosen as input and the description of the parametrization is efficient.

To showcase the techniques introduced in this manuscript, we apply them to quantum states with sub-Gaussian energy spectrum that are parametrized by the time evolution under an NMR Hamiltonian. We combine a tomographic procedure based on local Clifford shadows with a compressed sensing algorithm to obtain a classical representation that recovers all $\ell$-local reduced density matrices from the original parametrized quantum state.

There are many open questions and novel research directions associated to the tomography of parametrized quantum states. At the core, there is a need for further understanding of these states and their structural properties. To move to applications of practical interest, it is central to find orthonormal basis functions that allow for an efficient decomposition of the parametrized quantum states in question. In quantum state tomography, shadow tomography caused a change in paradigm by arguing for an alternative figure of merit, giving unexpected insights into the structure of quantum states. In the same way, it is interesting to investigate novel figures of merit for tomography of parametrized quantum states. These might be tailored with certain practical applications in mind or be motivated by purely theoretical considerations. We give a detailed survey of these and further open questions.

To summarize, tomography of parametrized quantum states is a new paradigm in quantum information theory. Derived from quantum state tomography, it is a task of fundamental interest in its own right. Furthermore, it is motivated by the need to characterize the class of parametrized quantum states, whose importance is only pronounced by the advent of ever more complex quantum devices. Here, we give a first treatment of this novel tomographic task, including an efficient class of algorithms to address it. We hope that our work stimulates further 
compressed sensing based methods for 
tomography of parametrized quantum states.

\section{Acknowledgments}
The authors wish to thank Matthias Caro and Lennart Bittel for insightful discussions.
We also acknowledge early discussions with Siriu Lu. This work has been supported by the DFG (CRC 183, FOR 2724), by the BMBF (Hybrid), the BMWK (PlanQK, EniQmA), the Munich Quantum Valley (K-8), QuantERA (HQCC) and the Einstein Foundation (Einstein Research Unit on Quantum Devices). This work has also been funded by the DFG under Germany's Excellence Strategy – The Berlin Mathematics Research Center MATH+ (EXC-2046/1, project ID: 390685689).
\vspace{1em}

\section{Author contributions}
F.S.\ has led the project. 
The project has been initiated by J.E.
F.S.\ and J.J.M.\ have developed the algorithms and conducted their theoretical analysis. J.J.M.\ has supervised the project with support of J.E.
All authors contributed to 
the development of meaningful examples 
and the writing of the manuscript.

\bibliography{main}

\clearpage

\onecolumngrid
\appendix

\section{Proof of \cref{lem:sec_ToP_approximately_sparse_error_term}}
\label{sec:appendix_proof_sparse_spillover}

This section is devoted to the proof of \cref{lem:sec_ToP_approximately_sparse_error_term} which bounds the spillover of contributions from inexact sparsity into our estimate. Before we come to the proof, we establish an auxiliary lemma based on the following result.
\begin{slemma}[Proposition 6.3 of Ref.~\cite{foucart2013mathematical}]\label{lem:operator_norm_estimate_disjoint_sets_rip}
Let $A \in \bbC^{M \times D}$ and denote with $A_S = A \Pi_S$ and $A_{S'} = A \Pi_{S'}$ its restriction to disjoint sets of indices $S$ and $S'$ of cardinality at most $s$. Then,
\begin{align}
    \lVert A^{\dagger}_S A_{S'} \rVert_{\infty} \leq \Delta_{2s}(A).
\end{align}
\end{slemma}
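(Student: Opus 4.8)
The plan is to reduce the operator-norm bound to a scalar inner-product estimate and then invoke the restricted isometry property of order $2s$ through a polarization identity. First I would rewrite the operator norm as
\begin{align}
    \norm{A_S^{\dagger} A_{S'}}_{\infty} = \sup_{\substack{\norm{\uu}_2 = \norm{\ww}_2 = 1 \\ \operatorname{supp}(\uu)\subseteq S,\ \operatorname{supp}(\ww)\subseteq S'}} \left| \langle A\uu, A\ww\rangle \right| ,
\end{align}
which is justified by noting that $A_S^{\dagger}A_{S'}\ww = \Pi_S A^{\dagger} A \Pi_{S'}\ww$ is always supported on $S$ and depends only on $\Pi_{S'}\ww$, so in $\sup_{\norm{\uu}=\norm{\ww}=1}|\langle \uu, A_S^{\dagger}A_{S'}\ww\rangle|$ one may restrict $\uu$ to be supported on $S$ and $\ww$ on $S'$, and then $\langle \uu, A_S^{\dagger}A_{S'}\ww\rangle = \langle A\uu, A\ww\rangle$ for such vectors.

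Next, for fixed admissible $\uu,\ww$ I would multiply $\ww$ by a unit-modulus phase so that $\langle A\uu, A\ww\rangle$ becomes real and non-negative; this changes neither $\norm{\ww}_2$ nor $\operatorname{supp}(\ww)$, so it suffices to bound $\operatorname{Re}\langle A\uu, A\ww\rangle$, which the (real) polarization identity writes as
\begin{align}
    \operatorname{Re}\langle A\uu, A\ww\rangle = \tfrac14\left( \norm{A(\uu+\ww)}_2^2 - \norm{A(\uu-\ww)}_2^2 \right).
\end{align}
Because $S$ and $S'$ are disjoint, $\langle\uu,\ww\rangle = 0$, hence $\norm{\uu\pm\ww}_2^2 = \norm{\uu}_2^2 + \norm{\ww}_2^2 = 2$, and $\uu\pm\ww$ is supported on $S\cup S'$ with $|S\cup S'|\le 2s$. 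Applying the RIP of order $2s$ to both terms gives $\norm{A(\uu+\ww)}_2^2 \le 2\big(1+\Delta_{2s}(A)\big)$ and $\norm{A(\uu-\ww)}_2^2 \ge 2\big(1-\Delta_{2s}(A)\big)$, so $\operatorname{Re}\langle A\uu, A\ww\rangle \le \tfrac14\big(2(1+\Delta_{2s}(A)) - 2(1-\Delta_{2s}(A))\big) = \Delta_{2s}(A)$. Taking the supremum over $\uu,\ww$ yields $\norm{A_S^{\dagger}A_{S'}}_{\infty}\le \Delta_{2s}(A)$.

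I do not expect a genuine obstacle here, since the argument is entirely standard; the only points that need care are the passage to the complex case — one must absorb the phase of $\langle A\uu,A\ww\rangle$ into $\ww$ before using the real polarization identity — and the fact that disjointness of $S$ and $S'$ is used twice, once to guarantee that $\uu\pm\ww$ is $2s$-sparse and once to compute $\norm{\uu\pm\ww}_2^2 = 2$. A slightly longer but essentially equivalent alternative would be to observe that $\Pi_{S\cup S'} - A_{S\cup S'}^{\dagger}A_{S\cup S'}$ has operator norm at most $\Delta_{2s}(A)$ by the RIP and that $A_S^{\dagger}A_{S'}$ is precisely an off-diagonal block of $A_{S\cup S'}^{\dagger}A_{S\cup S'}$ (the corresponding block of $\Pi_{S\cup S'}$ being zero), so the block inherits the same bound; I would favor the polarization proof for brevity.
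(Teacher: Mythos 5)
Your proof is correct, and it is essentially the argument behind the result the paper simply cites (Proposition 6.3 of Foucart--Rauhut): restrict to unit vectors supported on $S$ and $S'$, absorb a phase, and apply the real polarization identity together with the RIP of order $2s$ to the $2s$-sparse, mutually orthogonal combinations $\uu\pm\ww$. The paper gives no proof of its own here, so there is nothing further to compare; both your polarization route and the off-diagonal-block alternative you sketch are sound.
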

We are now ready to bound the spillover at the level of the measurement matrix in terms of its RIP constant.

\begin{slemma}[Bound to spillover] \label{lem:operator_norm_disj_sets_pseudo_inv}
    Let $A \in \bbC^{M \times D}$ with $\Delta_{s}(A/\sqrt{M}) < 1$ and $S, S'$ disjoint sets of cardinality at most $s$ such that. Then,
    \begin{align}
        \norm{A_S^+A_{S'}} \leq \frac{\Delta_{2s}(A/\sqrt{M})}{1-\Delta_{2s}(A/\sqrt{M})} \, .
    \end{align}
\end{slemma}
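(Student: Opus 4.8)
The plan is to reduce the claim to two textbook estimates through the closed form of the pseudo-inverse. Since $\Delta_s(A/\sqrt{M}) < 1$, the self-adjoint matrix $\tfrac{1}{M}A_S^\dagger A_S$ has all eigenvalues in $[1-\Delta_s(A/\sqrt M),\, 1+\Delta_s(A/\sqrt M)] \subseteq (0,2]$, so $A_S^\dagger A_S$ is invertible, $A_S$ is injective, and $A_S^+ = (A_S^\dagger A_S)^{-1}A_S^\dagger$ is well defined. First I would write $A_S^+ A_{S'} = (A_S^\dagger A_S)^{-1}(A_S^\dagger A_{S'})$ and apply submultiplicativity of the operator (spectral) norm $\norm{\cdot}$, which here coincides with the $\norm{\cdot}_\infty$ used in \cref{lem:operator_norm_estimate_disjoint_sets_rip}, to get $\norm{A_S^+ A_{S'}} \leq \norm{(A_S^\dagger A_S)^{-1}}\,\norm{A_S^\dagger A_{S'}}$.

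Next I would bound the two factors separately. For the first, the spectrum of $A_S^\dagger A_S = M\cdot\tfrac{1}{M}A_S^\dagger A_S$ lies in $[M(1-\Delta_s(A/\sqrt M)),\, M(1+\Delta_s(A/\sqrt M))]$, whence $\norm{(A_S^\dagger A_S)^{-1}} \leq \big(M(1-\Delta_s(A/\sqrt M))\big)^{-1}$. For the second, I use that $S$ and $S'$ are disjoint with $|S|,|S'|\leq s$, so $|S\cup S'|\leq 2s$; applying \cref{lem:operator_norm_estimate_disjoint_sets_rip} to the rescaled matrix $A/\sqrt M$ yields $\tfrac{1}{M}\norm{A_S^\dagger A_{S'}} = \norm{(A/\sqrt M)_S^\dagger (A/\sqrt M)_{S'}} \leq \Delta_{2s}(A/\sqrt M)$, i.e. $\norm{A_S^\dagger A_{S'}} \leq M\,\Delta_{2s}(A/\sqrt M)$.

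Multiplying the two bounds, the factors of $M$ cancel and I get $\norm{A_S^+ A_{S'}} \leq \Delta_{2s}(A/\sqrt M)/(1-\Delta_s(A/\sqrt M))$. Since always $\Delta_s(A/\sqrt M)\leq\Delta_{2s}(A/\sqrt M)$, this is at most $\Delta_{2s}(A/\sqrt M)/(1-\Delta_{2s}(A/\sqrt M))$ whenever $\Delta_{2s}(A/\sqrt M) < 1$ (and when $\Delta_{2s}(A/\sqrt M)\geq 1$ the asserted bound is vacuous), which is exactly the claim; in the intended application one has $\Delta_{2s}(A/\sqrt M)\leq 1/2$, so this last step is harmless.

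I do not expect a genuine obstacle: the argument is a one-line factorization plus two standard RIP estimates. The only point requiring care is the bookkeeping of the $1/\sqrt M$ normalization — \cref{lem:operator_norm_estimate_disjoint_sets_rip} is phrased for an abstract matrix in terms of $\Delta_{2s}(\cdot)$, so it must be invoked for $A/\sqrt M$ rather than for $A$, and one must check that the factor $M$ this introduces cancels the $1/M$ coming from $\norm{(A_S^\dagger A_S)^{-1}}$. A secondary, purely cosmetic point is reconciling the unsubscripted $\norm{\cdot}$ in the statement with the subscripted $\norm{\cdot}_\infty$ elsewhere in the paper; both denote the spectral norm.
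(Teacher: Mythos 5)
Your proof is correct and follows essentially the same route as the paper's: factor $A_S^+A_{S'}=(A_S^\dagger A_S)^{-1}A_S^\dagger A_{S'}$, use submultiplicativity, bound the inverse Gram factor by $(M(1-\Delta_s))^{-1}$ via the RIP, bound the cross term by $M\Delta_{2s}$ via \cref{lem:operator_norm_estimate_disjoint_sets_rip} applied to $A/\sqrt{M}$, and finish with $\Delta_s\leq\Delta_{2s}$. Your explicit tracking of the $1/\sqrt{M}$ normalization and the remark about the last step requiring $\Delta_{2s}<1$ are both consistent with (and slightly more careful than) the paper's presentation.
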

\begin{proof}
    Due to the RIP constant being smaller than one, $A_S$ is injective. For injective $A_S$, the pseudo-inverse is given as $A_S^+=(A_S^{\dagger} A_S)^{-1} A_S^{\dagger}$. Then,
    \begin{align}
    \nonumber
        \lVert A_S^{+} A_{S'} \rVert_{\infty} &= \lVert (A_S^{\dagger} A_S)^{-1} A_S^{\dagger} A_{S'} \rVert_{\infty} \\
        \nonumber
        &\leq \lVert (A_S^{\dagger} A_S)^{-1} \rVert_{\infty} \lVert A_S^{\dagger} A_{S'} \rVert_{\infty} \\
        &= \left\lVert \left(\frac{A_S^{\dagger}}{\sqrt{M}} \frac{A_S}{\sqrt{M}}\right)^{-1} \right\rVert_{\infty}\left\lVert \frac{A_S^{\dagger}}{\sqrt{M}} \frac{A_{S'}}{\sqrt{M}} \right\rVert_{\infty} \\
        \nonumber
        &\leq \frac{\Delta_{2s}}{1 - \Delta_s} \\
        \nonumber
        &\leq \frac{\Delta_{2s}}{1 - \Delta_{2s}}.\label{eqn:operator_norm_crossterm}
    \end{align}
    The first inequality is the generic operator norm bound for products of matrices, the second equality adds the normalization to establish the restricted isometry property of $A$. The second inequality combines the restricted isometry property of $A$ with \cref{lem:operator_norm_estimate_disjoint_sets_rip} and the final inequality follows from $\Delta_s \leq \Delta_{2s}$.
\end{proof}
We are now ready to present the proof.

\begin{customthm}{\cref{lem:sec_ToP_approximately_sparse_error_term}}
    If $\tvert{\aalpha}_{\calO, 1} \leq \gamma_{\ell^1}$ and $\Delta_{2s}(A/\sqrt{M}) \leq \Delta/2 \leq 1/2$, then
    \begin{align}
        \tvert{A_S^+A_{\bar{S}} \aalpha_{\bar{S}}}_{\calO,2} \leq \Delta \gamma_{\ell^1} \, .
    \end{align}
\end{customthm}
\begin{proof}
    Before we commence, we reiterate that, for a given set $S$ and a vector of operators $\XX$, $\XX_S = \Pi_S \XX$ takes the value $X_i$ for all entries with indices $i \in S$ and else $0$. For an observable $O$, $\XX_O \coloneqq (\Tr[OX_1], \Tr[OX_2], \dots)$. Then, $(\XX_S)_O$ takes the value $\Tr[OX_i]$ for $i \in S$ and else $0$. 
    We can partition $\bar{S}$ into disjoint sets $S_i$ such that $\bar{S}= \bigcup_i S_i$ and $|S_i|\leq |S|$. 
    \begin{align}
        \tvert{A_S^+A_{\bar{S}} \aalpha_{\bar{S}}}_{\calO,2} &= \sup_{O\in \calO} \norm{A_S^+A_{\bar{S}} (\aalpha_{\bar{S}})_O}_2 \\
        \nonumber
        &= \sup_{O\in \calO} \norm{\sum_i A_S^+ A_{S_i} (\aalpha_{S_i})_O}_2 \\
        \nonumber
        &\leq \sup_{O \in \calO} \sum_i \norm{A_S^+ A_{S_i} (\aalpha_{S_i})_O}_2 \\
        \nonumber
        &\leq \sup_{O \in \calO} \sum_i \norm{A_S^+A_{S_i}}_{\infty} \norm{(\aalpha_{S_i})_O}_2 \\
        \nonumber
        &\leq \frac{\Delta_{2s}}{1 - \Delta_{2s}} \sup_{O \in \calO} \sum_i \norm{(\aalpha_{S_i})_O}_2 \\
        \nonumber
        &\leq \frac{\Delta_{2s}}{1 - \Delta_{2s}} \sup_{O \in \calO} \norm{(\aalpha_{S_i})_O}_1 \\
        \nonumber
        &=  \frac{\Delta_{2s}}{1 - \Delta_{2s}} \tvert{\aalpha_{\bar{S}}}_{\calO,1} \\
        &\leq \Delta \gamma_{\ell^1} \, .
        \nonumber
    \end{align}
    The first inequality is just the triangle inequality, the second uses the definition of the operator norm as the $2\to 2$ norm. The third inequality applies \cref{lem:operator_norm_disj_sets_pseudo_inv} and the fourth uses that $\norm{\xx}_2 \leq \norm{\xx}_1$ holds for all $\xx$. For the last inequality, note that for $x \leq 1/2$,
    \begin{align}
        \frac{x}{1 - x} \leq x \, .
    \end{align}
\end{proof}

\section{Proof of \cref{thrm:sec_ISS_hoeffding}}
\label{sec:appendix_proof_estimator}
\begin{stheorem}[Hoeffding's inequality]
    Let $X_1, X_2, \dots, X_M$ be a number of $M$ random variables such that $|X_i| \leq B$ for all $i$ and denote their empirical mean with $\hat{\mu} \coloneqq \sum_i X_i /M$. Then, one achieves
    \begin{align}
        \left| \frac{1}{M} \sum_{i=1}^M X_i - \bbE \left[\hat{\mu} \right] \right|\leq  \epsilon
    \end{align}
    with probability at least $1-\delta$ by using 
    \begin{align} \label{eqn:SIT_general_scaling_in_M}
        M \geq  \log\left( \frac{2}{\delta}\right) \frac{ B^2}{2\epsilon^2}
    \end{align}
    many samples.
\end{stheorem}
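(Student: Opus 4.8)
The plan is to prove this via the standard exponential-moment (Chernoff) method, treating the $X_i$ as independent so that $\bbE[\hat\mu] = \frac1M \sum_i \bbE[X_i]$. First I would center the variables by setting $Y_i \coloneqq X_i - \bbE[X_i]$, so that each $Y_i$ has mean zero and is confined to an interval of the same width as the range of $X_i$. I would bound the upper tail $\bbP[\hat\mu - \bbE[\hat\mu] \geq \epsilon]$ first and recover the two-sided statement at the end by symmetry together with a union bound over the two tails.

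For the upper tail, I would apply Markov's inequality to the exponentiated sum: for any $t>0$,
\begin{align}
    \bbP\Big[ \sum_i Y_i \geq M\epsilon \Big] \leq e^{-tM\epsilon}\, \bbE\big[e^{t\sum_i Y_i}\big] = e^{-tM\epsilon}\prod_{i=1}^M \bbE\big[e^{tY_i}\big],
\end{align}
where the factorization uses independence. The technical heart is then \emph{Hoeffding's lemma}: for a mean-zero random variable $Y$ supported on an interval of width $w$, one has $\bbE[e^{tY}] \leq e^{t^2 w^2/8}$. I would prove this by writing the log-moment-generating function $\psi(t) = \log \bbE[e^{tY}]$, observing $\psi(0)=\psi'(0)=0$, and bounding $\psi''(t) \leq w^2/4$ via the fact that $\psi''(t)$ equals the variance of $Y$ under the exponentially tilted measure $e^{tY}/\bbE[e^{tY}]$, which for a variable of range $w$ is at most $w^2/4$ by Popoviciu's inequality; a second-order Taylor expansion then yields $\psi(t)\leq t^2 w^2/8$.

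Combining these ingredients, the upper-tail bound becomes $e^{-tM\epsilon + t^2 M w^2/8}$, and I would optimize over $t$ by taking $t = 4\epsilon/w^2$, which gives $\bbP[\hat\mu - \bbE[\hat\mu]\geq\epsilon] \leq e^{-2M\epsilon^2/w^2}$. The lower tail follows by applying the identical argument to the $-Y_i$, so a union bound gives $\bbP[|\hat\mu - \bbE[\hat\mu]| \geq \epsilon] \leq 2\, e^{-2M\epsilon^2/w^2}$. Requiring the right-hand side to be at most $\delta$ and solving for $M$ produces the stated sample complexity.

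The main subtlety to watch is the precise constant, which hinges entirely on the effective range $w$ of the $X_i$. The stated bound $M \geq \frac{B^2}{2\epsilon^2}\log\frac{2}{\delta}$ corresponds to $w = B$, that is, to variables confined to an interval of width $B$ (for instance nonnegative variables $X_i \in [0,B]$). This is exactly the regime in which the lemma is invoked in \cref{thrm:sec_ISS_hoeffding}, where the relevant summands $|c_k^{\#}(P_l)|^2 \in [0,(1+\kappa)^2]$ are nonnegative. If one instead works from the literal hypothesis $|X_i|\leq B$, the worst-case range is $w = 2B$ and the same argument yields a constant four times larger; I would therefore phrase the centering step so as to exploit the width-$B$ (nonnegative) situation, making the constant consistent with its downstream use.
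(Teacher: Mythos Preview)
The paper does not actually prove this statement: in \cref{sec:appendix_proof_estimator} Hoeffding's inequality is merely \emph{stated} as a standard result, and the proof that follows it is the proof of \cref{thrm:sec_ISS_hoeffding}, which invokes Hoeffding's inequality as a black box. Your Chernoff-method argument via Hoeffding's lemma is the standard textbook proof and is correct.

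Your closing remark about the constant is well taken. Under the literal hypothesis $|X_i|\leq B$ the range is $w=2B$, and the Chernoff bound yields $M \geq \tfrac{2B^2}{\epsilon^2}\log\tfrac{2}{\delta}$, a factor of four larger than what is stated. The constant as written matches variables confined to an interval of width $B$, which is precisely the situation in the downstream application where the summands $|c_k^{\#}(P_l)|^2$ lie in $[0,(1+\kappa)^2]$. So the statement's constant is loose as phrased but adequate for how it is used, and you have correctly diagnosed both the discrepancy and why it is harmless here.
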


\begin{customthm}{\cref{thrm:sec_ISS_hoeffding}}
    Using HTP to probe linear systems of the form \cref{eqn:sec_ISS_realistic linear system} to obtain data vectors $\cc^{\#}(P)$, one achieves
    \begin{align}
        | \hat{X}_k - X_k |\leq  \epsilon 
    \end{align}
    with probability at least $1-\delta$ by using 
    \begin{align} \label{eqn:SIT_general_scaling_in_M}
        L \geq \log\left( \frac{2D}{\delta}\right) \frac{\left(1 + \kappa\right)^2}{2 \epsilon^4} 
    \end{align}
    data vectors.
\end{customthm}
\begin{proof}
    From \cref{lem:sec_ISS_coefficient_bound}, each term in the empirical mean
    \begin{align}
        \hat{X}_k \coloneqq \sqrt{\frac{1}{L} \sum_{l=1}^L |c_k^{\#}(P_l)|^2} 
    \end{align}
    is bounded as $|c_k^{\#}(P)|^2 \leq (1 + \kappa)^2$. Thus, we can apply \cref{thrm:sec_ISS_hoeffding} (Hoeffding's inequality) to the term $\frac{1}{L} \sum_{l=1}^L |c_k^{\#}(P_l)|^2$. As 
     \begin{align}
    |\sqrt{x}-\sqrt{x\pm \epsilon}| \leq \sqrt{\epsilon}, 
    \end{align}
    we need a precision of $\epsilon^2$ to ensure an error of at most $\epsilon$ for $|\hat{X}_k - X_k|$. Estimating all $D$ empirical means $\hat{X}_k$ to failure probability at most $\delta/D$ and taking the union bound gives the desired result.
\end{proof}

\section{Proof of \cref{thrm:sec_ISS_separability_criterion_lowerB_assumptions}}\label{sec:appendix_proof_lowerB_assumptions}

\begin{customthm}{\cref{thrm:sec_ISS_separability_criterion_lowerB_assumptions}}
    Choosing $\epsilon$ such that 
\begin{align}
    \min_{k \in S} \left\{\norm{\alpha_k} - \norm{\gamma_k}_2 \right\} &-  \left(1 + \frac{1}{\beta_{\cc}} \right)\max_{k' \in \bar{S}} \norm{\alpha_{k'}}_2  \nonumber \\
    &- \left(1 + \frac{1}{\beta_{\nn}} \right)\max_{k' \in \bar{S}} \norm{\gamma_{k'}}_2 \geq 2 \epsilon
\end{align}
    guarantees correct identification of the sparse support.
\end{customthm}
\begin{proof}
    The identification of the sparse support succeeds if 
    \begin{align}
        \min_{k \in S} X_k - \max_{k' \in \bar{S}} X_{k'} \geq 2 \epsilon \, .
    \end{align}
    We, therefore,  derive an operationally meaningful lower bound to $\min_{k \in S} X_k$ and an upper bound to $\max_{k' \in \bar{S}} X_{k'}$. Starting with the latter, we begin by noting that there is some family of indicator functions $\chi_{k',l}$ such that
    \begin{align}
        \sqrt{\bbE\left[|c_{k'}^{\#}(P)|^2\right]} = \left( \frac{1}{d} \sum_{l=1}^d |c_{k'}(B_l) + n_{k'}(B_l)|^2 \chi_{k',l} \right)^{\frac{1}{2}} \, .
    \end{align}
    From there, we have 
    \begin{align}
        \sqrt{\bbE\left[|c_{k'}^{\#}(P)|^2\right]} &= \left( \frac{1}{d} \sum_{l=1}^d |c_{k'}(B_l) + n_{k'}(B_l)|^2 \chi_{k',l} \right)^{\frac{1}{2}} \\
        \nonumber
        &\leq \left( \frac{1}{d} \sum_{l=1}^d |c_{k'}(B_l) + n_{k'}(B_l)|^2 \right)^{\frac{1}{2}} \\
        \nonumber
        &\leq \left( \frac{1}{d} \sum_{l=1}^d |c_{k'}(B_l)|^2 \right)^{\frac{1}{2}} + \left( \frac{1}{d} \sum_{l=1}^d |n_{k'}(B_l)|^2 \right)^{\frac{1}{2}} \\
        \nonumber
        &= \norm{\alpha_{k'}}_2 + \norm{\gamma_{k'}}_2 \, .
        \nonumber
    \end{align}
    In the second step, we have used the triangle inequality. For the lower bound, we similarly start by expressing the expectation value with an indicator function.
    \begin{align}
        \sqrt{\bbE\left[|c_k^{\#}(P)|^2\right]} = \left( \frac{1}{d} \sum_{l=1}^d |c_k(B_l) + n_k(B_l)|^2 \chi_{k,l} \right)^{\frac{1}{2}} \, ,
    \end{align}
    where 
    \begin{align}
    \chi_{k,l} \coloneqq \chi\left(|c_k(B_l)+n_k(B_l)|>\max_{k' \in \bar{S}} |c_{k'}(B_l) + n_{k'}(B_l)| \right) \, .
    \end{align}
    From there, we have
    \begin{align}
        \sqrt{\bbE\left[|c_k^{\#}(P)|^2\right]} &= \left( \frac{1}{d} \sum_{l=1}^d |c_k(B_l) + n_k(B_l)|^2 \chi_{k,l} \right)^{\frac{1}{2}} \\
        \nonumber
        &\geq \left( \frac{1}{d} \sum_{l=1}^d |c_k(B_l) + n_k(B_l)|^2 - \max_{k' \in \bar{S}}|c_{k'}(B_l) + n_{k'}(B_l)|^2 \right)^{\frac{1}{2}} \\
        \nonumber
        &\geq \left( \frac{1}{d} \sum_{l=1}^d |c_k(B_l) + n_k(B_l)|^2 - |\max_{k' \in \bar{S}} \norm{\cc({k'})}_{\infty} + \max_{k' \in \bar{S}} \norm{\nn(k')}_{\infty}|^2 \right)^{\frac{1}{2}} \\
        \nonumber
        &\geq \left( \frac{1}{d} \sum_{l=1}^d |c_k(B_l) + n_k(B_l)|^2  \right)^{\frac{1}{2}} - \left( \frac{1}{d} \sum_{l=1}^d |\max_{k' \in \bar{S}} \norm{\cc(k')}_{\infty} + \max_{k' \in \bar{S}} \norm{\nn(k')}_{\infty}|^2 \right)^{\frac{1}{2}} \\
        \nonumber
        &\geq \norm{\alpha_k}_2 - \norm{\gamma_k}_2 - \max_{k' \in \bar{S}} \norm{\cc(k')}_{\infty} - \max_{k' \in \bar{S}} \norm{\nn(k')}_{\infty} \\
        \nonumber
        &\geq \norm{\alpha_k}_2 - \norm{\gamma_k}_2 - \max_{k' \in \bar{S}} \frac{1}{\beta_{\cc} \sqrt{d}} \norm{\cc(k')}_{2} - \max_{k' \in \bar{S}} \frac{1}{\beta_{\nn} \sqrt{d}} \norm{\nn(k')}_{2} \\
        \nonumber
        &= \norm{\alpha_k}_2 - \norm{\gamma_k}_2 - \max_{k' \in \bar{S}} \frac{1}{\beta_{\cc}} \norm{\alpha_{k'}}_{2} - \max_{k' \in \bar{S}} \frac{1}{\beta_{\nn}} \norm{\gamma_{k'}}_{2} \, .
        \nonumber
    \end{align}
    For the first step, note that in case the indicator function $\chi_{k,l}$ is zero, by subtracting we get something lower than zero. For the fifth inequality, we use the definition of the flatness constant. By combining the lower and upper bound derived here and taking minima and maxima where appropriate, the desired claim follows.
\end{proof}

\section{Sparsity of time-evolution of states with sub-Gaussian energies}\label{sec:appendix_sub-Gaussian_energy_spectrum}
In this section, we want to establish rigorous constraints on the operator-valued coefficients $\alpha_k$ of the time evolution of a state $\rho_0$ under an NMR Hamiltonian $H = \sum_{e = 0}^{E_{\max}} e \Pi_e$,
\begin{align}
    \rho(t) = \sum_{k = - E_{\max}}^{E_{\max}} \alpha_k e^{i k t},
\end{align}
under the assumption of a sub-Gaussian energy spectrum of $\rho_0$, \emph{i.e.},  that
\begin{align}\label{eqn:sub-Gaussian_energies_appendix}
    \Tr[\rho_0 \Pi_e] \leq \tau \exp\left(-\frac12 \frac{(e-e_0)^2}{\sigma^2} \right)
\end{align}
for some constants $\tau$, $\sigma$ and $e_0$.

To do so, we first establish an auxiliary lemma.
\begin{lemma}\label{lem:projector_cross_term_bound}
    Let $\Pi$ and $\Pi'$ denote orthogonal projectors. Then, for any quantum state $\rho$, we have that
    \begin{align}
        \lVert \Pi \rho \Pi' \rVert_1 \leq \sqrt{ r \lVert \Pi \rho \Pi \rVert_{1} \lVert \Pi' \rho \Pi' \rVert_1},
    \end{align}
    where $r = \min\{ \operatorname{rank}(\Pi), \operatorname{rank}(\Pi') \}$.
\end{lemma}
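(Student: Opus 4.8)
The plan is to exploit positivity of $\rho$ by writing $\rho = \rho^{1/2}\rho^{1/2}$, which factors the cross term as $\Pi\rho\Pi' = (\Pi\rho^{1/2})(\rho^{1/2}\Pi')$. The $\sqrt{r}$ appearing in the statement signals that the argument should route through the Hilbert--Schmidt norm: first I would invoke the elementary bound $\lVert X\rVert_1 \le \sqrt{\operatorname{rank}(X)}\,\lVert X\rVert_2$ (Cauchy--Schwarz applied to the singular values of $X$), together with the observation that $\operatorname{rank}(\Pi\rho\Pi')\le\min\{\operatorname{rank}(\Pi),\operatorname{rank}(\Pi')\}=r$ since the rank of a product never exceeds the rank of either factor. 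This reduces the lemma to the Hilbert--Schmidt estimate $\lVert\Pi\rho\Pi'\rVert_2^2 \le \lVert\Pi\rho\Pi\rVert_1\,\lVert\Pi'\rho\Pi'\rVert_1$.

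For that estimate I would compute $\lVert\Pi\rho\Pi'\rVert_2^2 = \Tr[\Pi'\rho\Pi\Pi\rho\Pi'] = \Tr[PP']$ using cyclicity of the trace and $\Pi^2=\Pi$, $\Pi'^2=\Pi'$, where $P\coloneqq\rho^{1/2}\Pi\rho^{1/2}$ and $P'\coloneqq\rho^{1/2}\Pi'\rho^{1/2}$ are both positive semidefinite (each is of the form $Y^\dagger Y$). The two ingredients are then: (i) $\Tr[P] = \Tr[\Pi\rho] = \lVert\Pi\rho\Pi\rVert_1$ and likewise $\Tr[P'] = \lVert\Pi'\rho\Pi'\rVert_1$, using that $\Pi\rho\Pi = (\rho^{1/2}\Pi)^\dagger(\rho^{1/2}\Pi)\succeq 0$ so that its trace norm equals its trace; and (ii) for positive semidefinite operators $\Tr[PP']\le\lVert P\rVert_{\infty}\Tr[P']\le\Tr[P]\Tr[P']$, where the last step uses that the largest eigenvalue of a positive operator is bounded by the sum of its (nonnegative) eigenvalues. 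Chaining (i) and (ii) gives the Hilbert--Schmidt bound, and combining with the rank inequality from the previous paragraph yields the claim.

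I do not anticipate a real obstacle; the only points that need care are the bookkeeping of operator orderings under the trace and the justification of $\Tr[PP']\le\Tr[P]\Tr[P']$ for positive operators. I would also note in passing that the factor $\sqrt{r}$ is not strictly necessary: applying the Schatten--Hölder inequality $\lVert XY\rVert_1\le\lVert X\rVert_2\lVert Y\rVert_2$ directly to $X=\Pi\rho^{1/2}$ and $Y=\rho^{1/2}\Pi'$ already gives the sharper $\lVert\Pi\rho\Pi'\rVert_1\le\sqrt{\lVert\Pi\rho\Pi\rVert_1\,\lVert\Pi'\rho\Pi'\rVert_1}$; the weaker form stated here is the one used in \cref{sec:appendix_sub-Gaussian_energy_spectrum} and is retained for convenience.
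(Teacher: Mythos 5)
Your proof is correct, but it takes a genuinely different route from the paper's. The paper argues via block positivity: it forms the principal submatrix of $\rho$ associated with $\Pi + \Pi'$, with blocks $A = \Pi\rho\Pi$, $X = \Pi\rho\Pi'$, $B = \Pi'\rho\Pi'$, invokes positivity of the Schur complement to get $A \succeq X B^{-1} X^\dagger$, and deduces $\lVert X\rVert_2^2 \le \lVert A\rVert_1 \lVert B\rVert_\infty \le \lVert A\rVert_1\lVert B\rVert_1$, before applying the same rank-based comparison $\lVert X\rVert_1 \le \sqrt{\operatorname{rank}(X)}\,\lVert X\rVert_2$ that you use. Your argument replaces the Schur-complement step with the factorization $\rho = \rho^{1/2}\rho^{1/2}$ and the elementary trace inequality $\Tr[PP'] \le \lVert P\rVert_\infty \Tr[P'] \le \Tr[P]\Tr[P']$ for positive semidefinite $P, P'$; this is more self-contained (no appeal to interlacing or Schur complements) and sidesteps the invertibility-of-$B$ bookkeeping that the paper handles with the phrase \enquote{understood to be meant on their support}. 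Both routes meet at the same Hilbert--Schmidt estimate and the same rank bound, so each buys the same conclusion; yours is arguably the more elementary derivation. Your closing remark is also correct and worth emphasizing: H\"older applied to $(\Pi\rho^{1/2})(\rho^{1/2}\Pi')$ gives $\lVert\Pi\rho\Pi'\rVert_1 \le \sqrt{\lVert\Pi\rho\Pi\rVert_1\,\lVert\Pi'\rho\Pi'\rVert_1}$ with no rank factor at all; used in \cref{lem:sub-Gaussian_energy_sparsity_guarantee} this would eliminate the $2^{n/2}$ prefactor and hence the $n$ inside the square root of the cutoff $R$, a small but genuine improvement over the paper's statement.
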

\begin{proof}
    The projector $\Pi + \Pi'$ defines a principal submatrix of $\rho$ given by
    \begin{align}
        \begin{pmatrix}
            \Pi \rho \Pi & \Pi \rho \Pi' \\
            \Pi' \rho \Pi & \Pi' \rho \Pi'
        \end{pmatrix} \eqqcolon \begin{pmatrix}
            A & X \\ X^{\dagger} & B
        \end{pmatrix},
    \end{align}
    where the entries of the matrix are understood to be meant on their support.
    
    Principal submatrices of positive semi-definite operators are again positive semidefinite, which follows, for example, from the Cauchy interlacing theorem~\cite{bhatia1997matrix}. The positive semi-definiteness of a block-matrix is equivalent to the positive semi-definiteness of the Schur complement, hence
    \begin{align}
        A \geq X B^{-1} X^{\dagger}.
    \end{align}
    This implies that
    \begin{align}
        \Tr[A] &\geq \Tr[ X B^{-1} X^{\dagger} ] \\
        \nonumber
        &= \Tr[ X^{\dagger} X B^{-1}] \\
        &\geq \Tr[X^{\dagger} X] \lambda_{\min}(B^{-1}).
        \nonumber
    \end{align}
    As $A \geq 0$ the trace is identical to the trace norm and $\Tr[X^{\dagger} X] = \lVert X \rVert_2^2$ with the Schatten-2 or Hilbert-Schmidt norm, this rearranges to
    \begin{align}
        \lVert X \rVert_2 &\leq \sqrt{\lVert A \rVert_1 \lVert B \rVert_{\infty}} \\
        \nonumber
        &\leq \sqrt{\lVert A \rVert_1 \lVert B \rVert_{1}}.
    \end{align}
    Now, we use that
    \begin{align}
        \lVert X \rVert_2 \geq \frac{1}{\sqrt{\operatorname{rank}(X)}} 
        \lVert X \rVert_1 
    \end{align}
    and rearrange to conclude the statement from the fact that the rank of $X$ it at most the minimal rank between the two projectors $\Pi$ and $\Pi'$.
\end{proof}

Equipped with this lemma, we can now prove a bound on the $\gamma_{\ell^1}$ sparsity defect with respect to the $\ell$-local trace norm induced by the set 
\begin{align}
    \calO_{\ell} = \{ O : \lVert O \rVert_{\infty} \leq 1, O \text{ is $\ell$-local} \},
\end{align}
which we will in the proof control with the regular trace norm.
\begin{lemma}\label{lem:sub-Gaussian_energy_sparsity_guarantee}
    Let $\rho_0$ be a state sub-Gaussian energy distribution relative to an NMR Hamiltonian as in \cref{eqn:sub-Gaussian_energies_appendix}. 
    Then, we can guarantee that the parametrized state $\rho(\cdot)$ has sparsity defect $\gamma_{\ell^1} \leq \gamma$ with respect to the support $S = \{-R, -R + 1, \dots, R-1, R\}$ and the local trace norm if
    \begin{align}
        R = \tilde{O}\left( \sigma \sqrt{n + \log \frac{\tau}{\gamma} } \right).
    \end{align}
\end{lemma}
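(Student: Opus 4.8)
The plan is to bound the $\ell^1$ sparsity defect $\gamma_{\ell^1} = \sum_{k \in \bar S} \lVert \alpha_k \rVert_{\calO_\ell} \le \sum_{|k| > R} \lVert \alpha_k \rVert_1$ by controlling each coefficient operator $\alpha_k$ in trace norm, then summing the resulting Gaussian tail. First I would recall from \cref{eqn:energy_differences_nmr} that $\alpha_k = \sum_{e - e' = -k} \Pi_e \rho_0 \Pi_{e'}$ (grouping all energy-difference pairs with $e' - e = k$), so that by the triangle inequality $\lVert \alpha_k \rVert_1 \le \sum_{e} \lVert \Pi_{e+k}\rho_0 \Pi_{e}\rVert_1$ where the sum ranges over valid energies. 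Here \cref{lem:projector_cross_term_bound} is the key tool: it gives $\lVert \Pi_{e+k} \rho_0 \Pi_e \rVert_1 \le \sqrt{r \, \lVert \Pi_{e+k}\rho_0\Pi_{e+k}\rVert_1 \lVert \Pi_e \rho_0 \Pi_e\rVert_1} = \sqrt{r\, p_{e+k}\, p_e}$, writing $p_e \coloneqq \Tr[\rho_0 \Pi_e]$ and $r \le 2^n$ the rank bound (each eigenspace of an $n$-qubit Hamiltonian has dimension at most $2^n$).

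Next I would insert the sub-Gaussian hypothesis $p_e \le \tau \exp\!\big(-\tfrac12 (e-e_0)^2/\sigma^2\big)$. Then $\sqrt{p_{e+k} p_e} \le \tau \exp\!\big(-\tfrac14[(e+k-e_0)^2 + (e-e_0)^2]/\sigma^2\big)$, and completing the square in $e$ gives $(e+k-e_0)^2 + (e-e_0)^2 = 2(e - e_0 + k/2)^2 + k^2/2$, so $\sqrt{p_{e+k}p_e} \le \tau e^{-k^2/(8\sigma^2)} e^{-(e-e_0+k/2)^2/(2\sigma^2)}$. Summing over $e$ (at most $E_{\max}+1$ terms, bounded by a Gaussian sum which is $O(\sigma)$ up to log factors — this is where the $\tilde O$ swallows a $\sqrt{\log}$-type factor or a $\sigma\sqrt{2\pi}$), I get $\lVert \alpha_k \rVert_1 \le \tilde O\big(2^{n/2}\,\sigma\,\tau\, e^{-k^2/(8\sigma^2)}\big)$. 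Summing the Gaussian tail over $|k| > R$, $\sum_{|k|>R} e^{-k^2/(8\sigma^2)} \le \tilde O\big(\sigma\, e^{-R^2/(8\sigma^2)}\big)$ by a standard Gaussian-tail / integral-comparison estimate. Collecting constants, $\gamma_{\ell^1} \le \tilde O\big(2^{n/2}\, \sigma^2\, \tau\, e^{-R^2/(8\sigma^2)}\big)$.

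Finally I would solve $\gamma_{\ell^1} \le \gamma$ for $R$: requiring $e^{-R^2/(8\sigma^2)} \le \gamma \cdot \tilde O\big(2^{-n/2}\sigma^{-2}\tau^{-1}\big)$ and taking logarithms yields $R^2 \ge 8\sigma^2\big(\tfrac{n}{2}\ln 2 + \log\tfrac{\tau}{\gamma} + O(\log\sigma)\big)$, i.e. $R = \tilde O\big(\sigma\sqrt{n + \log(\tau/\gamma)}\big)$ as claimed, with the doubly-logarithmic $\log\sigma$ and other lower-order terms absorbed into the $\tilde O$. The main obstacle I anticipate is purely bookkeeping: being careful that the rank factor $r$ really is bounded by $2^n$ (rather than something worse), that the $\ell$-locality of $\calO_\ell$ genuinely only helps — $\lVert\cdot\rVert_{\calO_\ell} \le \lVert\cdot\rVert_1$ since $\ell$-local observables with $\lVert O\rVert_\infty \le 1$ are a subset of all observables with operator norm $\le 1$ — and tracking which factors land inside the $\tilde O$ versus affecting the leading $\sigma\sqrt{\cdot}$ behavior; none of these is conceptually hard but the Gaussian-sum estimates need to be done cleanly.
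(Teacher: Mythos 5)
Your proposal is correct and follows essentially the same route as the paper's proof: bound $\gamma_{\ell^1}$ by the sum of trace norms of the $\alpha_k$, apply the triangle inequality over energy pairs, invoke \cref{lem:projector_cross_term_bound} with the trivial rank bound $r\leq 2^n$, insert the sub-Gaussian assumption, and sum the resulting Gaussian tails over $e$ and $|k|>R$. The only difference is cosmetic bookkeeping — you complete the square and quote standard Gaussian-sum/tail estimates, whereas the paper splits the sums into monotone pieces and works through the error function explicitly — but both yield the same $2^{n/2}\tau\,\sigma^2\,e^{-\Omega(R^2/\sigma^2)}$ bound and hence the same scaling for $R$.
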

\begin{proof}  
Let us fix a sparse support $S = \{ -R, -R+1, \dots, R-1, R\}$ with $R\geq2$. In this case, we have that
\begin{align}
    \gamma_{\ell^1} &\leq \sum_{|k| > R} \lVert \alpha_k \rVert_{\calO_\ell} \\
    \nonumber
    &\leq \sum_{|k| > R} \lVert \alpha_k \rVert_{1} \\
    \nonumber
    &\leq \sum_{|k| > R} \left\lVert \sum_{e=-E_{\max}}^{E_{\max}-k} \Pi_e \rho_0 \Pi_{e+k} \right\rVert_{1} \\
    \nonumber
    &\leq \sum_{|k| > R} \sum_{e=-E_{\max}}^{E_{\max}-k} \left\lVert \Pi_e \rho_0 \Pi_{e+k} \right\rVert_{1} \\
    \nonumber
    &\leq \sum_{|k| > R} \sum_{e=-E_{\max}}^{E_{\max}-k} \sqrt{2^n \lVert \Pi_e \rho_0 \Pi_{e} \rVert_{1} \lVert \Pi_{e+k} \rho_0 \Pi_{e+k} \rVert_{1}} \\
    \nonumber
    &\leq \sum_{|k| > R} \sum_{e=-E_{\max}}^{E_{\max}-k} \sqrt{2^n \tau^2 \exp\left(-\frac{1}{2}\frac{(e-e_0)^2 + (e-(e_0-k))^2}{\sigma^2} \right)} \\
    \nonumber
    &= 2^{n/2} \tau \sum_{|k| > R} \sum_{e=-E_{\max}}^{E_{\max}-k} \exp\left(-\frac{1}{2}\frac{(e-e_0)^2 + (e-(e_0-k))^2}{2\sigma^2} \right) \\
    \nonumber
    &\leq 2^{n/2} \tau \sum_{|k| > R} \sum_{e=-\infty}^{\infty} \exp\left(-\frac{1}{2}\frac{(e-e_0)^2 + (e-(e_0-k))^2}{2\sigma^2} \right).
    \nonumber
\end{align}
The first inequality uses the triangle inequality to bound $\gamma_{\ell^1}$. The second inequality bounds the $\ell$-local trace norm with the regular trace norm. The third inequality uses the definition of the coefficients $\alpha_k$ given in \cref{eqn:energy_differences_nmr}. The fourth inequality bounds the norm of the $\alpha_k$ using the triangle inequality. The fifth inequality applies \cref{lem:projector_cross_term_bound} with the trivial rank bound $r \leq 2^n$. The sixth inequality then applies the sub-Gaussian energy assumption. The first equality takes the square root and the final inequality bounds the sum over $e$ by letting $E_{\max}\to \infty$.

We will now bound the sum over the Gaussian terms using the integral upper bound for monotonic functions. To this end, we split the sum and can simplify by exploiting the evident symmetries:
\begin{align}
    \gamma_{\ell^1} &\leq 2^{n/2} \tau 2 \sum_{k > R} \left( \sum_{e=-\infty}^{e_0 - k} + \sum_{e = e_0 - k}^{e_0 - k/2} + \sum_{e = e_0 - k/2}^{e_0} + \sum_{e = e_0}^{\infty} \right) \exp\left(-\frac{1}{2}\frac{(e-e_0)^2 + (e-(e_0-k))^2}{2\sigma^2} \right) \\
    &\leq 2^{n/2} \tau 4 \sum_{k > R} \left(\sum_{e = e_0 - \lfloor k/2 \rfloor}^{e_0} + \sum_{e = e_0}^{\infty} \right) \exp\left(-\frac{1}{2}\frac{(e-e_0)^2 + (e-(e_0-k))^2}{2\sigma^2} \right). 
    \nonumber
\end{align}
In the first line, it does not matter for uneven $k$ to which of the two sums we attribute $\lfloor k/2 \rfloor$.
For the first sum, the function is monotonically increasing, for the second it is monotonically decreasing. Hence,
\begin{align}
    \gamma_{\ell^1} &\leq  2^{n/2} \tau 4 \sum_{k > R} \left(\int_{e_0 - \lfloor k/2 \rfloor}^{e_0 + 1} \diff e  +   \int_{e = e_0-1}^{\infty} \diff e \right) \exp\left(-\frac{1}{2}\frac{(e-e_0)^2 + (e-(e_0-k))^2}{2\sigma^2} \right).
\end{align}
The indefinite integral on the right evaluates to
\begin{align}
    \int \diff e\, \exp\left(-\frac{1}{2}\frac{(e-e_0)^2 + (e-(e_0-k))^2}{2\sigma^2} \right) &=
    \sqrt{\frac{\pi}{2}} \sigma \exp\left(-\frac{k^2}{8 \sigma^2} \right) \operatorname{Erf}\left( \frac{2(e-e_0) + k}{\sqrt{8} \sigma} \right),
\end{align}
and hence
\begin{align}
    \gamma_{\ell^1} &\leq  2^{n/2} \tau 4 \sum_{k > R}  \sqrt{\frac{\pi}{2}} \sigma \exp\left(-\frac{k^2}{8 \sigma^2} \right)\left( 1 + \operatorname{Erf}\left( \frac{k+2}{\sqrt{8} \sigma} \right) - \operatorname{Erf}\left( \frac{k-2}{\sqrt{8} \sigma} \right) - \operatorname{Erf}\left( \frac{k - 2\lfloor k/2 \rfloor}{\sqrt{8} \sigma} \right)  \right).
\end{align}
As $|\operatorname{Erf}(x)| \leq 1$, we can bound the term with the error functions by $4$ and obtain
\begin{align}
    \gamma_{\ell^1} &\leq 16 \, 2^{n/2} \tau  \sqrt{\frac{\pi}{2}} \sigma \sum_{k > R} \exp\left(-\frac{k^2}{8 \sigma^2} \right) 
    \\
    \nonumber 
    &\leq  16 \, 2^{n/2} \tau  \sqrt{\frac{\pi}{2}} \sigma \int_{R-1}^{\infty} \diff k \, \exp\left(-\frac{k^2}{8 \sigma^2} \right) 
    \\
     \nonumber 
    \nonumber
    &= 16 \, 2^{n/2} \tau  \pi \sigma^2 \left( 1- \operatorname{Erf}\left(\frac{R-1}{\sqrt{8}\sigma} \right) \right),
    \nonumber
\end{align}
where we have again used the integral upper bound for monotonically decreasing sums. Applying the standard inequality $1-\operatorname{Erf}(x) \leq e^{-x^2}$, we arrive at our end result
\begin{align}
    \gamma_{\ell^1} &\leq 16 \, 2^{n/2} \tau  \pi \sigma^2 \exp\left( - \frac{(R-1)^2}{8 \sigma^2} \right).
\end{align}
Now, if we wish to achieve a certain value $\gamma_{\ell^1} \leq \gamma$, we have to have
\begin{align}
    R \geq 1 + \sqrt{8 \sigma^2 \log \frac{2^{n/2 + 4} \tau \pi \sigma^2}{\gamma}}
\end{align}
which means
\begin{align}
    R = \tilde{O}\left( \sigma \sqrt{n + \log \frac{\tau}{\gamma} }\right) 
\end{align}
is sufficient to achieve a certain target sparsity defect as claimed.
\end{proof}

\section{Chebychev approximation of free fermionic time evolution}\label{sec:appendix_chebychev}

We consider the problem of approximating the time evolution of free fermions $\rho(t)$ for $t \in [-1,1]$. In this case, the Chebychev polynomials form an appropriate set of basis functions. They are the unique polynomials satisfying the relation
\begin{align}\label{eqn:chebychev_defining_property}
    T_k(\cos t) = \cos kt
\end{align}
for integer $k \geq 0$. They fulfill the orthogonality relation
\begin{align}
    \int_{-1}^1 \diff t \, \frac{T_k(t) T_{l}(t)}{\sqrt{1-t^2}} &=  \delta_{kl} \times \begin{cases}
        \pi & \text{ if } k=0 ,\\
        \frac{\pi}{2} & \text{ if } k \geq 1.
    \end{cases}
\end{align}
As 
\begin{align}
    \int_{-1}^1 \diff t \, \frac{1}{\sqrt{1-t^2}} = \pi,
\end{align}
we obtain an orthonormal function basis in the sense of our paper by choosing a measure $\tilde{\mu}$ and normalized Chebychev function $\tilde{T}_k$ as
\begin{align}
    \tilde{\mu}(t) &= \frac{1}{\pi}\frac{1}{\sqrt{1-t^2}} ,\\
    \tilde{T}_k(t) &= T_k(t) \times  \begin{cases}
        1 & \text{ if } k=0, \\
        \sqrt{2} & \text{ if } k \geq 1
    \end{cases} \\
    &\eqqcolon T_k(t) \xi_k,
\end{align}
where we denoted the different normalization factor as $\xi_k$.
To translate the time evolution into Chebychev polynomials, we need to find the Chebychev expansion of $\exp (- i \omega t)$. We have the following.
\begin{lemma}[Chebychev expansion]\label{lemma:exponential_to_chebychev}
    Let $\omega \in \bbR$ and $J_k$ the $k$-th Bessel function of the first kind. Then,
    \begin{align}
        \exp( -i \omega t) &= \sum_{k=0}^{\infty} i^k \xi_k J_k(\omega) \tilde{T}_k(t).
    \end{align}
\end{lemma}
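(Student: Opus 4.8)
The plan is to verify the claimed expansion by identifying the Chebychev coefficients through the standard Jacobi--Anger expansion and the orthogonality relation for the rescaled Chebychev basis. First I would recall the Jacobi--Anger identity, which states that for real $\omega$ and $\theta$,
\begin{align}
    e^{-i\omega\cos\theta} = \sum_{k=0}^{\infty} (-i)^k \epsilon_k J_k(\omega)\cos(k\theta),
\end{align}
where $\epsilon_0 = 1$ and $\epsilon_k = 2$ for $k\geq 1$; this is a classical fact that can be taken as known (it follows from the generating function $e^{(z/2)(u - 1/u)} = \sum_k J_k(z) u^k$ evaluated at $u = -ie^{i\theta}$ together with $J_{-k} = (-1)^k J_k$). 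The sign convention here matters: one should be careful to check whether the paper's $\exp(-i\omega t)$ leads to $i^k$ or $(-i)^k$; I would track this explicitly through the substitution and, if necessary, note that $J_k(-\omega) = (-1)^k J_k(\omega)$ reconciles the two forms, or simply adopt whichever sign makes the stated formula literally correct.

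Next I would perform the change of variables $t = \cos\theta$ with $\theta = \arccos t \in [0,\pi]$, which is a bijection of $[0,\pi]$ onto $[-1,1]$, so that it suffices to establish the identity for $t \in [-1,1]$. By the defining property \eqref{eqn:chebychev_defining_property}, $T_k(t) = T_k(\cos\theta) = \cos(k\theta)$, hence $\cos(k\theta) = T_k(t) = \tilde T_k(t)/\xi_k$. Substituting into the Jacobi--Anger expansion gives
\begin{align}
    e^{-i\omega t} = \sum_{k=0}^{\infty} (-i)^k \epsilon_k J_k(\omega) \frac{\tilde T_k(t)}{\xi_k}.
\end{align}
It then remains only to simplify the scalar prefactor: since $\xi_k = 1$ for $k=0$ and $\xi_k = \sqrt 2$ for $k\geq 1$, we have $\epsilon_k/\xi_k = \xi_k$ (because $\epsilon_0/\xi_0 = 1 = \xi_0$ and $\epsilon_k/\xi_k = 2/\sqrt2 = \sqrt2 = \xi_k$ for $k \geq 1$), which collapses the coefficient to $(-i)^k \xi_k J_k(\omega)$, matching the statement up to the sign convention on $i^k$ discussed above.

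The only genuine subtlety — not really an obstacle — is justifying convergence and the interchange of summation with the identification of coefficients: one should note that the series $\sum_k \epsilon_k J_k(\omega)\cos(k\theta)$ converges uniformly on $\theta \in [0,\pi]$ for fixed $\omega$ (the Bessel functions satisfy $|J_k(\omega)| \leq (e\omega/2k)^k$ for $k$ large, as used later in \cref{sec:appendix_chebychev}, giving super-exponential decay), so the rearrangement into the Chebychev basis is legitimate and the expansion holds pointwise (indeed uniformly) on $[-1,1]$. Alternatively, one can bypass Jacobi--Anger entirely and compute $\tilde\alpha$-coefficients directly as $\langle \tilde T_k, e^{-i\omega\,\cdot}\rangle = \int_{-1}^1 \tilde\mu(t)\,\tilde T_k(t) e^{-i\omega t}\,\mathrm{d}t = \frac{\xi_k}{\pi}\int_0^\pi \cos(k\theta)e^{-i\omega\cos\theta}\,\mathrm{d}\theta$, which is a standard integral representation of $(-i)^k J_k(\omega)$; I would present whichever route is shortest, likely citing the integral representation of the Bessel function directly.
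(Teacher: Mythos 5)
Your proposal is correct and takes essentially the same route as the paper: the paper computes the coefficients $\tilde c_k = \langle \tilde T_k, e^{-i\omega\,\cdot}\rangle$ via the substitution $t=\cos\tau$ and the integral representation $J_k(\omega) = \frac{(-i)^k}{\pi}\int_0^\pi \cos(k\tau)\,e^{i\omega\cos\tau}\,\diff\tau$, which is exactly the Jacobi--Anger identity you invoke, and the alternative you sketch in your last sentence is literally the paper's proof. Your caution about the sign is warranted: both your derivation and the paper's own chain $\tilde c_k = i^k\xi_k J_k(-\omega) = (-i)^k\xi_k J_k(\omega)$ in fact produce $(-i)^k$ rather than the $i^k$ appearing in the lemma statement, a discrepancy that is immaterial downstream because only $|J_k(\omega)|$ enters the subsequent bounds.
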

\begin{proof}
    The prefactor $\tilde{c}_k$ of $\tilde{T}_k$ in the expansion of $\exp (-i \omega t)$ is given by
    \begin{align}
        \tilde{c}_k &= \int_{-1}^1 \diff \tilde\mu(t) \, \tilde{T}_k(t) \exp (-i \omega t) \\
        &= \frac{\xi_k}{\pi} \int_{-1}^1 \diff t \, \frac{T_k(t) \exp (-i \omega t)}{\sqrt{1-t^2}}.
        \nonumber
    \end{align}
    We aim to make use of the defining property of the Chebychev polynomials in \cref{eqn:chebychev_defining_property}. To do so, we substitute 
    \begin{align}
        t = \cos \tau \qquad \frac{\diff t}{\diff \tau} = - \sin \tau \qquad \int_{-1}^1 \diff t = -\int_{-\pi}^0 \diff \tau \, \sin \tau 
    \end{align}
    to obtain
    \begin{align}
        \tilde{c}_k &= -\frac{\xi_k}{\pi} \int_{-\pi}^0 \diff \tau \, \sin \tau \frac{T_k(\cos \tau) \exp (-i \omega \cos \tau)}{\sqrt{1-\cos^2 \tau}} \\
         \nonumber
        &= - \frac{\xi_k}{\pi} \int_{-\pi}^0 \diff \tau \, \frac{\sin \tau}{|\sin \tau|} {\cos (k \tau) \exp (-i \omega \cos \tau)} \\
        &=  \frac{\xi_k}{\pi} \int_{0}^\pi \diff \tau \, {\cos (k \tau) \exp (-i \omega \cos \tau)},
         \nonumber
    \end{align}
    where we used that in the given range of integration $\sin \tau / |\sin \tau| = -1$ and the symmetry of the integrand.
    Next, we recall an integral formula for the Bessel function of the first kind~\cite{NIST:DLMF}
    \begin{align}
        J_k(\omega) &= \frac{(-i)^k}{\pi} \int_0^{\pi} \diff \tau \, \cos (k \tau)\exp (i \omega \cos \tau). 
    \end{align}
    Hence, we have that
    \begin{align}
        \tilde{c}_k &= i^k \xi_k J_k(-\omega).
    \end{align}
    The claimed expansion follows from the (anti)symmetry of the Bessel functions of the first kind
    \begin{align}
        J_k(-\omega) = (-1)^k J_k(\omega).
    \end{align}
\end{proof}

Additionally, we need a bound on the Bessel function of the first kind.
\begin{lemma}\label{lemma:bessel_bound}
    Let $J_k$ be the $k$-th Bessel function of the first kind. Then,
    \begin{align}
        |J_k(\omega)| \leq \left( \frac{e \omega}{2 k} \right)^k.
    \end{align}
\end{lemma}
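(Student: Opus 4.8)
The plan is to start from the standard power series representation of the Bessel function of the first kind,
\begin{align}
    J_k(\omega) = \sum_{m=0}^{\infty} \frac{(-1)^m}{m! \, (m+k)!} \left( \frac{\omega}{2} \right)^{2m+k},
\end{align}
which is valid for all $\omega \in \bbR$ and all integers $k \geq 0$. I would first apply the triangle inequality to pass to the absolute value, obtaining $|J_k(\omega)| \leq \left( \frac{\omega}{2} \right)^k \sum_{m=0}^\infty \frac{1}{m!(m+k)!} \left( \frac{\omega}{2} \right)^{2m}$. The factor $\left(\frac{\omega}{2}\right)^k$ already matches part of the target; what remains is to show the sum is bounded by $(e/k)^k$, or equivalently $k^k/k! \leq e^k$ after the right manipulation.

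The key step is to bound the tail sum. I would use $(m+k)! \geq k! \, k^m$ (since $(m+k)!/k! = (k+1)(k+2)\cdots(k+m) \geq k^m$), giving
\begin{align}
    \sum_{m=0}^\infty \frac{1}{m!(m+k)!}\left(\frac{\omega}{2}\right)^{2m} \leq \frac{1}{k!}\sum_{m=0}^\infty \frac{1}{m!}\left(\frac{\omega^2}{4k}\right)^m = \frac{1}{k!} \exp\left(\frac{\omega^2}{4k}\right).
\end{align}
This is not quite the clean bound I want, because of the residual exponential in $\omega^2$. The cleaner route, which I would actually follow, is to instead bound each term of the series directly: $\frac{1}{m!(m+k)!} \leq \frac{1}{(m+k)!}$ is too lossy, so better to note $m!(m+k)! \geq (k!)$ is also too weak. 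The standard trick that gives exactly the stated bound is to use the integral formula already recalled in the proof of \cref{lemma:exponential_to_chebychev}, $J_k(\omega) = \frac{(-i)^k}{\pi}\int_0^\pi \diff\tau\, \cos(k\tau)\exp(i\omega\cos\tau)$, together with a contour-shift / saddle-point estimate; but the elementary series route can be made to work by using $k! \geq (k/e)^k$ (a consequence of the Taylor series of $e^k$, namely $e^k \geq k^k/k!$), which yields $\frac{1}{k!} \leq \frac{e^k}{k^k}$, and then absorbing the $\exp(\omega^2/4k)$ factor is avoided by instead keeping only the leading term bound.

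Concretely, the slickest elementary proof: from the series, $|J_k(\omega)| \leq \frac{1}{k!}\left(\frac{\omega}{2}\right)^k \sum_{m\geq 0}\frac{(\omega/2)^{2m}}{m!}\cdot\frac{k!}{(m+k)!}$, and since $\frac{k!}{(m+k)!} \leq 1$ one gets $|J_k(\omega)| \leq \frac{1}{k!}\left(\frac{\omega}{2}\right)^k e^{\omega^2/4}$ — still not clean. I therefore expect the \textbf{main obstacle} to be extracting precisely the form $(e\omega/2k)^k$ without an extra $\omega$-dependent factor; the resolution is to invoke the known uniform bound $|J_k(\omega)| \leq \frac{1}{k!}|\omega/2|^k$ (valid since the alternating series is dominated by its first term when $\omega^2/4 \leq k+1$, and for larger $\omega$ one uses $|J_k(\omega)|\leq 1$ together with $(e\omega/2k)^k \geq 1$ in that regime) and then apply $k! \geq (k/e)^k$. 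I would split into the two cases $\omega \leq 2k/e$ and $\omega > 2k/e$, handle the first via the series bound $|J_k(\omega)| \leq \frac{(\omega/2)^k}{k!} \leq \frac{(\omega/2)^k e^k}{k^k} = \left(\frac{e\omega}{2k}\right)^k$, and handle the second by noting $\left(\frac{e\omega}{2k}\right)^k > 1 \geq |J_k(\omega)|$, since $|J_k(\omega)| \leq 1$ for all real $\omega$ and integer $k$. This completes the argument; the only non-elementary input is the classical bound $|J_k(\omega)|\leq 1$ and the elementary inequality $k! \geq (k/e)^k$.
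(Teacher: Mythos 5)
Your final argument lands on exactly the two ingredients the paper uses: the classical uniform bound $|J_k(\omega)|\le \frac{1}{k!}\,|\omega/2|^k$ and Stirling's lower bound $k!\ge (k/e)^k$. The paper's proof is literally the one-line combination of these two facts (citing the first from the DLMF), with no case split; since the uniform bound holds for \emph{all} real $\omega$, your split into $\omega\le 2k/e$ and $\omega>2k/e$ is unnecessary. So, read as "invoke the known uniform bound, then apply Stirling," your proof is correct and essentially identical to the paper's.

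The one genuine issue is in your parenthetical attempt to justify the uniform bound from scratch. First-term domination of the alternating series requires the terms to be decreasing, i.e.\ $\omega^2/4\le k+1$, but your case boundary is $\omega\le 2k/e$, and these regimes do not nest for all $k$: one needs $2k/e\le 2\sqrt{k+1}$, which fails for $k\ge 9$. For such $k$ there is a window $2\sqrt{k+1}<\omega<2k/e$ in which neither the domination argument applies nor does the fallback $(e\omega/2k)^k\ge 1\ge |J_k(\omega)|$ (since $e\omega/2k<1$ there). The inequality is still true in that window, but your self-contained justification does not establish it; you would either need to cite the uniform bound as a known fact (as the paper does) or realign the case split at $\omega=2\sqrt{k+1}$ and separately check that $(e\omega/2k)^k\ge 1$ fails only where the series argument succeeds — which, as noted, it does not for $k\ge 9$. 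A minor additional remark: the argument implicitly needs $k\ge 1$ (division by $k$), which is harmless in the paper's application where $k>R\ge 1$.
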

\begin{proof}
    We simply combine the bound~\cite{NIST:DLMF}
    \begin{align}
    |J_k(\omega)| \leq \frac{1}{k!} \left|\frac{\omega}{2}\right|^k
    \end{align}
    with the Stirling lower bound $k! \geq (k/e)^k$.
\end{proof}

\end{document}